\documentclass[11pt]{article}

\usepackage{amssymb,amsmath,amsthm,sectsty,url}
\usepackage[letterpaper,hmargin=1.0in,vmargin=1.0in]{geometry}
\usepackage[pdftex,colorlinks,linkcolor=blue,citecolor=blue,filecolor=blue,urlcolor=blue]{hyperref}
\usepackage{cleveref}
\usepackage{color}
\usepackage[boxed]{algorithm}
\pagestyle{plain}

\usepackage{tikz}
\usepackage{graphicx}
\usepackage{nicefrac}
\usepackage{aliascnt}

\newtheorem{theorem}{Theorem}[section]
\crefname{theorem}{Theorem}{Theorems}

\newaliascnt{lemma}{theorem}
\newtheorem{lemma}[lemma]{Lemma}
\aliascntresetthe{lemma}
\crefname{lemma}{Lemma}{Lemmas}

\newaliascnt{proposition}{theorem}

\aliascntresetthe{proposition}
\crefname{proposition}{Proposition}{Propositions}

\newaliascnt{corollary}{theorem}

\aliascntresetthe{corollary}
\crefname{corollary}{Corollary}{Corollaries}

\newaliascnt{fact}{theorem}

\aliascntresetthe{fact}
\crefname{fact}{Fact}{Facts}

\newaliascnt{definition}{theorem}
\newtheorem{definition}[definition]{Definition}
\aliascntresetthe{definition}
\crefname{definition}{Definition}{Definitions}

\newaliascnt{remark}{theorem}

\aliascntresetthe{remark}
\crefname{remark}{Remark}{Remarks}

\newaliascnt{conjecture}{theorem}

\aliascntresetthe{conjecture}
\crefname{conjecture}{Conjecture}{Conjectures}

\newaliascnt{claim}{theorem}
\newtheorem{claim}[claim]{Claim}
\aliascntresetthe{claim}
\crefname{claim}{Claim}{Claims}

\newaliascnt{question}{theorem}

\aliascntresetthe{question}
\crefname{question}{Question}{Questions}

\newaliascnt{exercise}{theorem}

\aliascntresetthe{exercise}
\crefname{exercise}{Exercise}{Exercises}

\newaliascnt{example}{theorem}

\aliascntresetthe{example}
\crefname{example}{Example}{Examples}

\newaliascnt{notation}{theorem}

\aliascntresetthe{notation}
\crefname{notation}{Notation}{Notations}

\newaliascnt{problem}{theorem}

\aliascntresetthe{problem}
\crefname{problem}{Problem}{Problems}

 
\newcommand{\norm}[1]{\lVert#1\rVert}

\def\E{\mathbb E}



\newcommand{\R}{\mathbb R}

\newcommand{\B}{\{ 0,1 \}}



\newcommand{\dist}{\mathrm{d}}

\usepackage{booktabs}
\usepackage{multirow}
\newcommand{\ra}[1]{\renewcommand{\arraystretch}{#1}}

\title{Optimal (Euclidean) Metric Compression\thanks{A preliminary version~\cite{indyk2017near}, titled~\textit{``Near-Optimal (Euclidean) Metric Compression"}, appeared in Proc.~28th Annual ACM-SIAM Symposium on Discrete Algorithms (SODA 2017). The present version improves the main result to a tight bound.}}
\author{
  Piotr Indyk\thanks{\texttt{indyk@mit.edu}} \\
  MIT \\
  \and
  Tal Wagner\thanks{\texttt{tal.wagner@gmail.com}. Work done while at MIT.} \\
  Microsoft Research \\
}

\begin{document}
\maketitle

\begin{abstract}
We study the problem of representing all distances between $n$ points in $\R^d$, with arbitrarily small distortion, using as few bits as possible. We give asymptotically tight bounds for this problem, for Euclidean metrics, for $\ell_1$ (a.k.a.~Manhattan) metrics, and for general metrics. 

Our bounds for Euclidean metrics mark the first improvement over compression schemes based on discretizing the classical dimensionality reduction theorem of Johnson and Lindenstrauss (Contemp.~Math.~1984). 
Since it is known that no better dimension reduction is possible, our results establish that Euclidean metric compression is possible beyond dimension reduction. 
%
\end{abstract}
\section{Introduction}
Contemporary datasets are most often represented as points in a high-dimensional space. Many algorithms are based on the distances induced between those points. Thus, distance computation has emerged as a fundamental scalability bottleneck in many large-scale applications, and spurred a large body of research on efficient approximate algorithms. In particular, a typical goal is to design efficient data structures that, after preprocessing a given set of points, can report approximate distances between those points.

An important complexity measure of these data structures is the space they occupy. 
Small space usage enables storing more points in the main memory for faster access~\cite{jegou2010product}, exploiting fast memory-limited devices like GPUs~\cite{johnson2017billion}, and facilitating distributed architectures where communication is limited~\cite{colemansub}, among other benefits. 
Indeed, a long line of applied research (e.g., \cite{salakhutdinov2009semantic,weiss2009spectral,jegou2010product,johnson2017billion,SablayrollesDSJ19}, see also Section~\ref{sec:compressionrelatedwork}) has been able to perform tasks like image classification in unprecedented scales, by designing distance-preserving space-efficient bit encodings of high-dimensional points.

These methods, while empirically successful, are heuristic in nature and do not possess worst-case guarantees on their accuracy. From a theoretical point of view, the problem can be formalized as follows: \emph{What is the minimal amount of space required to represent all distances between the given data points, up to a given relative error?} 
In the notable case of Euclidean distances, a fundamental compression result is the dimension reduction theorem of Johnson and Lindenstrauss~\cite{johnson1984extensions}, which has been refined to a space-efficient bit encoding (often called a~\emph{sketch}) in a sequence of well-known follow-up works~\cite{KOR98,achlioptas2003,alon1999space,charikar2002finding}. 
However, despite these prominent results, it was not known whether these bounds are tight for compression of Euclidean metrics. In this work, we close this gap and obtain improved and tight sketching bounds for Euclidean metrics, as well as for $\ell_1$ metrics and general metric spaces.

\subsection{Problem definition}
The metric sketching problem is defined as follows:
\begin{definition}[metric sketching]\label{def:metricsketching}
Let $1\leq p\leq\infty$ and $0<\epsilon<1$.
In the~\emph{$\ell_p$-metric sketching problem}, we are given a set of $n$ points $x_1,\ldots,x_n\in\R^d$ with $\ell_p$-distances in the range $[1,\Phi]$.
We need to design a pair of algorithms:
\begin{itemize}
  \item Sketching algorithm: given $x_1,\ldots,x_n$, it computes a bitstring called a~\textbf{sketch}.
  \item Estimation algorithm: given the sketch, it can report for every $i,j\in[n]$ a distance estimate $\widetilde E_{ij}$ such that
  \[ (1-\epsilon)\norm{x_i-x_j}_p \leq \widetilde E_{ij} \leq (1+\epsilon)\norm{x_i-x_j}_p .\]
\end{itemize}
The goal is to minimize the bit length of the sketch.
\end{definition}
Put simply, the goal is to represent all distances between $x_1,\ldots,x_n$, up to distortion $1\pm\epsilon$, using as few bits as possible. The sketching algorithm can be randomized. In that case, we require that with probability $1-1/\mathrm{poly}(n)$ it returns a sketch such that the requirement of the estimation algorithm is satisfied for all pairs $i,j\in[n]$ simultaneously. The estimation algorithm is generally deterministic. 

We remark that the assumption on the distances being in $[1,\Phi]$ is essentially without loss of generality, by scaling. If $\min_{i\neq j}\norm{x_i-x_j}=M$, we can store in the sketch a $2$-approximation $M'$ of $M$ and scale all distances down by $M'$. This increases the total sketch size additively by $O(\log\log M)$ bits. Then, in all bounds below, $\Phi$ becomes the~\emph{aspect ratio}, which is the ratio of largest to smallest distance in the given point set.

\paragraph{Euclidean metrics.}
The most notable case is Euclidean metrics, or $p=2$. For this case, the celebrated Johnson-Lindenstrauss (JL) dimensionality reduction theorem~\cite{johnson1984extensions} enables reducing the dimension of the input point set to $d'=O(\epsilon^{-2}\log n)$. By the recent result of Larsen and Nelson~\cite{LarsenN16}, this bound is tight. The JL theorem leads to a sketch of size $O(\epsilon^{-2}\log n)$~\textbf{\emph{machine words}} per point. The~\textbf{\emph{bit size}} of the sketch generally depends on the numerical range of distances, encompassed by the parameter $\Phi$ (a typical setting to consider below is $\Phi=n^{O(1)}$).\footnote{We remark that na\"ively rounding each coordinate of the dimension-reduced points to its nearest power of $(1+\epsilon)$ does not yield a valid sketch. For example, consider two coordinates with values $t$ and $t+1$, where $t=(1+\epsilon)^i$ for some integer $i$. The squared difference between them is $1$, whereas after rounding it becomes $0$, and the distortion is unbounded.}

For example, if the coordinates of the input points are integers in the range $[-\Phi,\Phi]$ (note that in this case the diameter is $O(\sqrt{d}\Phi)$), then the discretized variant of~\cite{johnson1984extensions} due to Achlioptas~\cite{achlioptas2003}, and related algorithms like AMS sketch~\cite{alon1999space} and CountSketch~\cite{charikar2002finding,thorup2012tabulation}, yield a sketch size of $O(\epsilon^{-2}\log(n)\log(d\Phi))$ bits per point. More generally, for any point set with diameter $\Phi$ (regardless of coordinate representation), the distance sketches of Kushilevitz, Ostrovsky and Rabani~\cite{KOR98} yield a sketch size of $O(\epsilon^{-2}\log (n)\log\Phi)$ bits per point. 
Perhaps surprisingly, prior to our work, it was not known whether this ``discretized JL'' upper bound is tight for metric sketching, or can be improved much further. 
We show it is indeed not tight, by proving an improved and optimal bound of $O(\epsilon^{-2}\log n + \log\log\Phi)$ amortized bits per points. 
Our result thus establishes that sketching techniques can go beyond dimension reduction in compressing Euclidean metric spaces.




\paragraph{General metrics.}
The above formulation also captures sketching of~\emph{general metric spaces} --- that is, the input is any metric space $([n],\dist)$ with distances between $1$ and $\Phi$ --- since they embed isometrically into $\ell_\infty$ with dimension $d=n$. Specifically, for every $i\in[n]$, one defines $x_i=(\dist(i,1),\ldots,\dist(i,n))\in\R^n$. It is not hard to see that $\dist(i,j)=\norm{x_i-x_j}_\infty$ for every $i,j\in[n]$. 
General metric sketching has been studied extensively, under the name~\emph{distance oracles}~\cite{thorup2005approximate}, in larger distortion regimes than $1\pm\epsilon$, see Section~\ref{sec:compressionrelatedwork}. 
We provide tight bounds for distortion $1\pm\epsilon$.

\subsection{Our results}
We resolve the optimal sketching size with distortion $1\pm\epsilon$ for several important classes of metrics: Euclidean metrics, $\ell_1$ (a.k.a.~Manhattan) metrics, and general metrics. We start with our main results for Euclidean metrics.

\begin{theorem}[Euclidean metric compression]\label{thm:eucmain}
For $\ell_2$-metric sketching with $n$ points (of arbitrary dimension) and distances in $[1,\Phi]$, $O(\epsilon^{-2}n\log n+ n\log\log \Phi)$ bits are sufficient. If the input dimension is $\Omega(\epsilon^{-2}\log n)$, then $\Omega(\epsilon^{-2}n\log n+ n\log\log \Phi)$ bits are also necessary. 

The sketching algorithm is randomized and has running time $O(n^{1+\alpha}\log\Phi+nd\log d + \epsilon^{-2}n\cdot\min\{d\log n,\log^3n\})$, where $d$ is the ambient dimension of the input metric, and $\alpha>0$ is an arbitrarily small constant.\footnote{As $\alpha\rightarrow0$, the sketch size increases as $O(\alpha^{-1}\log(\alpha^{-1})\cdot\epsilon^{-2}n\log n+ n\log\log \Phi)$.} 
The estimation algorithm runs in time $O(\epsilon^{-2}\log(n)\log(\epsilon^{-1} \Phi \log n))$.
\end{theorem}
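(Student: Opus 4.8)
The plan is to obtain the upper bound by combining Johnson--Lindenstrauss dimension reduction with a hierarchical clustering of the projected points whose representation is compressed in three carefully separated parts, and to obtain the lower bound by two independent packing arguments, one for each term.

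\emph{Upper bound.} First apply a fast Johnson--Lindenstrauss map, sending the input to $\R^{d'}$ with $d' = O(\epsilon^{-2}\log n)$ and preserving all $\binom n2$ pairwise distances up to $1\pm\epsilon/4$; the map is never stored, so this is free in sketch size. After rescaling so distances lie in $[1,\Phi]$, build a hierarchical net-tree $T$ on the $d'$-dimensional points with $O(\log\Phi)$ levels, where a node at level $\ell$ represents a cluster of diameter $O(2^\ell)$ whose representatives are pairwise $\Omega(2^\ell)$-separated; contracting maximal degree-two paths leaves $O(n)$ branching nodes. The sketch stores: (i) the topology of $T$, in $O(n\log n)$ bits; (ii) for each branching node, the level at which it was created, an integer in $\{0,\dots,O(\log\Phi)\}$, for a total of $O(n\log\log\Phi)$ bits; and (iii) for each node, a quantized version of the displacement of its representative relative to its parent's representative, using $O(\epsilon^{-2}\log n)$ bits. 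To report $\widetilde E_{ij}$, the estimation algorithm finds the lowest common ancestor $a$ of leaves $i,j$: the level $\ell(a)$ already yields a $\Theta(1)$-factor estimate $\Theta(2^{\ell(a)})$, and summing the stored displacements along the two root-to-leaf paths over only the top $O(\log(1/\epsilon))$ levels below $a$ refines this to a $1\pm\epsilon$ estimate, since a level-$(\ell(a)-t)$ displacement has norm $O(2^{\ell(a)-t})$ and is negligible once $t\gtrsim\log(1/\epsilon)$. The sizes in (i)--(iii) add up to the claimed $O(\epsilon^{-2}n\log n + n\log\log\Phi)$.

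\emph{Main obstacle.} The delicate point, and the source of the improvement over the discretized-JL and over the preliminary version's bound, is step (iii): encoding a displacement in $\R^{d'}$ to multiplicative precision $\epsilon$ naively costs $\Omega(\epsilon^{-2}\log n \cdot \log(1/\epsilon))$ bits, a $\log(1/\epsilon)$ factor too many. Removing it requires a more economical randomized quantization --- roughly, projecting each displacement through a fresh low-dimensional Gaussian (or sparse) map and keeping $O(1)$ bits per coordinate --- together with a union bound showing that the accumulated errors, whose variances decay geometrically across the $O(\log(1/\epsilon))$ refinement levels, remain below $\epsilon$ simultaneously for all $\binom n2$ pairs with probability $1-1/\mathrm{poly}(n)$; this error analysis across the tree is the technical heart of the argument. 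For running time, the JL step uses a fast transform in time $O(nd\log d + \epsilon^{-2}n\min\{d\log n,\log^3 n\})$, which also covers computing and quantizing the $O(n)$ displacements; the net-tree is built level by level using approximate nearest-neighbor data structures, $O(n^{1+\alpha})$ per level over $O(\log\Phi)$ levels, giving $O(n^{1+\alpha}\log\Phi)$; and a query costs an LCA lookup plus summing $O(\log(1/\epsilon))$ vectors of dimension $d'$ with entries of magnitude $\mathrm{poly}(\epsilon^{-1}\Phi\log n)$, i.e.\ $O(\epsilon^{-2}\log(n)\log(\epsilon^{-1}\Phi\log n))$.

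\emph{Lower bound.} Since $\max(A,B)\ge (A+B)/2$, it suffices to prove $\Omega(\epsilon^{-2}n\log n)$ and $\Omega(n\log\log\Phi)$ on separate instances of dimension $\Theta(\epsilon^{-2}\log n)$ (pad with zero coordinates if the natural construction uses fewer), and both reduce to exhibiting a family of configurations that are pairwise \emph{distance-distinguishable}: some pair $i,j$ has $d_{ij}/d'_{ij}\notin[1-2\epsilon,1+2\epsilon]$, so the sketch must have at least $\log$ of the family size bits. For $\Omega(n\log\log\Phi)$, place $n$ points on a line in increasing order with consecutive gaps $2^{t_i}$, $t_i\in\{1,\dots,\Theta(\log(\Phi/n))\}$, so that the minimum distance is at least $2$ and the diameter stays below $\Phi$; changing any $t_i$ changes the corresponding consecutive distance by a factor at least $2$, so distinct sequences are distinguishable, yielding $\Omega(n\log\log(\Phi/n)) = \Omega(n\log\log\Phi)$ whenever $\Phi\ge n^2$ (for smaller $\Phi$ this term is absorbed into the first). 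For $\Omega(\epsilon^{-2}n\log n)$, we construct $2^{\Omega(\epsilon^{-2}n\log n)}$ pairwise-distinguishable configurations by letting roughly $\Theta(n)$ of the points each range, essentially independently, over a structured $2^{\Theta(\epsilon^{-2}\log n)}$-size family of local positions whose mutual distances are genuinely spread out; the dimension hypothesis $d=\Omega(\epsilon^{-2}\log n)$ is exactly what provides the room for this, in the spirit of the dimension-reduction lower bounds of Alon and of Larsen--Nelson. The subtlety here --- a secondary obstacle --- is that distance-distinguishability is strictly weaker than the configurations being far apart as point sets, so the local family must be designed so that a change in any one point provably shifts some pairwise distance by a constant factor; this is where the bulk of the lower-bound work lies.
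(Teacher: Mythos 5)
Your high-level plan (JL, then a hierarchical tree storing quantized relative displacements, then counting lower bounds) is the right shape, but two steps in the upper bound would fail as stated. First, the tree you posit cannot exist: you simultaneously assume clusters at level $\ell$ have diameter $O(2^\ell)$ (packing) and that the LCA level of $i,j$ gives a $\Theta(1)$-factor estimate of $\norm{x_i-x_j}$ (separation). Any net-tree-style construction with bounded diameters allows two points at distance $1$ to have their paths merge only near level $\log\Phi$, so truncating your displacement sum $O(\log(1/\epsilon))$ levels below the LCA leaves an error of order $\epsilon\cdot 2^{\ell(a)}$, which is not $\epsilon\cdot\norm{x_i-x_j}$; conversely, a tree that merges all clusters within distance $2^\ell$ has perfect separation but unbounded cluster diameters, and then your ``$O(\epsilon^{-2}\log n)$ bits per displacement'' accounting breaks. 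The paper resolves exactly this tension by using the merge tree (perfect separation), replacing the per-node diameter bound with the amortized bound $\sum_v 2^{-\ell(v)}\Delta(v)\leq 4n$, and choosing \emph{ingresses} (rather than parents) so that each node's displacement length is charged only to its own diameter; none of this machinery, or a substitute for it, appears in your proposal. Second, the ``technical heart'' you defer is not solved by ``$O(1)$ bits per coordinate plus a union bound that accumulated errors stay below $\epsilon$'': with $O(1)$ bits per coordinate each coordinate error is $\Theta(2^{\ell}/\sqrt d)$, so the total error \emph{vector} has norm $\Theta(2^{\ell(a)})\approx\Theta(\norm{x_i-x_j})$, and $\norm{X_i-X_j}^2$ carries an additive bias of order $\norm{x_i-x_j}^2$ coming from the variance term. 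Unbiasedness helps only if you debias: the paper stores \emph{two} independent randomized roundings (shifts $\sigma',\sigma''$), builds probabilistic surrogates along the tree whose expectations telescope exactly, and estimates via $\sqrt{(X_i'-X_j')^T(X_i''-X_j'')}$, which has the correct expectation and concentrates by Hoeffding. Your single-copy ``sum the quantized displacements and take the distance'' yields only constant distortion, and the alternative ``fresh Gaussian projection per displacement'' is not developed (what is stored, how seeds are shared, how errors compose along a path).

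On the lower bound, your $\Omega(n\log\log\Phi)$ argument is essentially the paper's and is fine. But the $\Omega(\epsilon^{-2}n\log n)$ part is an existence claim, not a proof: you yourself identify the design of the local family as ``where the bulk of the lower-bound work lies'' and then omit it. The paper gives a short explicit construction: take the $n$ standard basis vectors $e_j$ together with $n$ vectors $a_i/\sqrt k$ where $a_i\in\B^n$ has exactly $k=\epsilon^{-2}$ ones; then $\norm{a_i/\sqrt k-e_j}_2^2=2-2\epsilon\, a_i(j)$, so a $(1\pm\epsilon/8)$ sketch recovers every bit $a_i(j)$, and counting the $\binom{\binom{n}{k}}{n}$ choices of $A$ gives $\Omega(\epsilon^{-2}n\log n)$; JL then places the instance in dimension $O(\epsilon^{-2}\log n)$ as required. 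Note also that demanding a change of one point shift some distance ``by a constant factor'' is both stronger than needed (a $1+\Theta(\epsilon)$ shift suffices, and is all such constructions achieve) and would shrink the family too much to give the $\epsilon^{-2}$ factor.
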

\Cref{thm:eucmain} improves over the best previous bound of $O(\epsilon^{-2}n\log (n)\log\Phi)$, mentioned above.
It also strengthens the upper bound of~\cite{AlonK16} for sketching with additive error (see Section~\ref{sec:compressionrelatedwork}), and resolves an open problem posed by them.

We note that the sketching algorithm in the above theorem is randomized. This means that with probability $1/\mathrm{poly}(n)$, it may output a sketch that distorts the distances by more than a $(1\pm\epsilon)$ factor. However, this does not affect the sketch size nor the running time.

By known embedding results, both the upper and lower bound in~\Cref{thm:eucmain} in fact holds for $\ell_p$-metrics for every $1\leq p\leq2$, 
including the notable case $\ell_1$. See~\Cref{sec:l1p2}.



We proceed to general metric spaces.

\begin{theorem}[General metric compression]\label{thm:generalmain}
For general metric sketching with $n$ points and distances in $[1,\Phi]$, $\Theta(n^2\log(1/\epsilon) + n\log\log \Phi)$ bits are both sufficient and necessary.
\end{theorem}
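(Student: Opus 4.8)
The bound $\Theta(n^2\log(1/\epsilon)+n\log\log\Phi)$ is, up to constant factors, the larger of $n^2\log(1/\epsilon)$ and $n\log\log\Phi$, so it is enough to prove an $O(\cdot)$ upper bound together with two $\Omega(\cdot)$ lower bounds, one per term. For the upper bound the plan is an incremental encoding: process the points in the given order $x_1,\dots,x_n$, and when $x_{k+1}$ is ``inserted'' record all distances $\dist(x_i,x_{k+1})$, $i\le k$, relative to information already in the sketch. Let $j^\star=\argmin_{i\le k}\dist(x_i,x_{k+1})$ and $r_k=\dist(x_{j^\star},x_{k+1})=\min_{i\le k}\dist(x_i,x_{k+1})$. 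The sketch stores (i) the index $j^\star$, in $O(\log n)$ bits; (ii) $r_k$ up to a $1\pm\epsilon$ factor, by quantizing $[1,\Phi]$ into powers of $1+\epsilon$ and recording the nearest one, in $O(\log(1/\epsilon)+\log\log\Phi)$ bits; and (iii) for each $i\le k$ with $i\ne j^\star$, an $O(\log(1/\epsilon))$-bit refinement that determines $\dist(x_i,x_{k+1})$ up to $1\pm\epsilon$ (justified below). Summing over $k$, the total is $O(n^2\log(1/\epsilon)+n\log\log\Phi+n\log n)=O(n^2\log(1/\epsilon)+n\log\log\Phi)$ bits, and the deterministic estimation algorithm decodes $x_2,\dots,x_n$ in order, recovering all pairwise distances.

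The heart of the argument --- and the step I expect to be the only real obstacle --- is that $O(\log(1/\epsilon))$ bits indeed suffice in (iii). At the moment $x_{k+1}$ is decoded, $r_k$ and $\dist(x_{j^\star},x_i)$ are already known up to a $1\pm\epsilon$ factor (the latter was stored when the later of $x_i,x_{j^\star}$ was inserted). The triangle inequality forces $\dist(x_i,x_{k+1})\in[\dist(x_{j^\star},x_i)-r_k,\ \dist(x_{j^\star},x_i)+r_k]$, and additionally $\dist(x_i,x_{k+1})\ge r_k$ because $j^\star$ is the nearest predecessor. The key point is that this interval --- even after inflating it to absorb the $1\pm\epsilon$ uncertainty in the stored quantities --- has endpoint ratio $O(1)$: the bound $\dist(x_{j^\star},x_i)\le r_k+\dist(x_i,x_{k+1})\le 2\dist(x_i,x_{k+1})$ controls the upper endpoint, and a short case split on whether $\dist(x_i,x_{k+1})$ exceeds a fixed multiple of $r_k$ shows the lower endpoint is also $\Theta(\dist(x_i,x_{k+1}))$. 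Consequently the interval contains only $O(1/\epsilon)$ numbers of the form $r_k(1+\epsilon/10)^t$, so recording the index of the one nearest to $\dist(x_i,x_{k+1})$ takes $O(\log(1/\epsilon))$ bits and pins the distance to within $1\pm\epsilon$. It is crucial that the reference $j^\star$ is the \emph{nearest} inserted point, as this is exactly what both yields $r_k\le\dist(x_i,x_{k+1})$ and prevents the interval from being much wider than $\dist(x_i,x_{k+1})$.

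For the lower bound I would exhibit two hard families and take the stronger guarantee. For the $\Omega(n^2\log(1/\epsilon))$ term, let all $\binom n2$ distances lie in $[1,2]$, so that the triangle inequality holds automatically, and choose each independently from a fixed set of $\Theta(1/\epsilon)$ values that are pairwise separated by a factor exceeding $\tfrac{1+\epsilon}{1-\epsilon}$ (nonempty for $\epsilon$ bounded away from $1$); since the $1\pm\epsilon$ windows around distinct values are disjoint, any valid sketch recovers every distance exactly, hence has $\ge\binom n2\log\Theta(1/\epsilon)=\Omega(n^2\log(1/\epsilon))$ bits (here $\Phi=2$). For the $\Omega(n\log\log\Phi)$ term, partition the points into $n/2$ pairs $(u_k,v_k)$, set $\dist(u_k,v_k)$ independently to one of $\Theta(\log\Phi)$ geometrically-spaced values in $[1,\Phi]$ (spaced so that a $1\pm\epsilon$ estimate identifies the value) and set every inter-pair distance to $\Phi$; this is a valid ultrametric, so a valid sketch determines the choice for each of the $n/2$ pairs and has $\Omega(n\log\log\Phi)$ bits. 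A sketching scheme must be correct on both families, so its worst-case size is $\Omega(\max\{n^2\log(1/\epsilon),\,n\log\log\Phi\})=\Omega(n^2\log(1/\epsilon)+n\log\log\Phi)$. Randomization is handled in the standard way: on a uniformly random member of either family, a scheme that succeeds with probability $1-1/\mathrm{poly}(n)$ must carry, in expectation, all but a vanishing fraction of the family's entropy, so some outcome attains the claimed bound.
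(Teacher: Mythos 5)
Your proposal is correct, and the lower-bound half is essentially the paper's own argument: the first family (all distances in $[1,2]$, each drawn from $\Theta(1/\epsilon)$ well-separated values, triangle inequality automatic) is virtually identical to the paper's construction with distances $1+k\epsilon$, and your paired ultrametric for the $\Omega(n\log\log\Phi)$ term is a minor variant of the paper's one-dimensional construction with an anchor point and distances $2^{\phi(x)}$. The upper bound, however, is genuinely different. The paper does not encode distances directly: it embeds the metric isometrically into $\ell_\infty^n$ and invokes its general $\ell_p$ sketch (the relative location tree of \Cref{thm:lpmain}, with hierarchical clustering, path compression, ingresses and surrogates), of which \Cref{thm:generalmain} is a one-line corollary with $d=n$. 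You instead give a direct incremental encoding: insert points one at a time, store the index of the nearest predecessor $j^\star$ and a quantized $r_k$, and observe that for every other predecessor $i$ the triangle inequality together with $r_k\le\dist(x_i,x_{k+1})$ confines $\dist(x_i,x_{k+1})$ to an interval of constant endpoint ratio, so $O(\log(1/\epsilon))$ refinement bits suffice; your case analysis for the constant ratio is sound, and the $1\pm\epsilon$ uncertainty in the already-decoded reference quantities only inflates the interval by constant factors (just make sure the refinement grid is anchored at the quantized values shared by encoder and decoder, as you implicitly do). Your route is more elementary and self-contained for general metrics; the paper's route buys more, since the same machinery yields the dimension-sensitive bound $O(n(d+\log n)\log(1/\epsilon)+n\log\log\Phi)$ for all $\ell_p$, explicit sketching and estimation times, and the scaffolding on which the optimal Euclidean sketch is built, none of which your per-pair encoding provides. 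The standard fix-the-seed (or Fano-type) handling of randomized sketches that you sketch at the end is fine, and is the same implicit step the paper relies on.
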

Note that storing all distances exactly in a general metric takes at least $O(n^2\log\Phi)$ bits. 
Na\"ively, one could round each distance to its nearest power of $(1+\epsilon)$, which yields a sketch of size $O(n^2\log(1/\epsilon) + n^2\log\log \Phi)$ bits. 
\Cref{thm:generalmain} improves the second term to $n\log\log \Phi$. 
For example, for the goal of reporting a $2$-approximation of each distance (i.e., $\dist(i,j)\leq\tilde E_{ij}\leq2\cdot\dist(i,j)$ for all $i,j\in[n]$), where the input distances are polynomially bounded ($\Phi=n^{O(1)}$), we get a tight bound of $\Theta(n^2)$ bits, compared to the na\"ive bound of $O(n^2\log\log n)$ bits.

Both of the theorems above are based on a more general upper bound, that holds for all $\ell_p$-metrics. 

\begin{theorem}[$\ell_p$-metric compression]\label{thm:lpmain}
Let $1\leq p\leq\infty$. For $\ell_p$-metric sketching with $n$ points in dimension $d$ and distances in $[1,\Phi]$, $O(n(d+\log n)\log(1/\epsilon) + n\log\log \Phi)$ bits are sufficient. 
The sketching algorithm is deterministic and runs in time $O(n^2\log\Phi+nd\log(1/\epsilon))$. 
The estimation algorithm runs in time $O(d\log(d\Phi))$ for $p<\infty$, and  $O(d\log\Phi)$ for $p=\infty$.
\end{theorem}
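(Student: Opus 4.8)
The plan is to build a single‑parent ``net‑absorption'' tree on the $n$ points and store, for each point, one quantized displacement to its parent together with a per‑edge \emph{scale}; the improvement of the $n\log\log\Phi$ term over the $n\log\Phi$ term of discretized Johnson--Lindenstrauss comes from paying only $O(\log\log\Phi)$ bits per tree edge for that scale. Concretely, fix scales $2^0,2^1,\ldots,2^{L}$ with $L=\lceil\log_2\Phi\rceil+O(1)$ and greedily build nested nets $N_0\supseteq N_1\supseteq\cdots\supseteq N_L$, where $N_0=\{x_1,\ldots,x_n\}$ (which is $2^0$‑separated since all distances are $\ge1$) and $N_s$ is a maximal $2^s$‑separated subset of $N_{s-1}$, so every point of $N_{s-1}$ is within $2^s$ of $N_s$. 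For each point $x_v$ let $s_v$ be the coarsest scale with $x_v\in N_{s_v}$, and let $\pi(v)\in N_{s_v+1}$ satisfy $\norm{x_v-x_{\pi(v)}}_p\le2^{s_v+1}$ (such a point exists since $x_v\in N_{s_v}$ is within $2^{s_v+1}$ of $N_{s_v+1}$); note $\norm{x_v-x_{\pi(v)}}_p\ge2^{s_v}$ as well, since $x_v,\pi(v)\in N_{s_v}$ are distinct. Because $\pi(v)\in N_{s_v+1}$ we have $s_{\pi(v)}\ge s_v+1$, so the parent's scale strictly exceeds the child's; the scales are bounded by $L$ and $N_L$ is a single point (its diameter is $<2^L$), so $\pi$ defines a rooted tree $\mathcal T$ on the $n$ points. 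The sketch stores: (i) the structure of $\mathcal T$, in $O(n\log n)$ bits; (ii) the scale $s_v$ of each edge $(v,\pi(v))$, in $O(n\log\log\Phi)$ bits; (iii) for each non‑root $v$, the displacement $x_v-x_{\pi(v)}$ quantized to additive $\ell_p$‑error $\epsilon\cdot2^{s_v}$ --- i.e.\ to \emph{relative} precision $\Theta(\epsilon)$ after scaling by $2^{s_v}$, which is what makes this cheap regardless of the scale \emph{gap} between parent and child.

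For estimation, on input $i,j$ I would compute $u=\mathrm{LCA}(i,j)$ in $\mathcal T$ (walk up the two root‑paths; their edge‑scales strictly increase within $\{0,\ldots,L\}$, so this takes $O(\log\Phi)$ steps), reconstruct $w_i$ and $w_j$ as the signed sums of the stored displacements along the paths $i\to u$ and $j\to u$, and output $\widetilde E_{ij}=\norm{w_i-w_j}_p$. Let $s^\ast$ be the largest edge‑scale on the path $i\to j$. The structural claim we need --- the crux of the whole argument --- is that $d(x_i,x_j)=\Theta(2^{s^\ast})$. The upper bound is the easy bounded‑reach estimate: telescoping along $i\to u$ gives $\norm{x_i-x_u}_p\le\sum_{s\le s^\ast}2^{s+1}=O(2^{s^\ast})$, and similarly for $j$, so $d(x_i,x_j)=O(2^{s^\ast})$. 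The lower bound $d(x_i,x_j)=\Omega(2^{s^\ast})$ is the delicate one: it does \emph{not} follow from bounded reach (a vanilla nearest‑point net hierarchy can route a close pair through a common ancestor at an arbitrarily coarser scale), and must be obtained by exploiting the greedy order of the construction --- using that $i$ and $j$ lie in distinct clusters at scale $s^\ast-O(1)$ whose leaders are well‑separated net points, which may require a mild modification of the vanilla scheme. Granting the claim, the quantization errors along each path decrease geometrically in scale and sum to $O(\epsilon\cdot2^{s^\ast})$, so $w_i=x_i-x_u\pm O(\epsilon2^{s^\ast})$ and likewise for $j$; by the triangle inequality $\widetilde E_{ij}=d(x_i,x_j)\pm O(\epsilon2^{s^\ast})=(1\pm O(\epsilon))\,d(x_i,x_j)$, and rescaling $\epsilon$ by a constant meets the requirement of Definition~\ref{def:metricsketching}.

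It remains to bound part (iii) and the running times. Each displacement $x_v-x_{\pi(v)}$ has $\ell_p$‑norm $\Theta(2^{s_v})$ and must be stored to additive $\ell_p$‑error $\epsilon\cdot2^{s_v}$. For $p=\infty$ --- which in particular covers general metrics via the isometric embedding $x_i=(d(i,1),\dots,d(i,n))$, yielding Theorem~\ref{thm:generalmain} with $d=n$ --- one just rounds each of the $d$ coordinates to a multiple of $\epsilon\cdot2^{s_v}$, costing $O(d\log(1/\epsilon))$ bits, so with (i)--(ii) the total is $O(n(d+\log n)\log(1/\epsilon)+n\log\log\Phi)$ bits. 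For $1\le p<\infty$, naive coordinatewise rounding loses a spurious $\log d$ factor, so one first sparsifies --- zero out every coordinate of magnitude below $\epsilon\cdot2^{s_v}\cdot d^{-1/p}$, which perturbs the $\ell_p$‑norm by at most $O(\epsilon2^{s_v})$ --- and then encodes the remaining vector via $O(\log(1/\epsilon))$ magnitude‑layers, each naming its coordinates with $O(\log n)$‑bit indices; this gives $O((d+\log n)\log(1/\epsilon))$ bits per displacement and is also why, for finite $p$, the estimator must evaluate $\norm{\cdot}_p^p$ with wider arithmetic ($O(\log(d\Phi))$‑bit numbers). The nets and scales are computed by one greedy sweep per scale in $O(n^2\log\Phi)$ time, the displacements are emitted in $O(nd\log(1/\epsilon))$ time, and estimation walks an $O(\log\Phi)$‑length path doing $O(d)$ work per edge, matching the stated bounds.

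The step I expect to be the main obstacle is the structural lower bound: that the LCA in $\mathcal T$ occurs at scale $\log_2 d(x_i,x_j)\pm O(1)$, equivalently that the net‑absorption tree never co‑clusters two nearby points only at a much coarser scale. The upper direction is routine; the lower direction is where one must go beyond generic net‑tree properties and argue, from the greedy order, that points co‑clustered only at a coarse scale are genuinely far apart --- possibly after adjusting the absorption rule. Everything else (the encoding of displacements, the geometric error summation, and the estimation arithmetic) is then routine.
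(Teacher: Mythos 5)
Your proposal stands or falls on the structural claim you yourself flag as the crux --- that the largest edge scale $s^\ast$ on the tree path between $i$ and $j$ satisfies $d(x_i,x_j)=\Omega(2^{s^\ast})$ --- and this claim is genuinely false for the single-parent net-absorption tree you build, not merely unproven. With a maximal-separated-subset hierarchy, two points at distance $\approx 1$ can each be ``protected'' at every scale by a different nearby net point that absorbs it, so their root-paths stay disjoint up to scale $\Theta(\log\Phi)$; one can realize this with $O(\log\Phi)$ auxiliary points even on the line. Since your quantization error along the two paths is $\Theta(\epsilon\,2^{s^\ast})$, the relative error then blows up by a factor $2^{s^\ast}/d(x_i,x_j)$, which is unbounded. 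Moreover, the fix is not a ``mild modification'': any absorption rule that guarantees the separation property (points split at scale $\ell$ are at distance $\ge 2^\ell$) must merge \emph{transitively} all clusters within distance $2^\ell$, and then cluster diameters --- hence the displacements you must encode --- are no longer $O(2^{\ell})$ (a chain of points at spacing just under $2^\ell$ creates a cluster of diameter $\Theta(n2^\ell)$). That destroys exactly the invariant $\norm{x_v-x_{\pi(v)}}_p=\Theta(2^{s_v})$ on which your $O((d+\log n)\log(1/\epsilon))$-bits-per-edge accounting rests, so separation and your cost analysis cannot coexist in your scheme as stated.

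Resolving this tension is the actual content of the paper's proof, and none of its ingredients appear in the proposal: the tree is built by transitive merging (giving the separation lower bound, \Cref{obs:separation}, for free); the unbounded per-cluster diameters are replaced by the global amortized bound $\sum_{v}2^{-\ell(v)}\Delta(v)\le 4n$ (\Cref{lmm:treebound}); each node's displacement is stored at a per-node precision $\gamma(v)\approx(1+\Delta(v)/2^{\ell(v)})^{-1}$, so the extra $\log(\Delta(v)/2^{\ell(v)})$ bits are affordable only in aggregate via that lemma; the ingress of a node is chosen as a subtree leaf near the boundary of the neighboring cluster precisely so that a node pays for its own diameter and never for its ingress's (\Cref{lmm:ingress}); and long $1$-paths are compressed so that the node count is $O(n\log(1/\epsilon))$ and the $n\log\log\Phi$ term comes from storing long-edge lengths, with finer leaf surrogates ensuring the compressed clusters are $2^{\ell}\epsilon$-small proxies at estimation time (\Cref{lmm:subtree_leaf,cor:surrogate_leaf}). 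A secondary, smaller issue: your coordinate encoding for $1\le p<\infty$ (sparsify plus magnitude layers with per-coordinate indices) does not obviously avoid the $\log d$ factor you are worried about, since up to $d$ coordinates can survive the threshold; the paper instead rounds to the grid net $\mathcal N_\gamma$ with cell side $\gamma/d^{1/p}$ and uses the volume bound $|\mathcal N_\gamma|\le (c'/\gamma)^d$ to pay only $O(d\log(1/\gamma))$ bits per displacement.
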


The upper bound of~\Cref{thm:generalmain} follows immediately from~\Cref{thm:lpmain}, since as mentioned earlier,  general metric spaces with $n$ points embed isometrically into $\ell_\infty$ with dimension $d=n$. 
Similarly, for Euclidean metrics, one can apply the Johnson-Lindenstrauss transform as a preprocessing step in order to reduce the dimension of the input points to $O(\epsilon^{-2}\log n)$, and then apply~\Cref{thm:lpmain}. This gives an upper bound looser than that of~\Cref{thm:eucmain} by $O(\log(1/\epsilon))$. To obtain the tight bound, we will use additional properties special to Euclidean metrics.


\subsection{Additional related work}\label{sec:compressionrelatedwork}
\begin{table}\centering
\ra{1.3}
\begin{tabular}{ccccc}\toprule
 & Reference & Bits per point & No.~queries & Query type \\
\midrule
\multirow{3}{*}{Related work} & \cite[...]{johnson1984extensions,achlioptas2003,KOR98}  & $O(\log^2n)$ & $q\leq n^{O(1)}$ & distances \\
 & \cite{mahabadi2018nonlinear,narayanan2019optimal} & $O(\log^2n)$ & any $q$ & distances \\
 & \cite{molinaro2013beating} & $\Omega(\log^2n)$ & $q\geq n$ & distances \\
\midrule
 \multirow{3}{*}{Our approach} & \Cref{thm:eucmain} & $\Theta(\log n)$ & --- & none \\
 & \cite{indyk2018approximate} & $\Theta(\log n \cdot \log q)$ & $q\leq n$ & distances \\
 & \cite{indyk2018approximate} & $O(\log n + \log q)$ & any $q$ & nearest neighbor \\
\bottomrule
\end{tabular}
\caption[Bounds on metric compression]{Bounds on metric compression with $n$ data points and $q$ query points, in a typical regime with relative error $\epsilon=\Omega(1)$, ambient dimension $d=n^{O(1)}$ and diameter $\Phi=n^{O(1)}$.}
\label{tbl:introcompression}
\end{table}

\paragraph{Sketching with additive error.}
In a work concurrent to our original paper \cite{indyk2017near}, Alon and Klartag \cite{AlonK16} studied a closely related problem of approximating squared Euclidean distances between points of norm at most $1$, up to an~\emph{additive} error of $\epsilon$ (whereas distortion $1\pm\epsilon$, as in \Cref{def:metricsketching}, is equivalent to~\emph{relative} error $\epsilon$). 
For this problem, they proved a tight sketching bound of $O(\epsilon^{-2}\log n)$ bits per point.\footnote{Note that in this model, the parameter $\Phi$ does not need to enter the sketch size, since the error is allowed to be arbitrartily larger than the minimal distance in the pointset.} Further work on additive error refined this result by parameterizing the reduced dimension by complexity measures of the embedded pointsets, designing faster or deterministic algorithms, or introducing sigma-delta-style quantization \cite{dirksen2020binarized,stollenwerk2019one,huynh2020fast,zhang2020faster,pagh2020space}. 

Sketching with additive error is generally less restrictive than relative error, in the following sense --- on one hand, a relative error sketch implies an additive error sketch by setting $\Phi=O(1/\epsilon)$,\footnote{To this end, given a pointset $X=\{x_1,\ldots,x_n\}$ with norms bounded by $1$, let $\mathcal N$ be an $\epsilon/2$-separated $\epsilon$-net of the unit ball (see \Cref{sec:compression_lp_prelims} for the definitions). Let $Y=\{y_1,\ldots,y_n\}$ be the respective nearest neighbors of $X$ in $\mathcal N$. The separation property of $\mathcal N$ implies that $Y$ has aspect ratio at most $O(1/\epsilon)$. We may now sketch the distances in $Y$ using a relative error sketch with $\Phi=O(1/\epsilon)$, since given $i,j\in[n]$, reporting the distance between $y_i,y_j$ instead of $x_i,x_j$ increases the additive error by at most $O(\epsilon)$ by the triangle inequality.}
and on the other hand, lower bounds for additive error hold for relative error as well. 
In particular, as we discuss in Section~\ref{sec:lowerbounds}, the lower bound from \cite{AlonK16} (as well as the lower bound in another concurrent paper~\cite{LarsenN16}) provides another way to show the lower bound in Theorem~\ref{thm:eucmain}.


\paragraph{New query points and nearest neighbor search.}
In the model considered in this paper, the query algorithm needs to report distances only between points $x_1,\ldots,x_n$ that were fully known to the sketching algorithm (see \Cref{def:metricsketching}). 
In a closely related but different setting, the query algorithm gets a set of new points $y_1,\ldots,y_q\in\R^d$, that were not known to the sketching algorithm, and needs to estimate distances between each $y_j$ and each $x_i$. A notable example of this setting is the nearest neighbor search problem. 

The classical dimension reduction approach, which yields a dimension bound of $O(\epsilon^{-2}\log n)$ and a sketching bound of $O( \epsilon^{-2} n \log (n) \log \Phi)$ bits per point, can handle as many as $q=n^{O(1)}$ query points. Very recently, a new line of work known as~\emph{terminal dimension reduction}~\cite{mahabadi2018nonlinear,narayanan2019optimal} was able to obtain the same bounds for an~\emph{unbounded} number of query points $q$. On the other hand, the papers~\cite{jayram2013optimal,molinaro2013beating} proved a matching lower bound of $\Omega( \epsilon^{-2} n \log (n) \log \Phi)$ bits for sketching $\ell_1$ or $\ell_2$ distances, if $q\geq n$, settling the optimal sketch size in this regime.

In a companion work~\cite{indyk2018approximate}, we develop the techniques of the current paper and prove nearly tight sketching bounds in the complement regime $q\leq n$, interpolating between~\Cref{thm:eucmain} and the above tight bounds for $q\geq n$. Furthermore, we show that for the easier task of reporting an approximate nearest neighbor in the dataset for each query point (rather than estimating all distances between dataset points and query points), a better sketching upper bound is possible. The picture is summarized in~\Cref{tbl:introcompression}.

\paragraph{Applied literature.}
A prominent line of applied research (including~\cite{salakhutdinov2009semantic,weiss2009spectral,jegou2010product,gong2012iterative,norouzi2013cartesian,ge2013optimized,kalantidis2014locally,johnson2017billion,SablayrollesDSJ19}; see also the surveys~\cite{wang2016learning,wang2018survey}) has been dedicated to designing empirical solutions to the problem in~\Cref{def:metricsketching}, under the label~\emph{learning to hash}. This nomenclature reflects the fact that in the preprocessing stage, these methods employ machine learning techniques to adapt the sketches to the given dataset, in order to optimize performance. While empirically successful, these methods are fundamentally heuristic and do not pose formal solutions to~\Cref{def:metricsketching}. In a companion work~\cite{indyk2017practical}, we design a sketch that on one hand has close to optimal worst-case guarantees (in particular, its size lossier than~\Cref{thm:eucmain} by $O(\log\log n+\log(1/\epsilon))$), while on the other hand it empirically matches or improves the performance of state-of-the-art heuristic methods.

\paragraph{Distance oracles.}
The distance oracle problem~\cite{thorup2005approximate} is equivalent to sketching of general metrics, and has been studied in a different distortion regime. 
A long line of work (including~\cite{peleg1989graph,althofer1993sparse,matouvsek1996distortion,thorup2005approximate,wulff2012approximate,chechik2015approximate} and more) has shown that for every integer $k\geq 1$, it is possible to compute a sketch of size $\tilde O(n^{1+1/k})$ with distortion $2k-1$,
which is tight up to logarithmic factors under the Erd\H{o}s Girth Conjecture. Notably, for distortion $3$ and above, the sketch size is $o(n^2)$. (However, note that in order to achieve a near-linear sketch size, the distortion must be almost logarithmic.) On the other hand, for any distortion less than $3$, it is not hard to show (by considering all shortest-path metrics induced by bipartite simple graphs) that a sketch size of $\Omega(n^2)$ is necessary. 
For distortion $1\pm\epsilon$, to our knowledge, the best upper bound prior to our work had been $O(n^2(\log\log \Phi+\log(1/\epsilon)))$ bits, which follows from na\"ive rounding as mentioned earlier. 

\subsection{Technical overview}
The basic strategy in the sketch is to store each point by its relative location to a nearby point which had already been (approximately) stored. 
Note that this is different than dimension reduction and its discretizations, which approximately store the location of each point in the space in an absolute sense. 

More precisely, let $X=\{x_1,\ldots,x_n\}$ be the point set we wish to sketch.
For every point $x\in X$, we aim to define a~\emph{surrogate} $s^*(x)\in\R^d$, which is an approximation of $x$ that can be efficiently stored in the sketch. To this end, we choose an~\emph{ingress} point $in(x)\in X$ near $x$, and define $s^*(x)$ inductively by its location relative to $s^*(in(x))$, namely $s^*(x) = s^*(in(x)) + [x-s^*(in(x))]_\gamma$, where $[y]_\gamma$ denotes rounding $y$ to a $\gamma$-net, with an appropriate precision $\gamma$. 
We then hope to use the distance between the surrogates, $\norm{s^*(x_i)-s^*(x_j)}$, as an estimate for the distance $\norm{x_i-x_j}$ for all pairs $i,j\in[n]$. The challenge is to choose the ingresses and the precisions in a way that on one hand ensures a small relative error estimate for each pair, while on the other hand does not occupy too many storage bits.

In order to ensure a relative error approximation of every distance, we need to consider all possible distance scales. 
To this end we construct a hierarchical clustering tree of the metric space, and define the ingresses and surrogates for clusters (or tree nodes) instead of individual points. 
Here, it may seem natural to use separating decomposition trees such as~\cite{bartal1996probabilistic,charikar1998approximating,fakcharoenphol2004tight}, which provide both a separating property (far points are in different clusters) and a packing property (close points are often in the same cluster). However, such trees are bound to incur a super-constant gap between the two properties~\cite{bartal1996probabilistic,naor2017probabilistic}, which would lead to a suboptimal sketch size.
Instead, our tree transitively merges any two clusters within a sufficiently small distance. This yields a perfect separation property, but no packing property --- the diameter of each cluster may be unbounded. We replace it by a global bound on all cluster diameters in the tree (\Cref{lmm:treebound}).

The tree size is first reduced to linear by compressing long non-branching paths.
From a distance estimation point of view, this means that if a cluster is very well separated from
the rest of the metric,
then we can replace it entirely with one representative point (called~\emph{center}) for the purpose of estimating the distances between internal and external points.
Then, the crucial step is a careful choice of the ingresses, that ensures that if we set the precisions so as to get correct estimates between all pairs, the total sketch occupies sufficiently few bits. This completes the description of the data structure, which we call~\emph{relative location tree}.

In order to estimate the distance $\norm{x_i-x_j}$ for a given pair $i,j\in[n]$, we can identify in the tree two nodes $v_i,v_j$, such that (i) the center of $v_i$ is a sufficiently good proxy for $x_i$ from the point of view of $x_j$, and vice-versa, (ii) the error between the center of $v_i$ and its surrogate is proportional to $\epsilon\cdot\norm{x_i-x_j}$, and the same holds for $v_j$, and (iii) the surrogates of $v_i$ and $v_j$ can be recovered from the sketch (by following ingresses along the tree) up to a shift, which while unknown, is the same for both. Then we may return the distance between the shifted surrogates as the output distance estimate.

\paragraph{Euclidean metrics.}
The above outline describes our upper bound for sketching $\ell_p$-metrics. However, for Euclidean metrics, the resulting sketch size is suboptimal in the dependence on $\epsilon$. To achieve the optimal bound we further develop the sketch.

To this end, we incorporate randomness into the sketching algorithm. 
To see why this might help, view a surrogate $s^*(x)$ as an estimator for the point $x$ it represents. 
In the deterministic sketch described above, $s^*(x)$ is necessarily a \emph{biased} estimator, since it is a fixed point close to but different than $x$. This bias bears on the sketch size: if, say, the surrogate is chosen by deterministically rounding $x$ to its nearest neighbor in a fixed net, then in order to get a desired level of accuracy $\norm{x-s^*(x)}\leq\epsilon$, the net must have size $\Theta(1/\epsilon)^d$, and hence the surrogate requires $\Omega(d\log(1/\epsilon))$ bits to store in the sketch.
To improve this, we could hope to use an \emph{unbiased} estimator for $x$ by designing a distribution over surrogates, which we call \emph{probabilisitic surrogates}.


Alon and Klartag~\cite{AlonK16} took this approach to achieve the optimal sketch size for \emph{additive} error $\epsilon$.
By using \emph{randomized} rounding on the net, they showed its size can be reduced to $O(1)^d$, while $\norm{x-s^*(x)}\leq\epsilon$ still holds by probabilistic concentration if the dimension is large enough ($d\gtrsim\epsilon^{-2}\log n$). 
To achieve~\emph{relative} error $\epsilon$, we incorporate this into our techniques described above. We build a relative location tree with $\epsilon=\Omega(1)$; this does not exceed the optimal sketch size for Euclidean metrics, but does not provide the desired approximation of distances. We then augment it with randomized roundings of displacement vectors between nodes to their surrogates, and between centers to non-centers in well-separated clusters. 
To estimate the distance $\norm{x_i-x_j}$ of a given pair $i,j\in[n]$, we sum an appropriate subset of those randomly rounded displacements along the tree, obtaining probabilistic surrogates $X_i,X_j$. These are random variables with expected values $x_i,x_j$ up to an unknown but equal shift, and with variance appropriately related to $\norm{x_i-x_j}$. For technical reasons related to probabilistic independence, we return a proxy of the distance $\norm{X_i-X_j}$ rather than the distance itself, and the result is tightly concentrated at the correct value $\norm{x_i-x_j}$.

\subsection{Paper organization}
\Cref{sec:compression_lp_prelims} sets up preliminaries and notation.
\Cref{sec:disptree} contain the description of the main sketch, and proves the upper bound for $\ell_p$ metrics in~\Cref{thm:lpmain}. (The upper bound for general metrics in~\Cref{thm:generalmain} follows as a corollary, as explained above.) 
\Cref{sec:eucmetrics} develops the sketch further and proves the upper bound for Euclidean metrics in~\Cref{thm:eucmain}.
\Cref{sec:l1p2} points out that bounds for Euclidean metrics hold for all $\ell_p$ metrics with $1\leq p\leq2$. 
\Cref{sec:lowerbounds} proves the lower bounds in~\Cref{thm:eucmain,thm:generalmain}.

\section{Preliminaries}\label{sec:compression_lp_prelims}

We start by stating the classical dimension reduction theorem of Johnson and Lindenstrauss~\cite{johnson1984extensions}. 

\begin{theorem}[\cite{johnson1984extensions}]\label{thm:jl}
Let $x_1,\ldots,x_n\in\R^d$, $\epsilon,\delta\in(0,1)$, and $d'\geq c\epsilon^{-2}\log(n/\delta)$ for a sufficiently large constant $c>0$. There is a distribution over matrices $M\in\R^{d'\times d}$ (for example, i.i.d.~entries from $\frac1{\sqrt{d'}}\cdot N(0,1)$) such that with probability $1-\delta$, for all $i,j\in[n]$,
\[ (1-\epsilon)\norm{x_i-x_j}_2 \leq \norm{Mx_i-Mx_j}_2 \leq (1+\epsilon)\norm{x_i-x_j}_2 . \]
\end{theorem}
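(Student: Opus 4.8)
The plan is to prove the Johnson--Lindenstrauss theorem via the standard concentration-of-measure argument applied to a Gaussian projection. First I would reduce the statement about pairwise distances to a statement about single fixed vectors: for each pair $i,j$, set $u = x_i - x_j$, so it suffices to show that $\norm{Mu}_2^2$ concentrates around $\norm{u}_2^2$ with failure probability at most $\delta/\binom{n}{2}$, and then take a union bound over all $\binom{n}{2} < n^2/2$ pairs. Rescaling, we may assume $\norm{u}_2 = 1$, and the goal becomes $\Pr[\,\abs{\norm{Mu}_2^2 - 1} > \epsilon'\,] \leq \delta/n^2$ for an appropriate $\epsilon' = \Theta(\epsilon)$ (since $(1-\epsilon)^2 \leq 1 - \epsilon$ and $(1+\epsilon)^2$ behaves like $1 + O(\epsilon)$, up to adjusting constants).

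Next I would analyze $\norm{Mu}_2^2$ for fixed unit $u$. Each coordinate of $Mu$ is $\frac{1}{\sqrt{d'}}\langle g_k, u\rangle$ where $g_k \in \R^d$ has i.i.d.\ $N(0,1)$ entries; by rotational invariance of the Gaussian and $\norm{u}_2 = 1$, $\langle g_k, u\rangle \sim N(0,1)$, and these are independent across $k$. Hence $d' \cdot \norm{Mu}_2^2 = \sum_{k=1}^{d'} Z_k^2$ where $Z_k$ are i.i.d.\ standard Gaussians, i.e.\ a chi-squared random variable with $d'$ degrees of freedom, whose expectation is exactly $d'$. The key step is a Chernoff/Laplace-transform tail bound for chi-squared: using $\Exp[e^{t Z_k^2}] = (1-2t)^{-1/2}$ for $t < 1/2$ and Markov's inequality on $e^{t\sum Z_k^2}$, followed by optimizing $t$, one obtains
\[
\Pr\!\left[\, \sum_{k=1}^{d'} Z_k^2 \geq (1+\epsilon') d' \,\right] \leq \exp\!\left(-\tfrac{d'}{2}\big(\epsilon' - \ln(1+\epsilon')\big)\right) \leq \exp(-c_1 d' \epsilon'^2)
\]
for $\epsilon' \in (0,1)$ and an absolute constant $c_1 > 0$, using $\epsilon' - \ln(1+\epsilon') \geq \epsilon'^2/c_1'$; the lower-tail bound $\Pr[\sum Z_k^2 \leq (1-\epsilon')d'] \leq \exp(-c_1 d'\epsilon'^2)$ follows symmetrically. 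Combining, $\Pr[\,\abs{\norm{Mu}_2^2 - 1} > \epsilon'\,] \leq 2\exp(-c_1 d' \epsilon'^2)$.

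Finally I would choose the constants. Setting $d' \geq c\epsilon^{-2}\log(n/\delta)$ with $c$ large enough (relative to $c_1$ and the $\epsilon$-to-$\epsilon'$ conversion) makes $2\exp(-c_1 d'\epsilon'^2) \leq \delta/n^2 \leq \delta/\binom{n}{2}$, so the union bound over all pairs gives overall failure probability at most $\delta$, and on the complementary event every pairwise squared distance is preserved to within $1 \pm \epsilon'$, hence every pairwise distance to within $1 \pm \epsilon$ after adjusting constants. I expect the main technical obstacle to be the chi-squared tail estimate --- specifically getting the clean exponent $\Theta(d'\epsilon'^2)$ with explicit constants and correctly handling both tails --- though this is entirely standard; everything else (rotational invariance reducing to chi-squared, the union bound, the squared-to-unsquared distortion bookkeeping) is routine.
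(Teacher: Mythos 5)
Your proof is correct: it is the standard Gaussian-projection argument (reduce to a fixed unit vector, observe that $d'\norm{Mu}_2^2$ is chi-squared with $d'$ degrees of freedom, apply Chernoff bounds on both tails to get failure probability $2\exp(-\Omega(d'\epsilon^2))$ per pair, and union bound over the $\binom{n}{2}$ pairs), and it matches the specific distribution (i.i.d.\ $\tfrac{1}{\sqrt{d'}}N(0,1)$ entries) named in the statement. Note that the paper does not prove this theorem at all --- it is imported as a classical preliminary, cited to Johnson and Lindenstrauss, whose original argument used projection onto a random subspace and concentration of measure on the sphere rather than the Gaussian-matrix route you take; your version is the now-standard simplification and is perfectly adequate for the statement as written.
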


\subsection{Grid nets}
Let $1\leq p\leq\infty$. Let $\mathcal{B}_p^d=\{x\in\R^d:\norm{x}_p\leq1\}$ denote the $d$-dimensional $\ell_p$-unit ball. Let $\gamma>0$. A subset $\mathcal N\subset \R^d$ is called a~\emph{$\gamma$-net} of $\mathcal{B}_p^d$ if for every $x\in \mathcal{B}_p^d$ there is $y\in \mathcal N$ such that $\norm{x-y}_p\leq\gamma$. Further, $\mathcal N$ is \emph{$\gamma'$-separated} if the distance between any pair of distinct points in $\mathcal N$ is at least $\gamma'$. It is a well-known fact that $\mathcal{B}_p^d$ has a $\gamma$-net of size $(c/\gamma)^d$ for a constant $c>0$ that is $\gamma/2$-separated, and that the size bound is tight up to the constant $c$.

We will use a specific net, given by the intersection of the ball with an appropriately scaled grid.
For $\rho>0$, let $\mathcal{G}^d[\rho]\subset\R^d$ denote the uniform $d$-dimensional grid with cell side length $\rho$.
Namely, $\mathcal{G}^d[\rho]$ is defined as the set of points $\R^d$ such that each of their coordinates is an integer multiple of $\rho$. 

The net we use is $\mathcal{N}_\gamma = 2\cdot\mathcal{B}_p^d \cap \mathcal{G}^d[\gamma/d^{1/p}]$ (where $2\cdot\mathcal{B}_p^d$ is the origin-centered ball of radius $2$). We drop the dependence on $d$ and $p$ from the notation $\mathcal{N}_\gamma$ for simplicity. Also, in the case $p=\infty$, we use $d^{1/p}=1$ as a convention. 
Since each cell of $\mathcal{N}_\gamma$ is a hypercube of diameter $\gamma$, or part of one, $\mathcal{N}_\gamma$ is indeed a $\gamma$-net of $\mathcal{B}_p^d$ (and is $\gamma/d^{1/p}$-separated). It is also well-known that $|\mathcal{N}_\gamma|\leq(c'/\gamma)^d$ for a constant $c'>0$ (see, e.g., \cite{HarpeledIM12} or~\cite{AlonK16}), meaning that $\mathcal N_\gamma$ attains the optimal size for $\gamma$-nets up to the constant $c'$. 
Finally, given $x\in\mathcal{B}_p^d$, we can find $y\in\mathcal{N}_\gamma$ such that $\norm{x-y}_p\leq\gamma$ by dividing each coordinate by $\gamma/d^{1/p}$, rounding it to the largest smaller integer, and multiplying it by $\gamma/d^{1/p}$. We call this operation~\emph{rounding $x$ to $\mathcal{N}_\gamma$}. In summary,


\begin{lemma}\label{lmm:gridnet}
For every $x\in\mathcal{B}_p^d$, we can round it to $\mathcal{N}_\gamma$ in time $O(d)$, and store the resulting point of $\mathcal{N}_\gamma$ with $O(d\log(1/\gamma))$ bits.
\end{lemma}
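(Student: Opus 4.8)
The lemma is essentially a summary of the discussion immediately preceding it, so the plan is short: I would verify the two assertions --- the $O(d)$ rounding time and the $O(d\log(1/\gamma))$ storage bound --- separately.

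For the running time, recall that rounding $x\in\mathcal{B}_p^d$ to $\mathcal{N}_\gamma$ amounts to multiplying each coordinate by $d^{1/p}/\gamma$, replacing it by its floor, and multiplying back by $\gamma/d^{1/p}$ (with the convention $d^{1/p}=1$ when $p=\infty$). This is a constant number of arithmetic operations per coordinate, hence $O(d)$ operations overall, and the preceding discussion already established that the output $y$ lies in $\mathcal{N}_\gamma$ and satisfies $\norm{x-y}_p\le\gamma$.

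For the bit bound, I would invoke the cited cardinality estimate $|\mathcal{N}_\gamma|\le(c'/\gamma)^d$ for a universal constant $c'$ --- crucially, with no dependence on $d$ in the base. Since the set $\mathcal{N}_\gamma$ is determined by the globally known parameters $p,d,\gamma$, the sketching and estimation algorithms can agree in advance on an enumeration $z_1,z_2,\dots$ of $\mathcal{N}_\gamma$; we then store $y$ by recording its index in this enumeration, which costs $\lceil\log_2|\mathcal{N}_\gamma|\rceil\le d\log_2(c'/\gamma)+1=O(d\log(1/\gamma))$ bits. If one prefers an encoding that does not reference an enumeration, one can instead store the integer vector $(k_1,\dots,k_d)$ defined by $y_i=k_i\cdot\gamma/d^{1/p}$ using a prefix-free code on each coordinate; the membership $y\in2\mathcal{B}_p^d$ gives $\sum_i|k_i|^p\le 2^pd/\gamma^p$ (read as $\max_i|k_i|\le 2/\gamma$ when $p=\infty$), so by concavity of the logarithm the total code length is again $O(d\log(1/\gamma))$, and this encoding is computable in $O(d)$ additional operations.

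There is no genuine obstacle here: the only place where the specific net $\mathcal{N}_\gamma$ (rather than an arbitrary $\gamma$-net) matters is that its size attains the optimal $\gamma$-net bound with a universal constant in the base, which is exactly the fact quoted above; the rest is bookkeeping.
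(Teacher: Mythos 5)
Your proposal is correct and follows essentially the same argument the paper gives (the paper states the lemma as a summary of the preceding discussion: per-coordinate rounding in $O(d)$ time, and the bit bound via the cardinality estimate $|\mathcal{N}_\gamma|\leq(c'/\gamma)^d$, which is exactly your first encoding). Your alternative coordinate-wise prefix-free encoding is a fine extra but not needed.
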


We also record another variant of the above lemma. 
\begin{claim}\label{fct:gridball}
Let $x\in\R^d$ and $\gamma>0$. The number of points in $\mathcal{G}^d[\gamma/d^{1/p}]$ which are at distance at most $2\gamma$ from $x$ (in the $\ell_p$-norm distance) is $O(1)^d$.
\end{claim}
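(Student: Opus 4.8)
The plan is to reduce the claim to a counting statement about integer lattice points in a ball, which follows from a volume argument. Write $\rho = \gamma/d^{1/p}$ for the grid side length, so that $\mathcal{G}^d[\rho] = \rho\Z^d$. Fix $x\in\R^d$; we must bound the number of $y\in\rho\Z^d$ with $\norm{x-y}_p \le 2\gamma$. By translating (replace $x$ by $x - z$ for the grid point $z$ nearest to $x$ coordinatewise, which only shifts the lattice), we may assume each coordinate of $x$ lies in $[0,\rho)$; the count can only change by a bounded multiplicative factor under such a translation, so this is harmless. Now rescale by $1/\rho$: the points $y\in\rho\Z^d$ with $\norm{x-y}_p\le 2\gamma$ correspond bijectively to lattice points $v\in\Z^d$ with $\norm{(x/\rho) - v}_p \le 2\gamma/\rho = 2d^{1/p}$.

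So it suffices to show that for any $w\in\R^d$, the number of $v\in\Z^d$ with $\norm{w-v}_p\le 2d^{1/p}$ is $O(1)^d$. For this I would use the standard packing/volume argument: each lattice point $v$ in the count "owns" the unit cube $v + [0,1)^d$, these cubes are disjoint, and every such cube is contained in the enlarged $\ell_p$-ball $\{u : \norm{w-u}_p \le 2d^{1/p} + d^{1/p}\} = \{u : \norm{w-u}_p \le 3d^{1/p}\}$, because any point of a unit cube is within $\ell_p$-distance $d^{1/p}$ of its corner $v$ (the cube has $\ell_p$-diameter $d^{1/p}$). Hence the number of lattice points is at most the volume of the $\ell_p$-ball of radius $3d^{1/p}$, which is $(3d^{1/p})^d \cdot \mathrm{vol}(\mathcal{B}_p^d)$. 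It is a classical fact that $\mathrm{vol}(\mathcal{B}_p^d) = (2\Gamma(1+1/p))^d / \Gamma(1 + d/p)$, and in particular $\mathrm{vol}(\mathcal{B}_p^d)^{1/d} = \Theta(d^{-1/p})$ (this is exactly the statement that the $\ell_p$-ball of radius $\Theta(d^{1/p})$ has volume $\Theta(1)^d$; it also follows crudely from the sandwich $\mathcal{B}_\infty^d \supseteq \mathcal{B}_p^d \supseteq d^{-1/p}\mathcal{B}_\infty^d$ for $p\ge 1$, giving $\mathrm{vol}(\mathcal{B}_p^d) \le 2^d$ and $\mathrm{vol}(\mathcal{B}_p^d) \ge (2/d^{1/p})^d$, which is enough to kill the $d^{1/p}$ factor). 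Therefore the count is at most $(3d^{1/p})^d \cdot \Theta(d^{-1/p})^d = O(1)^d$. The case $p=\infty$ uses the convention $d^{1/p}=1$ and is immediate: we are counting integer points in an $\ell_\infty$-ball of radius $2$, of which there are at most $5^d$.

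The main thing to be careful about — really the only obstacle — is handling $p$ uniformly across the whole range $[1,\infty]$, in particular getting the $d^{1/p}$ factors to cancel. Everything hinges on the fact that the $\ell_p$-ball of radius $d^{1/p}$ has volume that is \emph{exponential but with base $\Theta(1)$} rather than something that grows or decays superexponentially in $d$; the crude sandwiching bounds above make this self-contained and avoid invoking the $\Gamma$-function formula. An alternative, perhaps cleaner, route avoiding volumes altogether: the cubes owned by the lattice points also contain disjoint balls, or one can argue directly that the lattice points within $\ell_p$-distance $2d^{1/p}$ of $w$ are, coordinatewise, within $\ell_\infty$-distance $2d^{1/p}$, hence within $\ell_1$-distance... — but this gives a weaker bound, so I would stick with the volume/packing argument, which is the clean way to get the sharp $O(1)^d$.
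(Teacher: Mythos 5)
The paper treats this claim as a standard fact and gives no proof of its own, so there is nothing to match against; your packing/volume route is the natural one, and its skeleton is correct: the rescaling to $\Z^d$, the disjoint unit cubes, and the containment of those cubes in the $\ell_p$-ball of radius $3d^{1/p}$ (since a unit cube has $\ell_p$-diameter at most $d^{1/p}$) are all fine, as is the $p=\infty$ case. (The translation step at the start is superfluous — the count is exactly invariant under translating $x$ by a lattice vector, and your subsequent bound is for arbitrary $w$ anyway.)

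The one genuine flaw is the parenthetical claim that the sandwich $d^{-1/p}\mathcal{B}_\infty^d\subseteq\mathcal{B}_p^d\subseteq\mathcal{B}_\infty^d$ makes the key volume estimate self-contained. That sandwich gives $\mathrm{vol}(\mathcal{B}_p^d)\leq 2^d$ and $\mathrm{vol}(\mathcal{B}_p^d)\geq(2/d^{1/p})^d$, but the direction you actually need is an \emph{upper} bound of the form $\mathrm{vol}(\mathcal{B}_p^d)\leq (O(1)/d^{1/p})^d$; plugging $\mathrm{vol}(\mathcal{B}_p^d)\leq 2^d$ into your packing bound only yields a count of $(6d^{1/p})^d$, which is not $O(1)^d$ for small $p$ (e.g.\ $p=1$). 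So if one strips away the $\Gamma$-function formula, the argument as written does not close. Two easy repairs: (i) keep the classical formula $\mathrm{vol}(\mathcal{B}_p^d)=(2\Gamma(1+1/p))^d/\Gamma(1+d/p)$ and Stirling, checking uniformity in $p$ (for $p\gtrsim\log d$ one can instead just use the $\ell_\infty$ constraint directly, since then $d^{1/p}=O(1)$ and each coordinate takes $O(1)$ integer values); or (ii) avoid the formula entirely via H\"older: $\norm{z}_1\leq d^{1-1/p}\norm{z}_p$, so the $\ell_p$-ball of radius $3d^{1/p}$ is contained in the $\ell_1$-ball of radius $3d$, whose volume is $(6d)^d/d!\leq(6e)^d$ by Stirling — this gives the $O(1)^d$ bound with only the cross-polytope volume, which is elementary.
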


\section{The relative location tree}\label{sec:disptree}
In this section we prove~\Cref{thm:lpmain}, which implies the upper bound in~\Cref{thm:generalmain}, and will also serve as a stepping stone toward~\Cref{thm:eucmain}. The sketching scheme is based on a new data structure that we call~\emph{relative location tree}.

Let $X=\{x_1,\ldots,x_n\}\subset\R^d$ be a given point set endowed with the $\ell_p$-metric for a fixed $1\leq p\leq\infty$, with minimal distance $1$ and diameter $\Phi$. We assume w.l.o.g.~that $\Phi$ is an integer.
To simplify notation, we drop the subscript $p$ from $\ell_p$-norms (that is, we write $\norm{x-y}$ for $\norm{x-y}_p$). 

\subsection{Hierarchical tree construction}
We start by building a hierarchical clustering tree $T^*$ over the points $X$, by the following bottom-up process. In the bottom level, numbered $0$, every point $x_i$ forms a singleton cluster $\{x_i\}$. Level $\ell>0$ is generated from level $\ell-1$ by merging any clusters at distance less than $2^\ell$, until no such pair remains. (The distance between two clusters $C,C'\subset X$ is defined as $\mathrm{dist}(C,C')=\min_{x\in C, x'\in C'}\norm{x-x'}$.) By level $\lceil\log\Phi\rceil$, the pointset has been merged into one cluster, which forms the root of the tree. 

For every tree node $v$ in $T^*$, we denote its level by $\ell(v)$, its associated cluster by $C(v)\subset X$, its cluster diameter by $\Delta(v)$, and its degree (number of children) by $\mathrm{deg}(v)$. For every $x_i\in X$, let $\mathrm{leaf}(x_i)$ denote the tree leaf whose associated cluster is $\{x_i\}$.

Note that the nodes at each level of $T^*$ form a partition of $X$. On one hand, we have the following separation property.
\begin{claim}\label{obs:separation}
If $x_i,x_j$ are at different clusters of the partition induced by level $\ell$, then $\norm{x_i-x_j}\geq2^\ell$.
\end{claim}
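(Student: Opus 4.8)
The plan is to read the claim off directly from the stopping condition of the level-$\ell$ merging step, with essentially no computation. The single observation driving the proof is that the procedure producing the partition at level $\ell$ does not halt until every pair of distinct clusters is at $\ell_p$-distance at least $2^\ell$.

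First I would dispose of the bottom level $\ell=0$ as a special case: there the clusters are the singletons $\{x_i\}$, and any two distinct singletons $\{x_i\},\{x_j\}$ have $\mathrm{dist}(\{x_i\},\{x_j\})=\norm{x_i-x_j}\geq 1=2^0$, since by assumption the point set $X$ has minimal distance $1$. For $\ell\geq 1$, the partition induced by level $\ell$ is obtained from the level-$(\ell-1)$ partition by repeatedly merging clusters whose mutual distance is less than $2^\ell$, and this process terminates exactly when no such pair remains; hence in the resulting partition we have $\mathrm{dist}(A,B)\geq 2^\ell$ for every pair of distinct clusters $A,B$.

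With this in hand the claim is immediate: if $x_i$ and $x_j$ lie in distinct clusters $A\ni x_i$ and $B\ni x_j$ of the level-$\ell$ partition, then, since $\mathrm{dist}(A,B)=\min_{x\in A,\,x'\in B}\norm{x-x'}\leq\norm{x_i-x_j}$ and $\mathrm{dist}(A,B)\geq 2^\ell$, we conclude $\norm{x_i-x_j}\geq 2^\ell$.

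The only (very mild) subtlety to flag is that the merging step is phrased non-deterministically (``merging any clusters\ldots until no such pair remains''), so one should note that the argument is insensitive to the order in which merges are performed: whatever partition the process outputs, it must satisfy the termination condition, and that is the only property used. (One can in fact check that the output is order-independent --- it is the partition into connected components of the graph on level-$(\ell-1)$ clusters joined whenever their distance is below $2^\ell$, using that merging clusters only decreases inter-cluster distances --- but this stronger statement is not needed for the present claim.) So I expect no real obstacle here; the proof is a one-line consequence of the construction.
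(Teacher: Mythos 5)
Your proof is correct and is exactly the argument the paper has in mind (the paper states this as an unproven observation following directly from the stopping condition of the level-$\ell$ merging): distinct clusters at level $\ell$ are at distance at least $2^\ell$ by construction, and the cluster distance lower-bounds $\norm{x_i-x_j}$. Your extra remarks about level $0$ and order-independence of the merging are fine but not needed.
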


On the other hand, we have the following global bound on the cluster diameters.

\begin{lemma}\label{lmm:treebound}
$\sum_{v\in T^*}2^{-\ell(v)}\Delta(v) \leq 4n$.
\end{lemma}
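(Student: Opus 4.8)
The plan is to bound each cluster diameter $\Delta(v)$ by a sum of short edge lengths, and then charge these edges across the tree so that each edge is paid for only $O(1)$ times. Concretely, I would first build a single spanning tree $S$ of $X$ in Kruskal-like fashion, respecting the hierarchy. Process the levels $\ell=1,2,\ldots$ in order; whenever level $\ell$ merges a family of level-$(\ell-1)$ clusters $A_1,\ldots,A_m$ into one level-$\ell$ cluster, add $m-1$ edges that connect $A_1,\ldots,A_m$ into a tree, where each added edge is a pair of points $(x,x')$ with $x\in A_i$, $x'\in A_j$ realizing $\mathrm{dist}(A_i,A_j)<2^\ell$. Such a choice exists because the level-$(\ell-1)$ clusters that get merged into a single level-$\ell$ cluster are exactly a connected component of the graph on level-$(\ell-1)$ clusters with adjacency ``distance ${}<2^\ell$'' (this is what ``merging \dots until no such pair remains'' means). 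The counts $m-1$ telescope over all merges to $n-1$, so $S$ is a spanning tree of $X$. Assign to each edge $e$ the level $\ell(e)$ at which it was added; by construction $\norm{e}<2^{\ell(e)}$.

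Next I would show that for every node $v$ of $T^*$, the induced subgraph $S[C(v)]$ is a spanning tree of $C(v)$, and all its edges have level at most $\ell(v)$. This follows by induction up the tree: each child's induced subgraph is already a spanning tree of the child's cluster with edges of level $<\ell(v)$, and the $\leq m-1$ new edges of level $\ell(v)$ glue these subtrees into a spanning tree of $C(v)$. Consequently, for any $x,x'\in C(v)$ the unique path between them in $S[C(v)]$ certifies $\norm{x-x'}\leq\sum_{e\in S[C(v)]}\norm{e}$, and therefore $\Delta(v)\leq\sum_{e\in S[C(v)]}\norm{e}$.

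Plugging this into the target sum and swapping the order of summation, $\sum_{v\in T^*}2^{-\ell(v)}\Delta(v)\leq\sum_{e\in S}\norm{e}\sum_{v:\,e\in S[C(v)]}2^{-\ell(v)}$. For a fixed edge $e$, its two endpoints first lie in a common cluster at level $\ell(e)$, so the nodes $v$ with $e\in S[C(v)]$ are precisely the ancestors in $T^*$ (including itself) of that level-$\ell(e)$ node; these nodes have distinct levels, all at least $\ell(e)$. Hence $\sum_{v:\,e\in S[C(v)]}2^{-\ell(v)}\leq\sum_{j\geq\ell(e)}2^{-j}=2^{1-\ell(e)}$. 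Combining with $\norm{e}<2^{\ell(e)}$ yields $\sum_{v\in T^*}2^{-\ell(v)}\Delta(v)\leq\sum_{e\in S}2^{\ell(e)}\cdot2^{1-\ell(e)}=2|S|=2(n-1)<4n$.

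The steps requiring the most care are the ones that exploit the laminar structure: that the clusters merged at a given level induce a connected adjacency graph (so the gluing edges exist and $S$ comes out a genuine spanning tree), and that $S[C(v)]$ is exactly a spanning tree of $C(v)$ containing no stray high-level edges --- this is what localizes each edge $e$ to the root-path above its ``birth'' node and collapses the inner sum to a geometric series with ratio governed by $\ell(e)$. Note also that it is essential to use the fine estimate $\Delta(v)\leq\sum_{e\in S[C(v)]}\norm{e}$ rather than the crude $\Delta(v)\leq(\abs{C(v)}-1)2^{\ell(v)}$, since the latter would only give an $O(n\log\Phi)$ bound. The telescoping and geometric-series computations themselves are routine.
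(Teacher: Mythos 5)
Your proof is correct. It rests on the same charging mechanism as the paper's: bound each $\Delta(v)$ by the total cost of the merges performed beneath $v$ (each merge at level $\ell$ costing about $2^{\ell}$), then observe that a merge at level $\ell$ is charged by all its ancestors, whose contribution $\sum_{j\geq\ell}2^{-j}\cdot 2^{\ell}$ is a geometric series summing to $O(1)$, and that the total number of merges is $O(n)$. The packaging differs, though: the paper assigns weight $2^{\ell(u)}$ to each non-$1$-edge $uv$ of $T^*$ and proves $\Delta(v)\leq\mathrm{wt}(v)$ by a direct bottom-up induction, whereas you build an explicit Kruskal-style spanning tree $S$ of the point set, verify that $S[C(v)]$ is a spanning tree of $C(v)$ with no edges of level above $\ell(v)$, and bound $\Delta(v)$ by the total edge length of $S[C(v)]$. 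Your route requires the extra (correctly argued) structural facts that the clusters merged at level $\ell$ form a connected component under the ``distance $<2^{\ell}$'' adjacency and that no high-level edge of $S$ has both endpoints inside $C(v)$; in exchange, it uses the true edge lengths $\norm{e}<2^{\ell(e)}$ and only $n-1$ merge edges, giving the slightly sharper constant $2(n-1)$ in place of $4n$ (the paper only needs, and only claims, $4n$). The paper's version is a bit shorter because the weight function lives directly on $T^*$ and no auxiliary spanning tree or restriction lemma is needed, but both arguments are sound and interchangeable here.
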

\begin{proof}
We write $uv$ to denote an edge from a parent $u$ to a child $v$.
We call it a~\emph{$1$-edge} if $\deg(u)=1$, and a~\emph{non-$1$-edge} otherwise.
Note that since $T^*$ has $n$ leaves, it has at most $2n$ non-$1$-edges.
We define edge weights and node weights in $T^*$ as follows. 
The weight of an edge $uv$ is $\mathrm{wt}(uv)=0$ if $uv$ is a $1$-edge, and $\mathrm{wt}(uv)=2^{\ell(u)}$ otherwise.
The weight of a node $v$, denoted $\mathrm{wt}(v)$, is the sum of all edge weights in the tree under $v$ (that is, $\mathrm{wt}(v)=\sum_{uu'}\mathrm{wt}(uu')$ where the sum is over all edges $uu'$ such that $u$ is reachable from $v$ by a downward path in $T^*$).

We argue that $\Delta(v)\leq\mathrm{wt}(v)$ for every node $v$. This is seen by bottom-up induction on $T^*$. 
In the base case $v$ is a leaf, and then $\Delta(v)=\mathrm{wt}(v)=0$.
For the induction step, fix a node $u$ and consider two cases.
In the first case, $u$ has degree $1$ and a single outgoing $1$-edge $uv$. Then $C(u)=C(v)$ by the tree construction, and $\mathrm{wt}(u)=\mathrm{wt}(v)$ since $\mathrm{wt}(uv)=0$, thus the claim follows by induction.
In the second case $u$ has multiple outgoing edges $\{uv_i:i=1,\ldots,k\}$.
Since $\{C(v_i):i=1,\ldots,k\}$ is a partition of $C(u)$, the diameter $\Delta(u)$ is upper-bounded by $\sum_{i=1}^k(\Delta(v_i)+\mathrm{dist}(C(v_i),C(u)\setminus C(v_i))$.
By induction, $\Delta(v_i)\leq\mathrm{wt}(v_i)$ for every $i$.
By the tree construction, $\mathrm{dist}(C(v_i),C(u)\setminus C(v_i))\leq 2^{\ell(u)} = \mathrm{wt}(uv_i)$. Together, $\Delta(u) \leq \sum_{i=1}^k(\mathrm{wt}(v_i)+\mathrm{wt}(uv_i))=\mathrm{wt}(u)$, as needed.

Consequently, it now suffices to prove the bound $\sum_{v\in T}2^{-\ell(v)}\mathrm{wt}(v)\leq 4n$.
To this end we count the contribution of each edge to the sum.
A $1$-edge has no contribution since its weight is $0$.
For a non-$1$-edge $uv$, let $u_0=u$, and let $u_i$ be the parent of $u_{i-1}$ for all $i>0$ until the root is reached. 
Then $uv$ contributes its weight $2^{\ell(u)}$ to $\mathrm{wt}(u_i)$ for every $i\geq0$, and its total contribution is $\sum_{i\geq0}2^{-\ell(u_i)}\cdot2^{\ell(u)}$.
Since $\ell(u_i)=\ell(u)+i$, the latter sum equals $\sum_{i\geq0}2^{-i}<2$.
Since $T^*$ has at most $2n$ non-$1$-edges, the desired bound follows.
\end{proof}

\subsubsection{Path compression}
Next, we compress long non-branching paths in $T^*$.
A~\emph{$1$-path} in $T^*$ is a downward path $v_0,v_1,\ldots,v_k$ such that $v_1,\ldots,v_{k-1}$ are degree-$1$ nodes. It is called~\emph{maximal} if $v_0$ and $v_k$ are not degree-$1$ nodes ($v_k$ may be a leaf).
For every such path in $T^*$, if
\begin{equation}\label{eq:compression}
   k>\log(2^{-\ell(v_k)}\Delta(v_k)/\epsilon), 
\end{equation}
we replace the path from $v_1$ to $v_k$ with a~\emph{long edge} directly connecting $v_1$ to $v_k$.
We mark it as long and annotate it with the original path length, $k$.
The rest of the edges are called~\emph{short edges}.
Note that the right-hide side in \Cref{eq:compression} depends both on the level of $v_k$ in the tree, $\ell(v_k)$, and on the diameter of the cluster it representes in the metric space, $\Delta(v_k)$.

The tree after path compression will be denoted by $T$. 
We note that $\ell(v)$ will continue to denote the original level of $v$ in $T^*$ (or equivalently, the level in $T$ if the long edges are counted according to their lengths).

\begin{lemma}\label{lmm:logtreebound}
$\sum_{v\in T}\log\left(2^{-\ell(v)}\Delta(v)\right) \leq 4n$.
\end{lemma}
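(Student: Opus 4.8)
The plan is to obtain \Cref{lmm:logtreebound} as a one-line consequence of \Cref{lmm:treebound} via the elementary inequality $\log x \le x$. The first observation is that path compression only deletes nodes from $T^*$ — it replaces a non-branching path by a single long edge but creates no new vertices — so the vertex set of $T$ is a subset of that of $T^*$; moreover, for every surviving node $v$ the level $\ell(v)$ and the cluster $C(v)$, hence the diameter $\Delta(v)$, are exactly as in $T^*$ (indeed the text fixes $\ell(v)$ to mean the $T^*$-level throughout). Since every summand is non-negative — terms with $2^{-\ell(v)}\Delta(v)\le 1$, in particular all leaves, where $\Delta(v)=0$, contribute $0$ — it follows that $\sum_{v\in T}\log\!\left(2^{-\ell(v)}\Delta(v)\right) \le \sum_{v\in T^*}\log\!\left(2^{-\ell(v)}\Delta(v)\right)$.

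Second, I would bound each term of the right-hand sum using $\log_2 x \le x$, which holds for all $x>0$ (equivalently $x\le 2^x$ for $x\ge 0$): thus $\log\!\left(2^{-\ell(v)}\Delta(v)\right) \le 2^{-\ell(v)}\Delta(v)$ for every node $v$. Summing over $v\in T^*$ and applying \Cref{lmm:treebound} yields
\[ \sum_{v\in T^*}\log\!\left(2^{-\ell(v)}\Delta(v)\right) \;\le\; \sum_{v\in T^*}2^{-\ell(v)}\Delta(v) \;\le\; 4n, \]
which is precisely the claim, with the same constant $4$ carried over unchanged.

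The only point requiring care — and the sole obstacle, such as it is — is the interpretation of negative summands: when $2^{-\ell(v)}\Delta(v)<1$ the logarithm is negative (and formally $-\infty$ at leaves), so the sum must be read with each term clipped below at $0$, i.e.\ as $\sum_v\max\{0,\log(2^{-\ell(v)}\Delta(v))\}$. Since $\max\{0,\log x\}\le x$ still holds for every $x>0$ (and equals $0\le 0$ when $x=0$), the two displayed inequalities above are unaffected and the argument goes through verbatim. Beyond this bookkeeping, nothing further is needed; in particular one does not even need to re-examine the path-compression rule in \Cref{eq:compression}, only the fact that it removes rather than adds nodes.
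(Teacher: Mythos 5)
Your proof is correct and follows essentially the same route as the paper: every node of $T$ appears in $T^*$ with the same $\ell(v)$ and $\Delta(v)$, so the bound reduces to $\log z \le z$ applied termwise together with \Cref{lmm:treebound}. Your extra remark about clipping the negative (or $-\infty$) summands is a harmless bookkeeping point that the paper glosses over and does not change the argument.
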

\begin{proof}
Follows from~\Cref{lmm:treebound} since every node $v$ in $T$ is also present in $T^*$ (with the same level $\ell(v)$ and associated cluster diameter $\Delta(v)$), and since $\log(z)<z$ for all $z\in\R$.
\end{proof}

\begin{lemma}\label{clm:compressed_tree}
$T$ has at most $2n(2+\log(1/\epsilon))$ nodes.
\end{lemma}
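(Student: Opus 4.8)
The plan is to bound separately the number of "structural" nodes (branching nodes and leaves, together with the endpoints of long edges) and the number of degree-$1$ nodes lying on short $1$-paths. Since $T$ has $n$ leaves, it has at most $n-1$ branching nodes (nodes of degree $\ge 2$), so the total number of leaves and branching nodes is at most $2n-1$. Every maximal $1$-path in $T$ runs between two such structural nodes (its top endpoint $v_0$ is branching or the root, and its bottom endpoint $v_k$ is branching or a leaf), so there are at most $O(n)$ maximal $1$-paths; I will need a clean count, e.g.\ at most $2n$ of them. The remaining work is to bound the total number of internal degree-$1$ nodes summed over all these paths.

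The key step is to use the path-compression rule. After compression, a maximal $1$-path $v_0,v_1,\dots,v_k$ that was \emph{not} replaced by a long edge satisfies the negation of \Cref{eq:compression}, i.e.\ $k \le \log\!\left(2^{-\ell(v_k)}\Delta(v_k)/\epsilon\right) = \log(1/\epsilon) + \log\!\left(2^{-\ell(v_k)}\Delta(v_k)\right)$. For a path that \emph{was} replaced, only the two endpoints $v_1$ and $v_k$ survive in $T$ (the single long edge contributes no internal nodes), so its contribution is $O(1)$ per path. Hence the number of degree-$1$ nodes on each surviving short $1$-path is at most $\log(1/\epsilon) + \log\!\left(2^{-\ell(v_k)}\Delta(v_k)\right) + O(1)$, where $v_k$ is the bottom endpoint of that path.

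Summing over all maximal $1$-paths: the $\log(1/\epsilon)$ term contributes at most (number of paths)$\,\cdot \log(1/\epsilon) = O(n\log(1/\epsilon))$; the $O(1)$ terms contribute $O(n)$; and the terms $\log\!\left(2^{-\ell(v_k)}\Delta(v_k)\right)$ are each attached to a distinct node $v_k$ of $T$, so by \Cref{lmm:logtreebound} their sum is at most $4n$. Combining with the $2n-1$ leaves and branching nodes and tracking the constants carefully gives the claimed bound $2n(2+\log(1/\epsilon))$. The main obstacle I anticipate is purely bookkeeping: pinning down the exact constant (the number of maximal $1$-paths, whether to count $v_0$ or $v_k$, handling the root and the long-edge endpoints without double-counting) so that the final constant comes out to exactly $2n(2+\log(1/\epsilon))$ rather than some larger $O(n\log(1/\epsilon))$; one should be slightly careful that $\log\!\left(2^{-\ell(v_k)}\Delta(v_k)\right)$ can be as small as $\log 0 = -\infty$ for leaves (singleton clusters), but then $\Delta(v_k)=0$ forces $k\le 0$ by \Cref{eq:compression}, so such a path has no internal nodes and contributes nothing, which is consistent with the bound.
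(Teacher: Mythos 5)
Your proposal is correct and follows essentially the same route as the paper: charge the degree-$1$ nodes of each maximal $1$-path to its bottom endpoint, bound each surviving path's length by $\log\bigl(2^{-\ell(v_k)}\Delta(v_k)\bigr)+\log(1/\epsilon)$ via the negation of \Cref{eq:compression}, and sum using \Cref{lmm:logtreebound} together with the fact that there are at most $2n$ nodes of degree $\neq 1$. The only difference is bookkeeping of constants, which the paper resolves exactly as you anticipate.
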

\begin{proof}
We charge the degree-$1$ nodes on every maximal $1$-path in $T$ to the bottom node of the path.
The total number of nodes in $T$ can then be written as $\sum_{v:\mathrm{deg}(v)\neq1}k(v)$, where $k(v)$ is the length of the maximal $1$-path whose bottom node is $v$.
Due to path compression, we have $k(v)\leq\log(2^{-\ell(v)}\Delta(v))+\log(1/\epsilon)$.
Since $T$ has $n$ leaves, it has at most $2n$ nodes whose degree is not $1$, so the total contribution of the second term is at most $2n\log(1/\epsilon)$. For the total contribution of the first term, we need to show
$\sum_{v:\mathrm{deg}(v)\neq1}\log\left(2^{-\ell(v)}\Delta(v)\right) \leq 4n$.
This is given by~\Cref{lmm:logtreebound}.
\end{proof}

\subsubsection{Subtrees}
We partition $T$ into~\emph{subtrees} by removing the long edges.
Let $\mathcal F(T)$ denote the set of resulting subtrees. 
Furthermore let $\mathcal{L}(T)$ denote the set of nodes of $T$ which are leaves of subtrees in $\mathcal{F}(T)$. 
Note that a node in $\mathcal{L}(T)$ is either a leaf in $T$ or the top node of a long edge in $T$.
These nodes are special in that they represent clusters whose diameter can be bounded individually.

\begin{lemma}\label{lmm:subtree_leaf}
For every $u\in\mathcal{L}(T)$, $\Delta(u)\leq2^{\ell(u)}\epsilon$.
\end{lemma}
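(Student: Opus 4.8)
The plan is to use the dichotomy noted just before the statement: a node $u\in\mathcal{L}(T)$ is either a leaf of $T$, or the top endpoint of a long edge of $T$. The first case is immediate: path compression never deletes leaves, so a leaf of $T$ is also a leaf of $T^*$; hence $C(u)$ is a singleton, $\Delta(u)=0$, and the inequality holds trivially.

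The substantive case is when $u$ is the top endpoint of a long edge, say with bottom endpoint $w$. By construction this long edge arose from compressing some maximal $1$-path $v_0,v_1,\dots,v_k$ of $T^*$ that satisfies \eqref{eq:compression}, with $u=v_1$ and $w=v_k$. First I would observe that the cluster is constant along a $1$-path: each of $v_1,\dots,v_{k-1}$ has a single child, so $C(v_i)=C(v_{i+1})$ for $1\le i\le k-1$, and therefore $C(u)=C(v_1)=C(v_k)=C(w)$; in particular $\Delta(u)=\Delta(w)$. Second, I would record the level relation along the path: descending one edge decreases the $T^*$-level by one, so $\ell(w)=\ell(v_k)=\ell(v_1)-(k-1)=\ell(u)-(k-1)$. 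Now I plug into the compression criterion \eqref{eq:compression}, namely $k>\log\!\big(2^{-\ell(w)}\Delta(w)/\epsilon\big)$, which rearranges to $\Delta(w)<\epsilon\,2^{k}\,2^{\ell(w)}$; substituting $2^{\ell(w)}=2^{\ell(u)-(k-1)}$ yields $\Delta(u)=\Delta(w)<\epsilon\,2^{k}\cdot 2^{\ell(u)-(k-1)}=O(\epsilon)\cdot 2^{\ell(u)}$, and with the precise normalization of \eqref{eq:compression} (and the convention for the long-edge length) this gives exactly $\Delta(u)\le 2^{\ell(u)}\epsilon$.

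I do not expect a genuine obstacle here: the compression rule \eqref{eq:compression} was designed precisely to force this bound, so the argument is essentially unwinding definitions. The one place that requires care is the level bookkeeping — remembering that $\ell(\cdot)$ always denotes the \emph{original} level in $T^*$ (equivalently, the level in $T$ when long edges are counted according to their lengths), and correctly matching the endpoints $u,w$ of the long edge to the endpoints $v_1,v_k$ of the compressed path. Getting these relationships off by one is the only real risk in the proof.
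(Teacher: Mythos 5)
Your proof is correct and follows essentially the same route as the paper's: the leaf case is trivial, and for a long edge you use that $C(u)=C(v_k)$ along the compressed $1$-path (so $\Delta(u)=\Delta(v_k)$) and then unwind \eqref{eq:compression} against the level gap across the edge, which is exactly the paper's argument. The leftover factor of $2$ in your calculation (level gap $k-1$ versus path length $k$) is an off-by-one the paper's own proof also glosses over, and it is immaterial since constants in $\epsilon$ are rescaled anyway.
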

\begin{proof}
If $u$ is a leaf in $T$ then $C(u)$ contains a single point, thus $\Delta(u)=0$, and the lemma holds.
Otherwise, $u$ is the top node of a long edge in $T$. 
Let $v$ be the bottom node of that edge.
By path compression, the long edge represents a $1$-path of length at least $\log(2^{-\ell(v)}\Delta(v)/\epsilon)$ (see \Cref{eq:compression}), hence $\ell(u)\geq\ell(v)+\log(2^{-\ell(v)}\Delta(v)/\epsilon)$, and hence
$2^{\ell(u)}\geq2^{\ell(v)+\log(2^{-\ell(v)}\Delta(v)/\epsilon)}=\epsilon^{-1}\Delta(v)$.
Since no clusters are merged along a $1$-path, we have $C(u)=C(v)$, hence $\Delta(u)=\Delta(v)$, and the lemma follows.
\end{proof}

\subsection{Tree annotations: Centers, ingresses, and surrogates}\label{sec:surrogates}
We now augment $T$ with the following annotations, which would  efficiently encode information on the location of its clusters. Each cluster in the tree is represened by one of its points, chosen largely arbitrarily, called its \emph{center}. The center location is stored using the approximate displacement from a nearby cluster center (already stored by induction), called its \emph{ingress}. The approximate location of the center is called its \emph{surrogate}.

\subsubsection{Centers}
For every node $v$ in $T$ we choose a~\emph{center} from the points in its cluster $C(v)$, in a bottom-up manner, as follows. 
For a leaf $v=\mathrm{leaf}(x_i)$, let $c(v)=i$. 
For a non-leaf $v$ with children $u_1,\ldots,u_k$, let $c(v)=\min\{c(u_i):i\in[k]\}$. 
The point $x_{c(v)}$ is the center of $v$.

\subsubsection{Ingresses} 
Next, for every node $u$ in $T$ we assign an~\emph{ingress} node, denoted $in(v)$.
Intuitively, the ingress is a node in $T$ such that $x_{c(in(v))}$ is close to $x_{c(v)}$, and our eventual purpose is to store the latter by its location relative to the former.

Before turning to the formal definition of ingresses, let us give an intuitive overview. 
The distance between $x_{c(v)}$ and $x_{c(in(v))}$ can generally be as large as $\Delta(v)+\Delta(in(v))$, since the centers are positioned arbitrarily inside their clusters. 
Since we plan to store the approximate displacement of $x_{c(v)}$ from $x_{c(in(v))}$, we would pay the log of that distance in the sketch size. Since we plan to invoke \Cref{lmm:logtreebound} to bound the total size, we can afford to pay the log-diameter of each node only once. This could create a difficulty, since we may wish to use the same node as the ingress for multiple nodes. 
Our choice of ingresses is meant to avoid this difficulty, by ensuring that $\norm{x_{c(v)}-x_{c(in(v))}}$ depends only on $\Delta(v)$ and not on $\Delta(in(v))$ (\Cref{lmm:ingress} below). To this end, once we have identified a cluster $C(v')$ nearby $C(v)$ at the same tree level, we intuitively want to choose the ingress of $v$ to be not the center of $v'$, but rather the nearest point to $v$ in $C(v')$. Call that point $x\in C(v')$, and note that $\norm{x_{c(v)}-x_{c(v')}}$ could be larger than $\norm{x_{c(v)}-x}$ by $\Delta(v')$, which is the term we are trying to avoid. Since ingresses are nodes rather than points, we want $in(v)$ to be a node whose center is $x$, ideally $\mathrm{leaf}(x)$, whose diameter is zero. This raises two technical points: One, due to the preceding path compression step, the node $\mathrm{leaf}(x)$ might not be reachable from $v'$ anymore (by short edges), so we instead use the lowest ancestor of $\mathrm{leaf}(x)$ reachable from $v'$. Two, in order to use $x_{c(in(v))}$ to approximately store the location of $c(v)$, we need to have already approximately stored $x_{c(in(v))}$, which means we need an ordering of the nodes such that each node appears after its ingress. We will argue that our somewhat involved choice of ingresses admits such an ordering.

We now formally define the ingresses. They are defined in each subtree $T'\in\mathcal F(T)$ separately. For the root $r$ of $T'$, we set $in(r)=r$ for convenience (as we will not require ingresses for subtree roots). 
Now we assign ingresses to all children of every node $v$ in $T'$, and this would take care of the rest of the nodes in $T'$. 
Let $u_1,\ldots,u_k$ be the children of $v$, such that w.l.o.g.~$c(v)=c(u_1)$.
Consider the simple graph $H_v$ whose nodes are $u_1,\ldots,u_k$, where $u_i,u_j$ are neighbors iff $\mathrm{dist}(C(u_i),C(u_j))\leq2^{\ell(v)}$.
The fact that $C(u_1),\ldots,C(u_k)$ have been merged into $C(v)$ in the tree construction means that $H_v$ is a connected graph. 
Fix an arbitrary spanning tree $\tau_v$ of $H_v$ and root it at $u_1$.
For $u_1$, the ingress is $in(u_1)=v$. For $u_i$ with $i>1$, let $u_j$ be its parent node in $\tau_v$. 
Let $x\in C(u_j)$ be the closest point to $C(u_i)$ in $C(u_j)$ (i.e., $x=\mathrm{argmin}_{x'\in C(u_j)}\min_{x''\in C(u_i)}\norm{x'-x''}$). Let $u_x\in\mathcal{L}(T)$ be the leaf of $T'$ whose cluster contains $x$. The ingress of $u_i$ is $\mathrm{in}(u_i)=u_x$. 
See~\Cref{fig:ingress} for illustration.

(Note that there is a downward path in $T$ from $u_j$ to $\mathrm{leaf}(x)$, and $u_x$ is the bottom node on that path that belongs to $T'$. Equivalently, $u_x$ is the bottom node on the path that is reachable from $u$ without traversing a long edge.)

The following lemma bounds the distance from a node center to its ingress center.

\begin{lemma}\label{lmm:ingress}
For every node $u$ in $T$, $\norm{x_{c(u)}-x_{c(in(u))}} \leq 3\cdot2^{\ell(u)} + \Delta(u)$.
\end{lemma}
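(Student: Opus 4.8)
The plan is to do a case analysis based on the structure of the ingress definition, following the two ways an ingress is assigned. Let $u$ be a node in $T$. If $u$ is the root of a subtree $T'\in\mathcal F(T)$, then $in(u)=u$, so the left-hand side is $0$ and the bound holds trivially. Otherwise $u$ is a child of some node $v$ in its subtree, with siblings $u_1,\ldots,u_k$ (where $c(v)=c(u_1)$), and we split into the two cases of the definition.

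\emph{Case 1: $u=u_1$.} Here $in(u)=v$, and since $c(v)=c(u_1)=c(u)$, we have $x_{c(u)}=x_{c(in(u))}$, so the distance is $0$ and we are done.

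\emph{Case 2: $u=u_i$ for some $i>1$.} Let $u_j$ be the parent of $u_i$ in the spanning tree $\tau_v$ of $H_v$, let $x\in C(u_j)$ be the closest point in $C(u_j)$ to $C(u_i)$, and let $u_x\in\mathcal L(T)$ be the ingress, so $x_{c(u_x)}$ is a specific point of the cluster containing $x$. First I would bound $\norm{x_{c(u)}-x}$ using the triangle inequality through a nearest point in $C(u_i)$: since $u_i,u_j$ are adjacent in $H_v$ we have $\mathrm{dist}(C(u_i),C(u_j))\le 2^{\ell(v)}$, and by definition of $x$ this distance is achieved at $x$, so there is a point $y\in C(u_i)$ with $\norm{y-x}\le 2^{\ell(v)}$; then $\norm{x_{c(u)}-x}\le \norm{x_{c(u)}-y}+\norm{y-x}\le \Delta(u_i)+2^{\ell(v)}$, since both $x_{c(u)}$ and $y$ lie in $C(u_i)$. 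Next I would bound $\norm{x-x_{c(u_x)}}$: both points lie in the cluster $C(u_x)$ (indeed $x\in C(u_x)$ by the choice of $u_x$ as the lowest ancestor of $\mathrm{leaf}(x)$ still in $T'$), so $\norm{x-x_{c(u_x)}}\le \Delta(u_x)$. But $u_x\in\mathcal L(T)$, so by \Cref{lmm:subtree_leaf}, $\Delta(u_x)\le 2^{\ell(u_x)}\epsilon$; and $\ell(u_x)\le \ell(u_j)=\ell(v)$ since $u_x$ is a descendant of $u_j$ (or equals $u_j$). Combining, $\norm{x_{c(u)}-x_{c(in(u))}}\le \Delta(u_i)+2^{\ell(v)}+2^{\ell(v)}\epsilon \le \Delta(u)+2\cdot 2^{\ell(v)}$ (using $\epsilon<1$). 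Since $u_i$ is a child of $v$, we have $\ell(v)\ge \ell(u)+1$... wait, actually $\ell(v)$ could equal $\ell(u)$ only if along a compressed long edge, but $v$ and its children sit inside one subtree connected by short edges, so $\ell(v)=\ell(u)+1$ for short-edge children; in any case $2^{\ell(v)}\le 2\cdot 2^{\ell(u)}$ when $\ell(v)=\ell(u)+1$, giving $2\cdot 2^{\ell(v)}\le 4\cdot 2^{\ell(u)}$, which is slightly weaker than the claimed $3\cdot 2^{\ell(u)}$. I would therefore sharpen the bound on $\norm{x_{c(u)}-x}$: rather than passing through an arbitrary nearest point $y\in C(u_i)$ and paying $\Delta(u_i)$ separately from $2^{\ell(v)}$, I expect one can route through the center more carefully, or observe that the net-rounding slack of $2^{\ell(v)}\epsilon$ can be absorbed, or use that $\ell(u_x)$ is strictly below $\ell(v)$ so $2^{\ell(u_x)}\epsilon \le 2^{\ell(v)-1}\epsilon \le 2^{\ell(u)}$. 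The main obstacle is exactly this constant-chasing: verifying that the three contributions — the diameter of $u$, the inter-cluster merge distance at level $\ell(v)$, and the leaf-cluster diameter of $u_x$ controlled via \Cref{lmm:subtree_leaf} — sum to at most $3\cdot 2^{\ell(u)}+\Delta(u)$ rather than a larger constant times $2^{\ell(u)}$. I would settle this by tracking whether $\ell(v)=\ell(u)+1$ exactly (which holds for short edges, the only edges internal to a subtree) and by bounding $2^{\ell(u_x)}\epsilon$ against $2^{\ell(u)}$ using that $u_x$ lies strictly below the child-level of $v$ together with $\epsilon<1$.
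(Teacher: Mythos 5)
Your overall route is the same as the paper's: split on whether $u$ is a subtree root, the distinguished child $u_1$ sharing the center with $v$, or a child $u_i$ with $i>1$; in the last case pass through the point $x\in C(u_j)$ realizing $\mathrm{dist}(C(u_i),C(u_j))\le 2^{\ell(v)}$, bound $\norm{x_{c(u)}-x}\le 2^{\ell(v)}+\Delta(u)$, and control $\norm{x-x_{c(in(u))}}$ via \Cref{lmm:subtree_leaf} applied to $u_x\in\mathcal L(T)$. The one genuine slip is the level bookkeeping in the last term: you write $\ell(u_x)\le\ell(u_j)=\ell(v)$, but $u_j$ is a \emph{sibling} of $u_i$, i.e., itself a child of $v$, so $\ell(u_j)=\ell(v)-1=\ell(u)$, and since $u_x$ is a descendant of $u_j$ (possibly $u_j$ itself), in fact $\ell(u_x)\le\ell(u)$. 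This misidentification is exactly what inflates your constant to $\Delta(u)+2\cdot2^{\ell(v)}=\Delta(u)+4\cdot2^{\ell(u)}$ and leaves you with an admittedly unresolved ``constant-chasing'' step at the end.

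With the corrected comparison no sharpening is needed and your argument closes as the paper's does: $\norm{x-x_{c(u_x)}}\le\Delta(u_x)\le 2^{\ell(u_x)}\epsilon\le 2^{\ell(u)}$ (using $\epsilon\le1$), and since all edges inside a subtree are short, $\ell(v)=\ell(u)+1$ exactly, so
$\norm{x_{c(u)}-x_{c(in(u))}}\le\bigl(2^{\ell(v)}+\Delta(u)\bigr)+2^{\ell(u)}=3\cdot2^{\ell(u)}+\Delta(u)$.
Two minor corrections to your closing remarks: $u_x$ need not lie \emph{strictly} below the child level of $v$ (it can equal $u_j$ when $u_j\in\mathcal L(T)$), but at-or-below that level is all that is required; and your hesitation about whether $\ell(v)=\ell(u)+1$ is resolved exactly as you guessed, because $v$ and its children within a subtree $T'$ are joined only by short edges.
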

\begin{proof}
Fix a subtree $T'\in\mathcal F(T)$. 
If $u$ is the root of $T'$, the claim is obvious since $u=in(u)$. 
Next, using the same notation as above, we prove the claim for all children $u_1,\ldots,u_k$ of a given node $v$ in $T'$.
For $u_1$ we have $c(in(u_1))=c(v)=c(u_1)$, and the claim holds.
For $u_i$ with $i>1$,
recall that $u_j$ denotes its ancestor in $\tau_v$, and that $x$ is a point in $C(u_j)$ that realizes the distance $\mathrm{dist}(C(u_i),C(u_j))$, which is upper-bounded by $2^{\ell(v)}$.
Therefore,
\[ \norm{x_{c(u_i)}-x}\leq\mathrm{dist}(\{x\},C(u_i))+\Delta(u_i)\leq2^{\ell(v)} + \Delta(u_i) . \]
Noting that $\ell(v)=\ell(u_i)+1$, we find
\begin{equation}\label{eq:ingress_aux}
\norm{x_{c(u_i)}-x}\leq 2\cdot2^{\ell(u_i)}+\Delta(u_i).
\end{equation}
Recall that $in(u_i)$ was chosen as the leaf in $T'$ whose cluster contains $x$. 
In particular, $x_{c(in(u_i))}$ and $x$ are both contained in $C(in(u_i))$.
By~\Cref{lmm:subtree_leaf}, $\norm{x_{c(in(u_i))}-x}\leq2^{\ell(in(u_i))}$. 
Since $in(u_i)$ is a descendant of a sibling of $u_i$, we have $\ell(in(u_i))\leq \ell(u_i)$, hence $\norm{x_{c(in(u_i))}-x}\leq2^{\ell(u_i)}$. Combined with~\Cref{eq:ingress_aux}, this implies the lemma by the triangle inequality.
\end{proof}

We also record the following fact.
\begin{claim}\label{clm:ingress_level}
For every node $u$ in $T$, $\ell(in(u)) \leq \ell(u)+1$.
\end{claim}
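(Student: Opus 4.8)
The plan is to follow the case analysis in the definition of ingresses, using only the fact that levels decrease by exactly one along every \emph{short} edge, and that all edges internal to a subtree $T'\in\mathcal F(T)$ are short (the long edges are precisely the ones removed to form $\mathcal F(T)$). Consequently, for any node $v$ in a subtree $T'$ and any child $u_i$ of $v$ \emph{within $T'$}, we have $\ell(v)=\ell(u_i)+1$, and more generally any descendant of $v$ inside $T'$ has level at most $\ell(v)$ (indeed, any descendant of $v$ in all of $T$ has level $\le\ell(v)$, since long edges only decrease the level further).

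First I would dispose of the base case: if $u$ is the root $r$ of its subtree $T'$, then $in(r)=r$ by definition, so $\ell(in(u))=\ell(u)\le\ell(u)+1$. Otherwise $u$ is one of the children $u_1,\dots,u_k$ (within $T'$) of its parent node $v$, indexed as in the construction with $c(v)=c(u_1)$. If $u=u_1$, then $in(u_1)=v$, and since $vu_1$ is a short edge we get $\ell(in(u_1))=\ell(v)=\ell(u_1)+1$, which matches the claimed bound exactly (this is the only case where the ``$+1$'' is used). If $u=u_i$ with $i>1$, then by definition $in(u_i)=u_x$, where $u_x\in\mathcal L(T)$ is the bottom node of $T'$ on the downward path from $u_j$ to $\mathrm{leaf}(x)$, and $u_j$ is the parent of $u_i$ in the spanning tree $\tau_v$ of $H_v$ — in particular $u_j$ is a sibling of $u_i$ within $T'$, so $\ell(u_j)=\ell(u_i)$. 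Since $u_x$ lies on a downward path from $u_j$, it is a descendant of $u_j$ (possibly equal to it), hence $\ell(in(u_i))=\ell(u_x)\le\ell(u_j)=\ell(u_i)\le\ell(u_i)+1$.

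This covers every node of $T$, establishing the claim. There is no real obstacle here: the only points requiring care are the observation that every edge internal to a subtree is short (so parent-to-child levels drop by exactly one, justifying $\ell(v)=\ell(u_1)+1$ in the $u_1$ case), and the observation that $u_x$ is a genuine descendant of the sibling $u_j$ (which also already underlies the inequality $\ell(in(u_i))\le\ell(u_i)$ invoked in the proof of \Cref{lmm:ingress}).
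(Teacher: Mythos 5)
Your proof is correct and matches the paper's argument, which is exactly the one-line case observation that the ingress is either $u$ itself, the parent of $u$, or a descendant of the parent (via the sibling $u_j$), so its level exceeds $\ell(u)$ by at most one. You simply spell out the same case analysis in more detail; no gaps.
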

\begin{proof}
The ingress is either $u$ itself, the parent of $u$ in $T$, or a descendant of the parent.
\end{proof}

\begin{figure}
    \centering
	\includegraphics[scale=0.5]{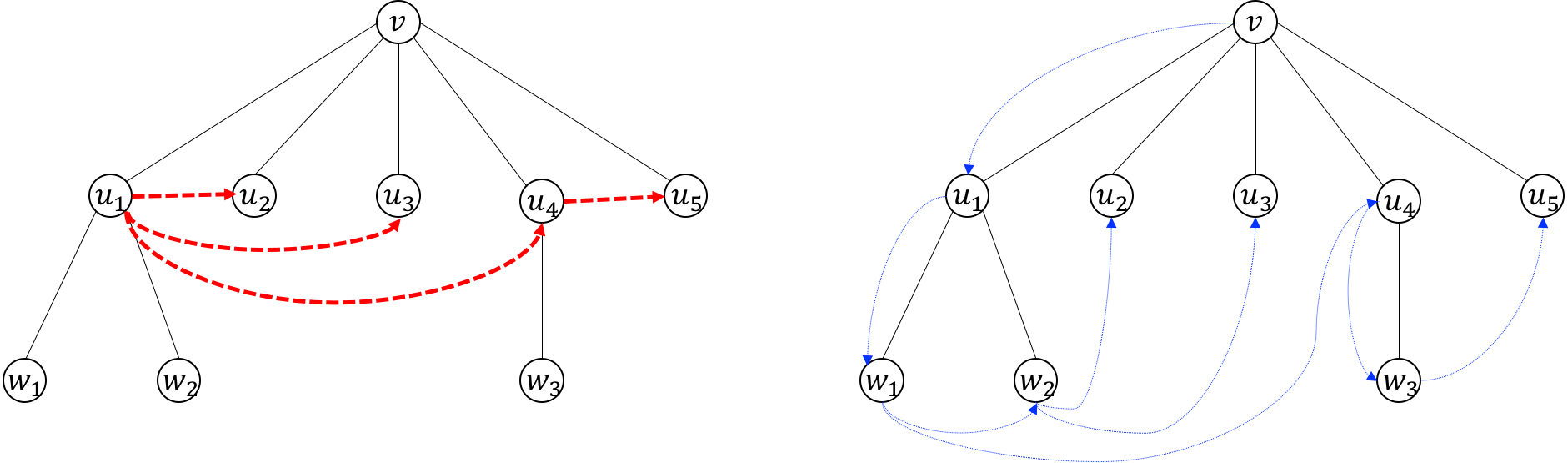}
    \caption[Ingresses]{Choice of ingresses. Left: The black tree is a subtree $T'\in\mathcal{F}(T)$, rooted at $v$, with $c(v)=c(u_1)=c(w_1)$. The dashed red arrows form a spanning tree $\tau_v$ on the children of $v$. Right: The blue dotted arrows denote the ingresses in $T'$ (each arrow source is the ingress of the arrow destination), in the case that $C(w_1)$ contains the point closest to $x_{c(u_4)}$ in $C(u_1)$ (thus $w_1$ is chosen as the ingress of $u_4$), and that $C(w_2)$ contains the point closest to $x_{c(u_2)}$ in $C(u_1)$ and the point closest to $x_{c(u_3)}$ in $C(u_1)$ (thus $w_2$ is chosen as the ingress of $u_2$ and $u_3$).}
    \label{fig:ingress}
\end{figure}

\paragraph{Ingress ordering.}
The nodes in every subtree $T'\in\mathcal F(T)$ can be ordered such that every node appears after its ingress (except the root, which is its own ingress, and would be first in the ordering). 
Such ordering is given by a depth-first scan (DFS) on $T'$, in which additionally, the children of every node $v$ are traversed in a DFS order on $\tau_v$.
Since the ingress of every non-root node is either its parent in $T'$, or a descendant of the sibling in $T'$ which is its predecessor in $\tau_v$, this ordering places every non-root after its ingress as desired. 
This will be important since the rest of the proof utilizes induction on the ingresses.

\subsubsection{Surrogates}\label{subsec:surrogates}
Now we can define the surrogates, which are meant to serve as approximate locations for the center of each tree node. 
We start by defining a~\emph{coarse surrogate} $s^*(v)$ for every node $v$ in $T$. 
They are defined in every subtree $T'\in\mathcal F(T)$ separately, by induction on the ingress order in $T'$. 
For the root $v$ of $T'$, we let $s^*(v)=x_{c(v)}$.
For a non-root $v$ in $T'$, we denote
\begin{equation}\label{eq:gamma}
\gamma(v) := \left(5+\Bigl\lceil\frac{\Delta(v)}{2^{\ell(v)}}\Bigr\rceil\right)^{-1} ,
\end{equation}
and
\begin{equation}\label{eq:eta}
\eta^*(v) = \frac{\gamma(v)}{2^{\ell(v)}}\cdot(x_{c(v)} - s^*(in(v))) .
\end{equation}
Let $\eta(v)$ be the rounding of $\eta^*(v)$ to the grid net $\mathcal N_{\gamma(v)}$ (see~\Cref{sec:compression_lp_prelims}).
By this we mean that $\eta(v)$ is obtained by rounding each coordinate of $\eta^*(v)$ to the largest smaller integer multiple of $\gamma(v)$. 
We define $s^*(v)$, by induction on $s^*(in(v))$, as
\begin{equation}\label{eq:surrogate}
s^*(v)=s^*(in(v)) + \frac{2^{\ell(v)}}{\gamma(v)}\cdot\eta(v) .
\end{equation}
The following lemma bounds the distance between a node center and its surrogate.
\begin{lemma}\label{lmm:surrogate}
For every $v$ in $T$, $\norm{x_{c(v)}-s^*(v)} \leq 2^{\ell(v)}$.
\end{lemma}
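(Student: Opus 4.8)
The plan is to prove the bound $\norm{x_{c(v)}-s^*(v)} \leq 2^{\ell(v)}$ by induction on the ingress order within each subtree $T'\in\mathcal F(T)$. The base case is the root $r$ of $T'$, where $s^*(r)=x_{c(r)}$, so the left-hand side is $0$. For the inductive step, fix a non-root node $v$ in $T'$, and let $u=in(v)$ be its ingress; by the ingress ordering, $u$ precedes $v$, so the inductive hypothesis $\norm{x_{c(u)}-s^*(u)}\leq 2^{\ell(u)}$ is available.

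The main computation is to control the rounding error. First I would unwind the definition \eqref{eq:surrogate}: we have $s^*(v)-x_{c(v)} = \bigl(s^*(u)-x_{c(v)}\bigr) + \frac{2^{\ell(v)}}{\gamma(v)}\eta(v)$. Since $\eta(v)$ is the rounding of $\eta^*(v)$ to $\mathcal N_{\gamma(v)}$, we have $\norm{\eta(v)-\eta^*(v)}\leq\gamma(v)$ (provided $\eta^*(v)$ lies in the ball $2\cdot\mathcal B_p^d$ so that the grid net actually covers it — I will need to check this, see below). Substituting the definition \eqref{eq:eta} of $\eta^*(v)$, namely $\frac{2^{\ell(v)}}{\gamma(v)}\eta^*(v) = x_{c(v)}-s^*(u)$, we get
\[
\norm{s^*(v)-x_{c(v)}} = \Bigl\lVert \tfrac{2^{\ell(v)}}{\gamma(v)}\bigl(\eta(v)-\eta^*(v)\bigr)\Bigr\rVert \leq \tfrac{2^{\ell(v)}}{\gamma(v)}\cdot\gamma(v) = 2^{\ell(v)}.
\]
So the rounding error alone already gives exactly the claimed bound, and it is clean.

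\textbf{Main obstacle.} The subtle point — the one I expect to be the real work — is verifying the precondition that $\mathcal N_{\gamma(v)}$ is a legitimate net for the point being rounded, i.e.\ that $\eta^*(v)\in 2\cdot\mathcal B_p^d$, so that the $\gamma(v)$-net guarantee $\norm{\eta(v)-\eta^*(v)}\leq\gamma(v)$ holds. By \eqref{eq:eta}, $\norm{\eta^*(v)} = \frac{\gamma(v)}{2^{\ell(v)}}\norm{x_{c(v)}-s^*(in(v))}$. To bound this I would write $\norm{x_{c(v)}-s^*(in(v))} \leq \norm{x_{c(v)}-x_{c(in(v))}} + \norm{x_{c(in(v))}-s^*(in(v))}$, bound the first term by $3\cdot 2^{\ell(v)}+\Delta(v)$ using \Cref{lmm:ingress}, and the second term by $2^{\ell(in(v))}\leq 2^{\ell(v)+1}$ using the inductive hypothesis together with \Cref{clm:ingress_level}. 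This gives $\norm{x_{c(v)}-s^*(in(v))} \leq 5\cdot 2^{\ell(v)}+\Delta(v) = 2^{\ell(v)}\bigl(5+\Delta(v)/2^{\ell(v)}\bigr) \leq 2^{\ell(v)}\bigl(5+\lceil\Delta(v)/2^{\ell(v)}\rceil\bigr) = 2^{\ell(v)}/\gamma(v)$ by the definition \eqref{eq:gamma} of $\gamma(v)$. Hence $\norm{\eta^*(v)}\leq 1 \leq 2$, so $\eta^*(v)$ is even in the unit ball, and the net covers it. (One should double-check that the chosen constant $5$ in \eqref{eq:gamma} is exactly what makes this fit, and that $p=\infty$ is handled by the $d^{1/p}=1$ convention; these are routine.) With this precondition in hand, the displayed chain above closes the induction.
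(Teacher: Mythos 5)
Your proof is correct and follows essentially the same route as the paper: induction on the ingress ordering within each subtree, using \Cref{lmm:ingress} and \Cref{clm:ingress_level} together with the inductive hypothesis to show $\norm{x_{c(v)}-s^*(in(v))}\leq 5\cdot 2^{\ell(v)}+\Delta(v)\leq 2^{\ell(v)}/\gamma(v)$, hence $\norm{\eta^*(v)}\leq 1$, so the net rounding error $\norm{\eta(v)-\eta^*(v)}\leq\gamma(v)$ applies and yields the bound $2^{\ell(v)}$. The precondition you flag as the ``main obstacle'' is exactly the step the paper verifies, so there is no gap.
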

\begin{proof}
By induction on the ingress ordering in the subtree $T'\in\mathcal F(T)$ that contains $v$.
In the base case, $v$ is the root and the claim holds trivially since $s^*(v)=x_{c(v)}$.
For a non-root $v$, we have $\norm{x_{c(in(v))}-s^*(in(v))}\leq2^{\ell(in(v))}\leq2\cdot2^{\ell(v)}$, where the first inequality is by induction on the ingress and the second is by~\cref{clm:ingress_level}.
By~\cref{lmm:ingress} we have $\norm{x_{c(v)}-x_{c(in(v))}}\leq3\cdot2^{\ell(v)}+\Delta(v)$, and together, by the triangle inequality, $\norm{x_{c(v)}-s^*(in(v))}\leq5\cdot2^{\ell(v)}+\Delta(v)$.
By~\Cref{eq:gamma,eq:eta}, this implies $\norm{\eta^*(v)}\leq1$.
Now, since $\mathcal N_{\gamma(v)}$ is a $\gamma(v)$-net for the unit ball, we have $\norm{\eta^*(v)-\eta(v)}\leq\gamma(v)$. Finally,
\begin{align*}
  \norm{x_{c(v)}-s^*(v)} &= \norm{x_{c(v)} - s^*(in(v)) - \tfrac{2^{\ell(v)}}{\gamma(v)}\cdot\eta(v)}  & \text{\Cref{eq:surrogate}} \\
  &= \norm{x_{c(v)} - s^*(in(v)) - \tfrac{2^{\ell(v)}}{\gamma(v)}\cdot(\eta^*(v)-\eta^*(v)+\eta(v))} & \\
  &= \norm{\tfrac{2^{\ell(v)}}{\gamma(v)}\cdot(\eta(v)-\eta^*(v))} & \text{\Cref{eq:eta}} \\
  &\leq 2^{\ell(v)}. & 
\end{align*}
\end{proof}

\subsubsection{Leaf surrogates}
For every subtree leaf $v\in\mathcal{L}(T)$ we also use a finer surrogate $s^*_{\epsilon}(v)$, called~\emph{leaf surrogate}.
To this end, let $\eta_\epsilon(v)$ be the rounding of $\eta^*(v)$ to the grid net $\mathcal N_{\gamma(v)\cdot\epsilon}$, where $\gamma(v)$ and $\eta^*(v)$ are the same as before. The leaf surrogate is defined as
\begin{equation}\label{eq:surrogate_leaf}
s^*_{\epsilon}(v)=s^*(in(v)) + \frac{2^{\ell(v)}}{\gamma(v)}\cdot\eta_\epsilon(v) .
\end{equation}
Note that $s^*(in(v))$ is the surrogate of $in(v)$ defined earlier (the definition of $s^*_{\epsilon}(v)$ is not inductive.)

\begin{lemma}\label{cor:surrogate_leaf}
For every $v\in\mathcal{L}(T)$, $\norm{x_{c(v)}-s^*_{\epsilon}(v)} \leq 2^{\ell(v)}\cdot\epsilon$.
\end{lemma}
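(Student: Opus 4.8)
The plan is to mirror the proof of \Cref{lmm:surrogate}, but using the finer net $\mathcal N_{\gamma(v)\cdot\epsilon}$ in place of $\mathcal N_{\gamma(v)}$, so that the rounding error shrinks by a factor of $\epsilon$. First I would observe that the crucial geometric estimate established inside the proof of \Cref{lmm:surrogate} — namely that $\norm{\eta^*(v)}\leq1$ — does not depend on the choice of net at all: it follows purely from \Cref{lmm:ingress}, \Cref{clm:ingress_level}, and the definitions \Cref{eq:gamma} and \Cref{eq:eta}, together with the inductive bound $\norm{x_{c(in(v))}-s^*(in(v))}\leq2^{\ell(in(v))}$, which is exactly the conclusion of \Cref{lmm:surrogate} applied to $in(v)$. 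So I get to reuse that bound verbatim, this time without any further induction, since $s^*_\epsilon(v)$ is defined non-inductively in terms of the already-established coarse surrogate $s^*(in(v))$.

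Next, since $\norm{\eta^*(v)}\leq1$ and $\mathcal N_{\gamma(v)\cdot\epsilon}$ is a $(\gamma(v)\cdot\epsilon)$-net of the unit ball, rounding $\eta^*(v)$ to this net gives $\norm{\eta^*(v)-\eta_\epsilon(v)}\leq\gamma(v)\cdot\epsilon$. Then I would run the same short algebraic manipulation as at the end of the proof of \Cref{lmm:surrogate}: plug in the definition \Cref{eq:surrogate_leaf} of $s^*_\epsilon(v)$, subtract and add $\eta^*(v)$ using \Cref{eq:eta} to cancel the $x_{c(v)}-s^*(in(v))$ term, and arrive at
\[ \norm{x_{c(v)}-s^*_\epsilon(v)} = \norm{\tfrac{2^{\ell(v)}}{\gamma(v)}\cdot(\eta_\epsilon(v)-\eta^*(v))} \leq \tfrac{2^{\ell(v)}}{\gamma(v)}\cdot\gamma(v)\cdot\epsilon = 2^{\ell(v)}\cdot\epsilon , \]
which is the claimed bound.

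There is essentially no obstacle here — the lemma is a direct corollary of \Cref{lmm:surrogate}, obtained by replacing the net precision $\gamma(v)$ with $\gamma(v)\cdot\epsilon$ and tracking the extra factor of $\epsilon$ through the final line. The only point worth stating carefully is that $\mathcal N_{\gamma(v)\cdot\epsilon}$ is a genuine $(\gamma(v)\cdot\epsilon)$-net of $\mathcal B_p^d$ (this is guaranteed by the construction in \Cref{sec:compression_lp_prelims}, since $\mathcal N_\gamma$ is a $\gamma$-net for every $\gamma>0$) and that $\eta^*(v)$ lies in its domain, i.e.\ $\norm{\eta^*(v)}\leq1$, which as noted above is borrowed unchanged from the proof of \Cref{lmm:surrogate}. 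One could even phrase the proof as a single sentence: ``Identical to the proof of \Cref{lmm:surrogate}, except that $\eta^*(v)$ is rounded to the net $\mathcal N_{\gamma(v)\cdot\epsilon}$ instead of $\mathcal N_{\gamma(v)}$, so the rounding error $\norm{\eta^*(v)-\eta_\epsilon(v)}$ is at most $\gamma(v)\cdot\epsilon$ rather than $\gamma(v)$, and the final bound picks up the corresponding factor of $\epsilon$.''
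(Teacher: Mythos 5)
Your proof is correct and follows essentially the same route as the paper: reuse the bound $\norm{\eta^*(v)}\leq1$ established in the proof of \Cref{lmm:surrogate}, note that rounding to the finer net $\mathcal N_{\gamma(v)\cdot\epsilon}$ incurs error at most $\gamma(v)\cdot\epsilon$, and repeat the same cancellation via \Cref{eq:eta} and \Cref{eq:surrogate_leaf}. Your observation that no new induction is needed because $s^*_\epsilon(v)$ is defined non-inductively from the already-controlled $s^*(in(v))$ matches the paper's reasoning exactly.
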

\begin{proof}
The proof of~\Cref{lmm:surrogate} showed that $\norm{\eta^*(v)}\leq1$. Hence, as $\mathcal N_{\gamma(v)\cdot\epsilon}$ is a $(\gamma(v)\cdot\epsilon)$-net for the unit ball, we have $\norm{\eta^*(v)-\eta_\epsilon(v)}\leq\gamma(v)\cdot\epsilon$. Thus,
\begin{align*}
  \norm{x_{c(v)}-s^*_{\epsilon}(v)} &= \norm{x_{c(v)} - s^*(in(v)) - \tfrac{2^{\ell(v)}}{\gamma(v)}\cdot\eta_\epsilon(v)}  & \text{\Cref{eq:surrogate_leaf}} \\
  &= \norm{x_{c(v)} - s^*(in(v)) - \tfrac{2^{\ell(v)}}{\gamma(v)}\cdot(\eta^*(v)-\eta^*(v)+\eta_\epsilon(v))} & \\
  &= \norm{\tfrac{2^{\ell(v)}}{\gamma(v)}\cdot(\eta_\epsilon(v)-\eta^*(v))} & \text{\Cref{eq:eta}} \\
  &\leq 2^{\ell(v)}\epsilon. & 
\end{align*}
\end{proof}

\subsection{Sketch size}\label{sec:thesketch}
The sketch stores the tree $T$, with the following annotations.
For each edge we store whether it is long or short, and for the long edges we store their original lengths.
For each node $v$ we store the center label $c(v)$, the ingress label $in(v)$, the precision $\gamma(v)$, and the element $\eta(v)$ of the $\gamma(v)$-net $\mathcal N_{\gamma(v)}$.
For every node $v$ in $\mathcal{L}(T)$ we also store $\eta_\epsilon(v)$, which is an element of the $(\gamma(v)\cdot\epsilon)$-net $\mathcal N_{\gamma(v)\cdot\epsilon}$.
This completes the description of the relative location tree.
We now bound the total size of the sketch.
\begin{claim}\label{clm:subtree_leaves}
$(i)$ $T$ has at most $2n$ long edges.
$(ii)$ $|\mathcal L(T)|\leq3n$.
\end{claim}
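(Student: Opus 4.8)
The plan is to prove the two parts separately, with part $(ii)$ following easily from part $(i)$. For part $(i)$, recall that a long edge in $T$ replaces a maximal $1$-path in $T^*$, and its top node $v_1$ is a child of a node $v_0$ with $\deg(v_0)\neq 1$ (since the path is maximal). So each long edge is charged to a distinct non-degree-$1$ node, namely $v_0$; more precisely, to a distinct parent-child pair $(v_0,v_1)$ where $v_1$ is a child of $v_0$ and $v_0$ is not degree-$1$. Since $T$ has $n$ leaves, it has at most $2n$ nodes of degree $\neq 1$, and hence at most $2n$ edges emanating downward from such nodes in total — wait, that counts all such edges, not just long ones; but long edges are a subset, so the bound $2n$ on long edges follows. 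Actually the cleanest phrasing: the number of long edges is at most the number of maximal $1$-paths, which is at most the number of non-degree-$1$ nodes plus the number of leaves that are bottoms of $1$-paths, all bounded by the total number of nodes of degree $\neq 1$, which is at most $2n$ since $T$ is a tree with $n$ leaves.

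For part $(ii)$, recall $\mathcal{L}(T)$ consists of two kinds of nodes: leaves of $T$, and top nodes of long edges in $T$. There are exactly $n$ leaves of $T$ (one per input point), and by part $(i)$ at most $2n$ long edges, hence at most $2n$ top nodes of long edges. Summing gives $|\mathcal{L}(T)|\leq n + 2n = 3n$.

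The main (and only mild) obstacle is the bookkeeping in part $(i)$: being careful that maximal $1$-paths are edge-disjoint so that their top edges (the long edges) do not get double-counted, and correctly bounding the number of maximal $1$-paths by a linear function of $n$. This follows because each maximal $1$-path, by definition, terminates at a node $v_k$ with $\deg(v_k)\neq 1$ (where $v_k$ may be a leaf), and these bottom nodes are distinct across distinct maximal $1$-paths; since $T$ has $n$ leaves it has at most $2n-1$ internal nodes and at most $2n$ nodes of degree $\neq 1$ altogether, giving at most $2n$ maximal $1$-paths and hence at most $2n$ long edges. Everything else is immediate.
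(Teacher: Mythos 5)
Your final argument is correct and is essentially the paper's own proof: each long edge is charged to the distinct bottom node of its maximal $1$-path, which has degree $\neq 1$, and a tree with $n$ leaves has at most $2n$ such nodes; part $(ii)$ then follows since every node of $\mathcal{L}(T)$ is either one of the $n$ leaves of $T$ or the top node of one of the at most $2n$ long edges. The only blemishes are the abandoned first attempt (charging to the node above the path's top) and the loose phrase ``at most $2n-1$ internal nodes'' (T may have many degree-$1$ internal nodes; what matters, and what you correctly use, is that nodes of degree $\neq 1$ number at most $2n$).
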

\begin{proof}
For part $(i)$, recall that the bottom node of every long edge has degree different than $1$.
Since $T$ has $n$ leaves, it may have at most $2n$ such nodes.
Part $(ii)$ follows from $(i)$ since every leaf of a subtree in $\mathcal F(T)$ is either a leaf of $T$ or the top node of a long edge.
\end{proof}

\begin{lemma}\label{lmm:sketch_size}
The total sketch size is $O(n(d+\log n)\log(1/\epsilon) + n\log\log \Phi)$ bits.
\end{lemma}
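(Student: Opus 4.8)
The plan is to bound each component of the sketch separately, using the cardinality estimates already in hand: by \Cref{clm:compressed_tree} the tree $T$ has $N:=O(n\log(1/\epsilon))$ nodes (hence fewer than $N$ edges), and by \Cref{clm:subtree_leaves} it has at most $2n$ long edges and $|\mathcal L(T)|\le 3n$.

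\textbf{Combinatorial part.} The rooted ordered topology of $T$ together with one long/short bit per edge costs $O(N)=O(n\log(1/\epsilon))$ bits. Each long edge carries its original length $k\le\lceil\log\Phi\rceil$, which a prefix-free integer encoding stores in $O(\log\log\Phi)$ bits; over the $\le 2n$ long edges (plus a one-time $O(\log\log\Phi)$ for the root level) this is $O(n\log\log\Phi)$ bits. Note that every $\ell(v)$ is then recoverable from the topology, the long-edge lengths, and the root level, so levels need not be stored per node. The labels $c(v)\in[n]$ and $in(v)$ take $O(\log n)$ bits each, contributing $O(N\log n)=O(n\log(n)\log(1/\epsilon))$ bits.

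\textbf{Net elements.} By \Cref{eq:gamma}, $\gamma(v)^{-1}=5+\lceil 2^{-\ell(v)}\Delta(v)\rceil$ is a positive integer, stored (prefix-free) in $O(\log(1/\gamma(v)))$ bits, and $\eta(v)$ is a point of $\mathcal N_{\gamma(v)}$, stored in $O(d\log(1/\gamma(v)))$ bits by \Cref{lmm:gridnet}; so this part costs $O\bigl(d\sum_{v\in T}\log(1/\gamma(v))\bigr)$ bits. The key estimate is $\sum_{v\in T}\log(1/\gamma(v))=O(n\log(1/\epsilon))$: since $\gamma(v)^{-1}\le 6\bigl(1+2^{-\ell(v)}\Delta(v)\bigr)$ we have $\log(1/\gamma(v))=O(1)+2^{-\ell(v)}\Delta(v)$, and summing over $v\in T$, the $O(1)$ terms give $O(N)$ while the diameter terms are bounded by $\sum_{v\in T^*}2^{-\ell(v)}\Delta(v)\le 4n$ via \Cref{lmm:treebound} (all its summands being nonnegative, restricting the index set from $T^*$ to $T\subseteq T^*$ only decreases the sum). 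Hence this part costs $O(nd\log(1/\epsilon))$ bits. For the leaf surrogates, \Cref{lmm:subtree_leaf} gives $2^{-\ell(v)}\Delta(v)\le\epsilon<1$ for every $v\in\mathcal L(T)$, so $\gamma(v)=\Theta(1)$ and $\eta_\epsilon(v)\in\mathcal N_{\gamma(v)\epsilon}$ costs $O(d\log(1/(\gamma(v)\epsilon)))=O(d\log(1/\epsilon))$ bits, totaling $O(nd\log(1/\epsilon))$ over the $\le 3n$ leaves. Summing the four contributions gives $O(n(d+\log n)\log(1/\epsilon)+n\log\log\Phi)$, as claimed.

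I expect the one non-routine step to be the estimate $\sum_{v\in T}\log(1/\gamma(v))=O(n\log(1/\epsilon))$: the idea is that $\gamma(v)^{-1}$ is an absolute constant unless $\Delta(v)>2^{\ell(v)}$, and in that regime $\log(1/\gamma(v))$ is swamped by $2^{-\ell(v)}\Delta(v)$ itself, so the global cluster-diameter bound of \Cref{lmm:treebound} kicks in. A second thing to be careful about is that the $n\log\log\Phi$ term must arise solely from the $O(n)$ long-edge length annotations and the root level --- not from the $N=\Theta(n\log(1/\epsilon))$ nodes --- which is exactly what the path-compression step underlying \Cref{clm:subtree_leaves} buys us.
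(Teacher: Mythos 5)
Your proof is correct and follows essentially the same route as the paper: the same component-by-component accounting, with the key estimate $\sum_{v\in T}\log(1/\gamma(v))=O(n\log(1/\epsilon))$ reduced to \Cref{lmm:treebound} (the paper packages this same reduction as \Cref{lmm:logtreebound}), and the $n\log\log\Phi$ term charged only to the $O(n)$ long-edge annotations via \Cref{clm:subtree_leaves}. The one assertion you leave unjustified is that the ingress label $in(v)$ costs $O(\log n)$ bits: since $|T|=\Theta(n\log(1/\epsilon))$ can be superpolynomial in $n$ when $\epsilon$ is extremely small, a generic pointer to a node of $T$ costs $\log n+\Theta(\log\log(1/\epsilon))$ bits, and the resulting total is not covered by the claimed bound. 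The paper closes this by observing that $in(v)$ is always either $v$ itself, its parent, or an element of $\mathcal{L}(T)$, hence one of $O(n)$ canonically indexable nodes by \Cref{clm:subtree_leaves}(ii); with that one sentence added, your argument matches the paper's, and in places (the treatment of nodes with $\Delta(v)\leq 2^{\ell(v)}$ in the $\gamma$-sum, and the $\gamma(v)=\Theta(1)$ bound for leaf surrogates via \Cref{lmm:subtree_leaf}) it is slightly more careful than the original.
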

\begin{proof}
By~\cref{clm:compressed_tree} we have $|T|=O(n\log(1/\epsilon))$.
The tree structure of $T$ can be stored with $O(|T|)$ bits by the Eulerian Tour Technique \cite{tarjan1984finding}.
The length of every long edge is bounded by the number of levels in $T$, which is $O(\log\Phi)$, and hence by~\cref{clm:subtree_leaves}(i) their total storage cost is $O(n\log\log\Phi)$ bits.
The center of each node is an integer in $[n]$, and can be encoded by $\log n$ bits.
The ingress of each node is either itself, its parent, or a leaf of a subtree in $\mathcal F(T)$, hence by~\cref{clm:subtree_leaves}(ii) it is one of $O(n)$ nodes, and can be encoded by $O(\log n)$ bits.
Together, the total storage cost of the centers and ingresses is $O(|T|\log n)$.
The total number of bits required to store the $\gamma(v)$'s is
\begin{align}\label{eq:gamma_storage}
  \sum_{v\in T}\log\left(\frac{1}{\gamma(v)}\right) &= \sum_{v\in T}\log\left(5+\Bigl\lceil\frac{\Delta(v)}{2^{\ell(v)}}\Bigr\rceil\right) \nonumber \\
  & \leq O(|T|) + \sum_{v\in T}\log\left(\frac{\Delta(v)}{2^{\ell(v)}}\right) \nonumber \\
  &\leq O(n\log(1/\epsilon)),
\end{align}
having used~\cref{lmm:logtreebound,clm:compressed_tree} for the last inequality.
Finally, for every node $v$, $\eta(v)$ is encoded as an element of $\mathcal N_{\gamma(v)}$, which by~\cref{lmm:gridnet} takes $O(d\log(1/\gamma(v)))$ storage bits.
Hence by~\cref{eq:gamma_storage}, their total storage size is $O(d)\cdot\sum_{v\in T}\log\left(\frac{1}{\gamma(v)}\right)=O(dn\log(1/\epsilon))$ bits.
For nodes in $\mathcal{L}(T)$ we also store $\eta_\epsilon(v)$, which is an element in a $(\gamma(v)\cdot\epsilon)$-net. By~\cref{lmm:gridnet}, this adds $O(d\log(1/\epsilon))$ per node, and by~\cref{clm:subtree_leaves}(ii) there are $O(n)$ such node, hence the total additional cost is $O(nd\log(1/\epsilon))$ bits.
Adding up all of the sketch components, the total sketch size is $O(n(d+\log n)\log(1/\epsilon)+n\log\log\Phi)$ bits.
\end{proof}

\subsection{Distance estimation}
We now show how to use the relative location tree to approximate the distance $\norm{x_i-x_j}$ for every pair $i,j\in[n]$.
This proves the sketch size bound in~\Cref{thm:lpmain} (running times are analyzed in the next section).
The key point is that within each subtree in $\mathcal F(T)$, we can recover the surrogates up to a fixed (unknown) shift from the sketch.

\subsubsection{Shifted surrogates}
Let $T'\in\mathcal F(T)$ be a subtree.
For every node $v$ in $T'$, we define a~\emph{shifted surrogate} $s(v)\in\R^d$ by induction on the ingress order in $T'$, as follows.
For the root $v$ of $T'$, let $s(v)=\mathbf0$ (the origin in $\R^d$).
For a non-root $v$ in $T'$, let $s(v)=s(in(v))+\frac{2^{\ell(v)}}{\gamma(v)}\cdot\eta(v)$.
For $v\in\mathcal{L}(T)$, let the~\emph{shifted leaf surrogate} be $s_{\epsilon}(v)=s(in(v)) + \frac{2^{\ell(v)}}{\gamma(v)}\cdot\eta_\epsilon(v)$.

Note that we can compute the shifted surrogates from the sketch, since it stores $in(v)$, $\gamma(v)$ and $\eta(v)$ for every node, and it also stores the lengths of the long edges, which allow us to recover $\ell(v)$.
For $v\in\mathcal{L}(T)$ we can compute the shifted leaf surrogate, since the sketch stores $\eta_\epsilon(v)$. 
Furthermore, observe that the induction step that defines the shifted surrogates is identical to the one defining the surrogates (\Cref{eq:surrogate}), and they differ only in the induction base. 
This implies,
\begin{claim}\label{clm:shifted_surrogate}
Let $v$ be a node in a subtree $T'\in\mathcal F(T)$ whose root is $r$.
Then $s(v)=s^*(v)-x_{c(r)}$. Furthermore, if $v$ is a leaf of $T'$, then $s_\epsilon(v)=s^*_\epsilon(v)-x_{c(r)}$.
\end{claim}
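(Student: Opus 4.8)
The plan is a straightforward induction on the ingress order within $T'$, mirroring the inductive definition of the surrogates $s^*(\cdot)$ in \Cref{eq:surrogate}, with the key observation that the defining recurrences for $s(\cdot)$ and $s^*(\cdot)$ have identical inductive steps and differ only in their base cases. First I would verify the base case: for the root $r$ of $T'$, by definition $s(r)=\mathbf 0$ and $s^*(r)=x_{c(r)}$, so indeed $s(r)=s^*(r)-x_{c(r)}=\mathbf0$. For the inductive step, fix a non-root node $v$ in $T'$, and assume the claim holds for $in(v)$ (which precedes $v$ in the ingress order, by the ingress ordering discussion preceding \Cref{subsec:surrogates}; note also that $in(v)$ lies in the same subtree $T'$, so it has the same root $r$). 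Then using the recurrences:
\[
  s(v) = s(in(v)) + \tfrac{2^{\ell(v)}}{\gamma(v)}\cdot\eta(v)
       = \bigl(s^*(in(v)) - x_{c(r)}\bigr) + \tfrac{2^{\ell(v)}}{\gamma(v)}\cdot\eta(v)
       = s^*(v) - x_{c(r)},
\]
where the middle equality is the induction hypothesis and the last is \Cref{eq:surrogate}. This completes the induction for all nodes of $T'$.

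For the leaf statement, let $v$ be a leaf of $T'$, so $v\in\mathcal L(T)$. By definition $s_\epsilon(v)=s(in(v))+\tfrac{2^{\ell(v)}}{\gamma(v)}\cdot\eta_\epsilon(v)$, and by definition \Cref{eq:surrogate_leaf}, $s^*_\epsilon(v)=s^*(in(v))+\tfrac{2^{\ell(v)}}{\gamma(v)}\cdot\eta_\epsilon(v)$. Here $in(v)$ is a node of $T'$, so by the part already proved $s(in(v))=s^*(in(v))-x_{c(r)}$; subtracting the two displays gives $s_\epsilon(v)=s^*_\epsilon(v)-x_{c(r)}$, as claimed. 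I would remark that no new induction is needed for this part, since both leaf-surrogate definitions are non-inductive (they refer only to $s(in(v))$, resp.\ $s^*(in(v))$, which were already handled).

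I do not anticipate a genuine obstacle here; the only point requiring a moment of care is making sure the induction is well-founded, i.e.\ that $in(v)$ really does precede $v$ in the ordering used for the induction and lies in the same subtree $T'$ — both of which are guaranteed by the construction of the ingress ordering (a DFS on $T'$ refined by DFS on each $\tau_v$) established just before \Cref{subsec:surrogates}. One should also note in passing that $\ell(v)$, $\gamma(v)$, $\eta(v)$, $\eta_\epsilon(v)$ are the same quantities in both recurrences, so the two inductions are term-by-term identical except at the root.
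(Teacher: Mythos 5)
Your proof is correct and follows essentially the same argument the paper intends: the recurrences for $s(\cdot)$ and $s^*(\cdot)$ coincide except at the root, so the constant shift $x_{c(r)}$ propagates by induction on the ingress order, and the leaf-surrogate case follows directly since its definition is non-inductive. No issues.
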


We remark that $x_{c(r)}$ cannot be recovered for the sketch. 
Indeed, there can be as many as $\Omega(n)$ subtrees in $\mathcal F(T)$, and thus storing all of their root centers could amount to fully (or at least approximately) storing $\Omega(n)$ points --- the same problem we are trying to solve.

\subsubsection{Estimation algorithm}
Given $i,j\in[n]$, we show how to return a $(1\pm\epsilon)$-approximation of $\norm{x_i-x_j}$.
Let $u_{ij}$ be the lowest common ancestor of $\mathrm{leaf}(x_i)$ and $\mathrm{leaf}(x_j)$ in $T$.
Let $T'\in\mathcal{F}(T)$ be the subtree that contains $u_{ij}$. Let $v_i$ be the leaf of $T'$ whose cluster contains $x_i$, and similarly define $v_j$ for $x_j$. 
See~\Cref{fig:estimation} for illustration.
The estimate we return is $\norm{s_\epsilon(v_i)-s_\epsilon(v_j)}$.

\begin{figure}
    \centering
    \includegraphics[scale=0.6]{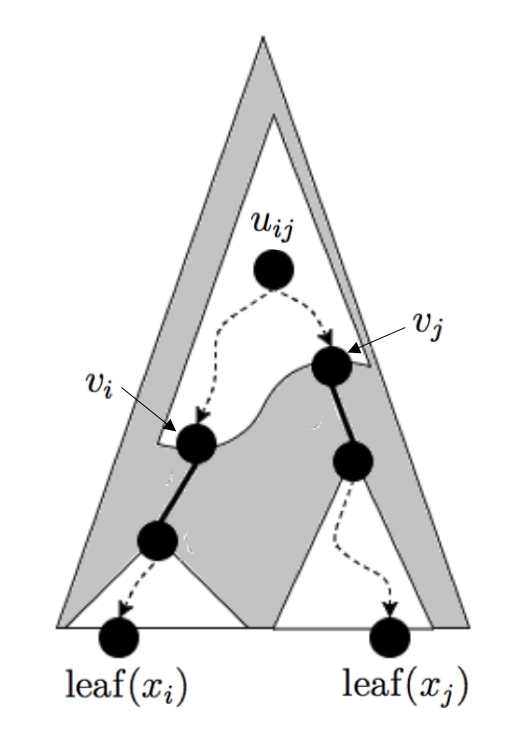}
    \caption[Distance estimation on relative location tree]{Distance estimation for $\norm{x_i-x_j}$. The external shaded triangle is the tree $T$. The white regions are subtrees. The dashed arrows are downward paths in $T$. The thick arcs are long edges. The output estimate is $\norm{s_\epsilon(v_i)-s_\epsilon(v_j)}$.}
    \label{fig:estimation}
\end{figure}

\begin{lemma}\label{lmm:estimation}
$\norm{s_\epsilon(v_i)-s_\epsilon(v_j)} = (1\pm 4\epsilon) \cdot \norm{x_i-x_j}$.
\end{lemma}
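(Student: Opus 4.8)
The plan is to show that the two shifted leaf surrogates $s_\epsilon(v_i)$ and $s_\epsilon(v_j)$, which we can recover from the sketch, both live in the \emph{same} subtree $T'$ and hence differ from the true (unshifted) leaf surrogates $s^*_\epsilon(v_i), s^*_\epsilon(v_j)$ by the \emph{same} shift $x_{c(r)}$, where $r$ is the root of $T'$. This is precisely \Cref{clm:shifted_surrogate}: since $v_i,v_j$ are both leaves of $T'$, we have $s_\epsilon(v_i)-s_\epsilon(v_j)=s^*_\epsilon(v_i)-s^*_\epsilon(v_j)$, so it suffices to prove $\norm{s^*_\epsilon(v_i)-s^*_\epsilon(v_j)}=(1\pm4\epsilon)\norm{x_i-x_j}$. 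The first step is therefore to reduce, via \Cref{clm:shifted_surrogate}, to a statement purely about the finer (leaf) surrogates, which eliminates the unknown shift from consideration.

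The second step is to control the level $\ell(v_i)$ of the subtree leaf $v_i$ in terms of $\norm{x_i-x_j}$. On one hand, $x_i\in C(v_i)$ and $x_j\notin C(v_i)$ (since $v_i$ is a proper descendant — or equal — of $u_{ij}$ in $T'$ but $x_j$ is routed through a different child at $u_{ij}$), so by the separation property \Cref{obs:separation} applied at level $\ell(v_i)$ we get $\norm{x_i-x_j}\geq 2^{\ell(v_i)}$. On the other hand, I want an upper bound on $2^{\ell(v_i)}$ of the form $2^{\ell(v_i)}=O(\norm{x_i-x_j})$ so that the surrogate error $\norm{x_{c(v_i)}-s^*_\epsilon(v_i)}\leq 2^{\ell(v_i)}\epsilon$ from \Cref{cor:surrogate_leaf} is at most $O(\epsilon)\norm{x_i-x_j}$. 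To get this, note that $v_i$ is a leaf of $T'$ sitting below $u_{ij}$, and $u_{ij}$ is in $T'$; a short ancestor chain from $v_i$ up to $u_{ij}$ stays within $T'$, so $\ell(v_i)\leq\ell(u_{ij})$. Since $u_{ij}$ is the lowest common ancestor, at level $\ell(u_{ij})-1$ the points $x_i,x_j$ lie in distinct clusters (one clustered through each child of $u_{ij}$... or more precisely $x_i,x_j$ are separated just below $u_{ij}$), so by \Cref{obs:separation}, $\norm{x_i-x_j}\geq 2^{\ell(u_{ij})-1}$, giving $2^{\ell(v_i)}\leq 2^{\ell(u_{ij})}\leq 2\norm{x_i-x_j}$. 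The same bounds hold for $v_j$ by symmetry.

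The third step assembles everything by the triangle inequality. We have $\norm{x_{c(v_i)}-s^*_\epsilon(v_i)}\leq 2^{\ell(v_i)}\epsilon\leq 2\epsilon\norm{x_i-x_j}$ and likewise for $j$. Also $x_i\in C(v_i)$ and $x_{c(v_i)}\in C(v_i)$, so $\norm{x_i-x_{c(v_i)}}\leq\Delta(v_i)$; by \Cref{lmm:subtree_leaf}, $\Delta(v_i)\leq 2^{\ell(v_i)}\epsilon\leq 2\epsilon\norm{x_i-x_j}$, and similarly $\norm{x_j-x_{c(v_j)}}\leq 2\epsilon\norm{x_i-x_j}$. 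Chaining $x_i\to x_{c(v_i)}\to s^*_\epsilon(v_i)$ and $x_j\to x_{c(v_j)}\to s^*_\epsilon(v_j)$, the total deviation between $\norm{x_i-x_j}$ and $\norm{s^*_\epsilon(v_i)-s^*_\epsilon(v_j)}$ is at most the sum of the four error terms, i.e. $\leq 8\epsilon\norm{x_i-x_j}$ — so one must double-check the constants and possibly it yields $(1\pm4\epsilon)$ exactly as claimed after a slightly more careful accounting (e.g. combining the $\Delta$ and surrogate-error bounds, or using $\norm{x_{c(v_i)}-s^*_\epsilon(v_i)}$ together with $\norm{x_i-x_{c(v_i)}}$ more tightly). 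The constant bookkeeping is the one place to be careful, but it is routine.

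The main obstacle I anticipate is not any single inequality but correctly justifying the structural claim that $v_i$ and $v_j$ are both leaves of the \emph{same} subtree $T'$ and both lie weakly below $u_{ij}$ within $T'$ — in particular that the downward paths from $u_{ij}$ toward $\mathrm{leaf}(x_i)$ and $\mathrm{leaf}(x_j)$ do not cross a long edge before reaching $v_i$ and $v_j$ respectively, which is exactly the definition of $v_i$ (the leaf of $T'$ whose cluster contains $x_i$) and ensures $\ell(v_i),\ell(v_j)\le\ell(u_{ij})$. Once that is pinned down, together with the two applications of the separation property (\Cref{obs:separation}) at levels $\ell(v_i)$ and $\ell(u_{ij})-1$, the rest is the triangle-inequality computation using \Cref{cor:surrogate_leaf} and \Cref{lmm:subtree_leaf}.
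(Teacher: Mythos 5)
Your proposal is correct and follows essentially the same route as the paper: reduce to the unshifted leaf surrogates via \Cref{clm:shifted_surrogate}, bound $\norm{x_i-s^*_\epsilon(v_i)}$ by combining \Cref{lmm:subtree_leaf} and \Cref{cor:surrogate_leaf}, relate $2^{\ell(v_i)}$ to $\norm{x_i-x_j}$ through the separation property, and finish with the triangle inequality. The only wrinkle is the constant: the chain you assemble in your third step (via $\ell(v_i)\le\ell(u_{ij})$ and separation at level $\ell(u_{ij})-1$) gives $8\epsilon$, but your own observation in the second step already fixes this --- since $x_j\notin C(v_i)$, separation at level $\ell(v_i)$ gives $2^{\ell(v_i)}\le\norm{x_i-x_j}$ directly, so each of the four error terms is at most $\epsilon\norm{x_i-x_j}$ and you get exactly $4\epsilon$; the paper instead notes that $u_{ij}$ cannot be a leaf of $T'$, so $\ell(v_i)\le\ell(u_{ij})-1$, which yields the same constant.
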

\begin{proof}
By~\cref{clm:shifted_surrogate}, $\norm{s_\epsilon(v_i)-s_\epsilon(v_j)} = \norm{s^*_\epsilon(v_i)-s^*_\epsilon(v_j)}$.
By the triangle inequality,
\begin{equation}\label{eq:estimate_aux}
  \norm{s^*_\epsilon(v_i)-s^*_\epsilon(v_j)} = 
  \norm{x_i-x_j} \pm \left( \norm{x_i-s^*_\epsilon(v_i)} + \norm{x_j-s^*_\epsilon(v_j)} \right) .
\end{equation}
Since $v_i\in\mathcal{L}(T)$ and $x_i,x_{c(v_i)}\in C(v_i)$, we have $\norm{x_i-x_{c(v_i)}}\leq2^{\ell(v_i)}\epsilon$ by~\cref{lmm:subtree_leaf}. Combining this with~\cref{cor:surrogate_leaf} yields $\norm{x_i-s^*_\epsilon(v_i)}\leq2\cdot2^{\ell(v_i)}\epsilon$ by the triangle inequality.
Since  $u_{ij}$ cannot be a leaf in its subtree $T'$ (since then its degree would be either $0$ or $1$, contradicting its choice as the lowest common ancestor of $\mathrm{leaf}(x_i)$ and $\mathrm{leaf}(x_j)$), we have $\ell(v_i)\leq\ell(u_{ij})-1$, and thus $\norm{x_i-s^*(v_i)}\leq2^{\ell(u_{ij})}\epsilon$. 
The same holds for $x_j$, and summing these together, $\norm{x_i-s^*_\epsilon(v_i)}+\norm{x_j-s^*_\epsilon(v_j)}\leq2\cdot2^{\ell(u_{ij})}\epsilon$.
By~\cref{obs:separation} $2^{\ell(u_{ij})-1}\leq\norm{x_i-x_j}$,
and hence $\norm{x_i-s^*_\epsilon(v_i)} + \norm{x_j-s^*_\epsilon(v_j)} \leq \norm{x_i-x_j}\cdot4\epsilon$.
Plugging this into~\Cref{eq:estimate_aux} proves the lemma.
\end{proof}
Scaling $\epsilon$ by a constant, this concludes the proof of the sketch size bound in~\Cref{thm:lpmain}.

\subsection{Running times}\label{sec:runningtimes}
Starting with the sketching time, we spend $O(n^2\log\Phi)$ time constructing $T^*$. Path compression takes linear time in the size of $T^*$, which is $O(n\log\Phi)$. To define the ingresses, we need to construct the graph $H_v$ over the children of every node $v$, and find a spanning tree in it. This takes $O(k_v^2)$ time if $v$ has $k_v$ children. Since in every level $\ell$ there are up to $n$ nodes, we have $\sum_{v:\ell(v)=\ell}k_v\leq n$, and therefore the total time for level $\ell$ is $O(\sum_{v:\ell(v)=\ell}k_v^2)\leq O(n^2)$. Over $O(\log\Phi)$ levels in the tree, this too takes $O(n^2\log\Phi)$ time. 
Then, for every node $v$ we need to compute $\gamma(v)$, $\eta^*(v)$ and $\eta(v)$ in order to define the surrogates. This is involves arithmetic operations on $d$-dimensional vectors in $O(d)$ time each, as well as rounding $\eta^*(v)$ to a grid net, which by~\Cref{lmm:gridnet} takes $O(d)$ time. Since there are $O(n\log(1/\epsilon))$ nodes (\Cref{clm:compressed_tree}), this takes $O(nd\log(1/\epsilon))$ time overall. 

We proceed to the estimation time. 
Since the height of the tree is $O(\log\Phi)$, we spend that much time finding the lowest common ancestor of $\mathrm{leaf}(x_i),\mathrm{leaf}(x_j)$ and finding $v_i,v_j$. Then we need to compute the shifted leaf surrogates $s_\epsilon(v_i),s_\epsilon(v_j)$. Due to the inductive definition of the surrogates, this might require traversing the ingress ordering on the subtree backward all the way to the root. In the worst case we might traverse all nodes in $T$, which could take $\Omega(n)$ time.

To avoid this, we can augment the sketch with additional information that improves the query time without asymptotically increasing the sketch size. In particular, we explicitly store the shifted surrogates for some nodes in $T$, called~\emph{landmark nodes}. 
Let $K=\lceil\log(2\Phi\cdot d^{1/p})\rceil$.
We choose landmark nodes in each subtree $T'\in\mathcal{F}(T)$ separately, as follows: Let $T_{in}'$ be the tree that describes the ingress ordering in $T'$ (this is a tree on the nodes in $T'$ with the same root, where the parent of each node $v$ is $in(v)$). 
Start with a lowest node $v\in T_{in}'$; climb upward $K$ steps (or less if the root is reached), to a node $\hat{v}$; declare $\hat{v}$ a landmark node, remove it from $T_{in}'$ with all its descendants; iterate. Since every iteration but the last removes at least $K$ nodes from $T_{in}'$, we finish with at most $O(|T_{in}'|/K)$ landmark nodes. Summing over all subtrees, we have $O(|T|/K)$ landmark nodes in total.

For every landmark node, we explicitly store the shifted surrogate in the sketch. Note that choosing landmark nodes and computing their shifted surrogates require the same time as computing the (non-shifted) surrogates (both involve tracing the ingress ordering in each subtree and processing each node in $O(d)$ time), so they do not asymptotically change the sketching time. Furthermore, computing the shifted surrogate of a given non-landmark node can now be done in $O(dK)$ time. Thus, the total estimation time for $p<\infty$ is $O(d\log(d\Phi))$, and for $p=\infty$ (where $d^{1/p}=1$) it is $O(d\log\Phi)$.

It remains to see that storing the shifted surrogates for landmark nodes does not asymptotically increase the sketch size. To this end, note that the shifted surrogates are defined recursively, starting at $\mathbf0$, and in each step adding a vector of the form $\gamma(v)^{-1}\cdot2^{\ell(v)}\cdot\eta(v)$. Since $\eta(v)$ is an element in $\mathcal{N}_{\gamma(v)}$ --- the grid net with cell side length $\gamma(v)/d^{1/p}$ --- every coordinate of a shifted surrogate is an integer multiple of $d^{-1/p}$. 
On the other hand, we have the following:
\begin{claim}
For every node $v$ in $T$, $\norm{s(v)}\leq2\Phi$.
\end{claim}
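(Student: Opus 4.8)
The plan is to express the shifted surrogate in terms of quantities already bounded. By \Cref{clm:shifted_surrogate}, if $v$ lies in the subtree $T'\in\mathcal F(T)$ with root $r$, then $s(v)=s^*(v)-x_{c(r)}$. The two pieces $s^*(v)-x_{c(v)}$ and $x_{c(v)}-x_{c(r)}$ are exactly the surrogate error of $v$ (controlled by \Cref{lmm:surrogate}) and a displacement between two centers inside the cluster $C(r)$ (controlled by $\Delta(r)\leq\Phi$), so a single triangle inequality should finish the job.

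Concretely, I would first dispose of the trivial case $v=r$, where $s(v)=\mathbf0$ by definition. For $v\neq r$, \Cref{clm:shifted_surrogate} and the triangle inequality give
\[
  \norm{s(v)} = \norm{s^*(v)-x_{c(r)}}
  \leq \norm{s^*(v)-x_{c(v)}} + \norm{x_{c(v)}-x_{c(r)}} .
\]
For the first summand, \Cref{lmm:surrogate} yields $\norm{s^*(v)-x_{c(v)}}\leq 2^{\ell(v)}$; here I use that $v$ is a proper descendant of $r$, so $\ell(v)\leq\ell(r)-1$, and that every node of $T$ sits at level at most $\lceil\log\Phi\rceil$ (the top level of $T^*$), whence $2^{\ell(v)}\leq 2^{\lceil\log\Phi\rceil-1}\leq\Phi$. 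For the second summand, since $C(v)\subseteq C(r)$ and centers are always chosen from their own clusters, both $x_{c(v)}$ and $x_{c(r)}$ lie in $C(r)$, so $\norm{x_{c(v)}-x_{c(r)}}\leq\Delta(r)$, and $\Delta(r)\leq\Phi$ because $C(r)\subseteq X$ has diameter at most $\Phi$. Adding the two bounds gives $\norm{s(v)}\leq 2\Phi$.

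I do not expect a genuine obstacle: the argument is a two-term triangle inequality over facts already established. The only subtlety is the separate treatment of $v=r$, which is what makes the bound $2^{\ell(v)}\leq\Phi$ available --- for a non-root node of the subtree one gains the necessary factor of $2$ from $\ell(v)\leq\ell(r)-1\leq\lceil\log\Phi\rceil-1$.
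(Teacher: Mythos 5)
Your proof is correct and follows essentially the same route as the paper's: write $s(v)=s^*(v)-x_{c(r)}$ via \Cref{clm:shifted_surrogate}, apply the triangle inequality through $x_{c(v)}$, bound the surrogate error via \Cref{lmm:surrogate} and the center displacement by the diameter bound $\Phi$. Your separate handling of $v=r$ and the explicit level bound $\ell(v)\leq\lceil\log\Phi\rceil-1$ (giving $2^{\ell(v)}\leq\Phi$) is in fact a touch more careful than the paper's own write-up of the same argument.
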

\begin{proof}
Let $r$ be the root of the subtree $T'\in\mathcal{F}(T)$ that contains $v$. 
By~\Cref{clm:shifted_surrogate}, $\norm{s(v)} = \norm{s^*(v)-x_{c(r)}}$.
By the triangle inequality, $\norm{s^*(v)-x_{c(r)}} \leq \norm{s^*(v)-x_{c(v)}} + \norm{x_{c(v)}-x_{c(r)}}$.
By~\Cref{lmm:surrogate}, $\norm{s^*(v)-x_{c(r)}}\leq\Phi$. Since $\Phi$ is an upper bound on the diameter of the input point set $X$, $\norm{x_{c(v)}-x_{c(r)}}\leq\Phi$.
Together, $\norm{s(v)}\leq2\Phi$.
\end{proof}
The claim implies in particular that each coordinate of a shifted surrogate is at most $2\Phi$.
Being also an integer multiple of $d^{-1/p}$, it can be represented by $\lceil\log(2\Phi\cdot d^{1/p})\rceil=K$ bits.
Thus each shifted surrogates is stored by $O(dK)$ bits, and since we store this for $O(|T|/K)$ landmark nodes, the overall additional cost is $O(d|T|)=O(nd\log(1/\epsilon))$ (\Cref{clm:compressed_tree}), which does not asymptotically increase the sketch size.

\section{Euclidean metrics}\label{sec:eucmetrics}

In this section we prove the upper bound in~\Cref{thm:eucmain}.
We start with Johnson-Lindenstrauss dimension reduction, \Cref{thm:jl}.
By applying the theorem as a preprocessing step before our sketching algorithm, we may henceforth assume that $d=O(\epsilon^{-2}\log n)$. Since we may arbitrarily increase the dimension (by adding zero coordinates), we will also assume w.l.o.g.~that $d\geq3\epsilon^{-2}\log n$.

\subsection{Sketch augmentations}\label{sec:sketchaug}
In this section we describe the sketch. First, we compute the sketch from~\Cref{sec:disptree}, with, say, $1/2$ instead of $\epsilon$ (the choice of constant does not matter). 
Next, we add some augmentations to the sketch.

Let us give a short overview of them. Our goal is to improve the sketch size from the previous section by a factor of $\log(1/\epsilon)$ for Euclidean metrics. This factor originates in two places where the construction of the relative location tree uses deterministic rounding: (i) in rounding displacements to grid nets to define the surrogates, and (ii) in compressing long $1$-paths into long edges (effectively rounding every point in the associated cluster to the cluster center from the point of view of points outside the cluster). The two types of augmentations we now introduce essentially replace these with randomized roundings --- \emph{surrogate grid quantization} replaces (i), and \emph{long-edge grid quantization} replaces (ii). Using an inductive argument (\Cref{lmm:probsurrogates}), 
we show how to construct from them appropriate \emph{probabilistic surrogates} for every pair of points. In the Euclidean case, they can serve instead of the deterministic surrogates of the previous section, while achieving the desired sketch size.

We now formally define the sketch augmentations. 
To this end, we choose $2d$ i.i.d.~random variables uniformly over $[0,1]$, and arrange them into two vectors $\sigma',\sigma''\in\R^d$ that will serve as random shifts. (They will not be stored in the sketch, so there is no concern about the precision of their representation.) 
We will use uniform grids as defined in~\Cref{sec:compression_lp_prelims}.
By the~\emph{``bottom-left''} corner of a grid cell, we mean the point in the cell (considered as a closed set of $\R^d$) in which each coordinate is minimized. That is, the bottom-left corner of the $d$-dimensional hypercube $[a_1,b_1]\times\ldots\times[a_d,b_d]$ is $(a_1,\ldots,a_d)$.

Below, let $\sigma\in\{\sigma',\sigma''\}$. Note that $\norm{\tfrac1{\sqrt d}\sigma}\leq1$ for all supported $\sigma$. For every subtree leaf $v\in\mathcal{L}(T)$, we also store in the sketch the following information. 

\paragraph{Augmentation I: Surrogate grid quantization.}
By~\Cref{lmm:surrogate} we have $\norm{x_{c(v)}-s^*(v)}\leq2^{\ell(v)}$.
By the triangle inequality, $\norm{x_{c(v)} + \frac{1}{\sqrt d}2^{\ell(v)}\sigma - s^*(v)}\leq2\cdot2^{\ell(v)}$.
By~\Cref{fct:gridball}, the grid with cell side $\frac{1}{\sqrt{d}}2^{\ell(v)}$ has $\exp(d)$ cells intersecting the origin-centered ball of radius $2\cdot 2^{\ell(v)}$ (where we use $\exp(d)$ to denote $O(1)^d$). 
Therefore, with $O(d)$ bits we can store the bottom-left corner of the grid cell containing $x_{c(v)} + \frac{1}{\sqrt d}2^{\ell(v)}\sigma - s^*(v)$.
Since $\sigma$ is random, this bottom-left corner is a $d$-dimensional random variable, which we denote by $A_v=(A_v^1,\ldots,A_v^d)$.

\paragraph{Augmentation II: Long-edge grid quantization.}
If the subtree $T'\in\mathcal{F}(T)$ that contains $v$ also contains the root of $T$, we do not need to store additional information for $v$. Otherwise, the root of $T'$ is the bottom node of a long edge. Let $u$ be the top node of that long edge, and note that $u\in\mathcal{L}(T)$. See~\Cref{fig:augmentations} for illustration.

Since $v$ is a descendant of $u$ in $T$ we have $x_{c(v)}\in C(u)$, and hence by~\cref{lmm:subtree_leaf}, $\norm{x_{c(v)}-x_{c(u)}}\leq2^{\ell(u)}$ (recall we use a relative location tree with $\epsilon=\Omega(1)$). 
By the triangle inequality, $\norm{x_{c(v)} + \frac{1}{\sqrt d}2^{\ell(u)}\sigma - x_{c(u)}}\leq2\cdot2^{\ell(u)}$.
By~\Cref{fct:gridball}, the grid with cell side $\frac{1}{\sqrt{d}}2^{\ell(u)}$ has $\exp(d)$ cells intersecting the origin-centered ball of radius $2\cdot2^{\ell(u)}$.
Therefore, with $O(d)$ bits we can store the bottom-left corner of the grid cell containing $x_{c(v)} + \frac{1}{\sqrt d}2^{\ell(u)}\sigma - x_{c(u)}$.
Since $\sigma$ is random, this corner is a $d$-dimensional random variable, which we denote by $B_v=(B_v^1,\ldots,B_v^d)$.

\paragraph{Remark.} Note that we store each of the above augmentations twice --- once with the random shift $\sigma'$ and once with $\sigma''$. To ease notation, let us not denote them separately, and simply keep in mind that we have two independent copies of each $A_v$ and $B_v$.

\paragraph{Total sketch size.}
By~\Cref{lmm:sketch_size}, the relative location tree with $\epsilon=\Omega(1)$ is stored in $O(n(\log n+d+\log\log\Phi))$ bits. The above augmentations store $O(d)$ additional bits per node in $\mathcal{L}(T)$, of which there are $O(n)$ (\Cref{clm:subtree_leaves}), and this does not increase the sketch size asymptotically. Since $d=O(\epsilon^{-2}\log n)$ by the preceding dimension reduction step, the total sketch size is $O(\epsilon^{-2}n\log n + n\log\log\Phi)$ bits.

\begin{figure}
    \centering
    \includegraphics[scale=0.6]{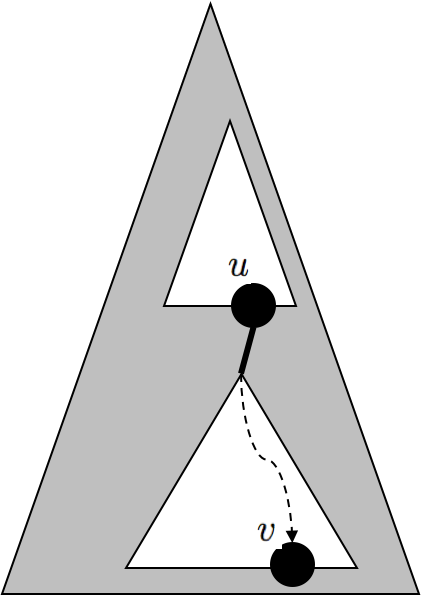}
    \caption[Euclidean sketch augmentations]{Augmentation to the Euclidean sketch. The external shaded triangle is the tree $T$. The white regions are subtrees. The thick arc is a long edge. For every subtree leaf $v$ and $\sigma\in\{\sigma'\sigma''\}$, the sketch encodes $x_{c(v)} + \frac{1}{\sqrt d}2^{\ell(v)}\sigma - s^*(v)$ in Augmentation I. If $u$ is defined for $v$, then the sketch also encodes $x_{c(v)} + \frac{1}{\sqrt d}2^{\ell(u)}\sigma - x_{c(u)}$ in Augmentation II.}
    \label{fig:augmentations}
\end{figure}

\subsection{Probabilistic surrogates}
We now show how to recover, for every point $x\in X$, a random variable that would serve as a probabilistic (shifted) surrogate. 
For the two next lemmas, fix a subtree $T'\in\mathcal{F}(T)$ with root $r$.
For every $i\in[n]$ such that $x_i\in C(r)$, denote by $v_i$ the leaf of $T'$ whose cluster contains $x$. That is, $v_i$ is the lowest node on the downward path from $r$ to $\mathrm{leaf}(x_i)$ that does not traverse a long edge.

\begin{lemma}\label{lmm:probsurrogates}
Let $i\in[n]$ be such that $x_i\in C(r)$. We can recover from the sketch a $d$-dimensional random variable $X_i=(X_i^1,\ldots,X_i^d)\in\R^d$, such that:
\begin{itemize}
  \item Its coordinates are independent.
  \item Each coordinate is supported on an interval of length at most $\frac{1}{\sqrt d}3\cdot2^{\ell(v_i)}$.
  \item $\E[X_i]=x_i-x_{c(r)}$, coordinate-wise.
\end{itemize}
\end{lemma}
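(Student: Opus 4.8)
The plan is to recover $X_i$ by walking from the subtree root $r$ down to the leaf $v_i$, accumulating quantized displacements, so that the telescoping sum has expectation $x_i - x_{c(r)}$ while each step contributes an independent, boundedly-supported random vector. Concretely, note that the downward path from $r$ to $\mathrm{leaf}(x_i)$ in $T$ passes through subtree leaves of a nested sequence of subtrees, and $v_i$ is the first subtree leaf hit. I would decompose $x_i - x_{c(r)}$ as a telescoping sum along this path, but the key observation (as in the deterministic estimation of \Cref{sec:disptree}) is that it suffices to reach $v_i$: by \Cref{lmm:subtree_leaf} we have $\norm{x_i - x_{c(v_i)}} \leq 2^{\ell(v_i)}$, and the residual $x_i - x_{c(v_i)}$ is never stored but is absorbed into the support-length budget rather than the expectation. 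Actually, re-examining: we want $\E[X_i] = x_i - x_{c(r)}$ exactly, so the construction must account for $x_{c(v_i)} - x_{c(r)}$ via quantities stored relative to $s^*(\cdot)$ and $x_{c(u)}$'s along long edges, plus the term $x_i - x_{c(v_i)}$ must itself be captured. Let me reconsider which stored quantities are used.

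The cleaner route: Augmentation I stores (the bottom-left corner of the grid cell containing) $x_{c(v_i)} + \frac{1}{\sqrt d}2^{\ell(v_i)}\sigma - s^*(v_i)$, call it $A_{v_i}$; since $\sigma$ is uniform over $[0,1]^d$ and the grid has cell side $\frac1{\sqrt d}2^{\ell(v_i)}$, randomized-rounding theory gives $\E[A_{v_i}] = x_{c(v_i)} - s^*(v_i)$ coordinate-wise, with $A_{v_i}^t$ supported on an interval of length $\frac1{\sqrt d}2^{\ell(v_i)}$. Now $s^*(v_i)$ is not recoverable, but the shifted surrogate $s(v_i) = s^*(v_i) - x_{c(r)}$ is (\Cref{clm:shifted_surrogate}), computed deterministically from the sketch. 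So $A_{v_i} + s(v_i)$ is a random variable with $\E = x_{c(v_i)} - s^*(v_i) + s^*(v_i) - x_{c(r)} = x_{c(v_i)} - x_{c(r)}$ and support-interval length $\frac1{\sqrt d}2^{\ell(v_i)}$ per coordinate. That handles $x_{c(v_i)} - x_{c(r)}$; it remains to add a random variable with expectation $x_i - x_{c(v_i)}$. But if $v_i = \mathrm{leaf}(x_i)$ this term is zero; otherwise $v_i$ is the top of a long edge and $x_i$ lies in a deeper subtree $T''$ whose root's parent chain leads through the long edge. Here Augmentation II enters recursively: for the subtree leaf $v_i'$ of $T''$ containing $x_i$, $B_{v_i'}$ has expectation $x_{c(v_i')} - x_{c(u)}$ where $u$ is the top of the long edge (which is $v_i$), and we can add its corresponding shifted surrogate within $T''$; iterating down the nested subtrees until reaching $\mathrm{leaf}(x_i)$ yields the full telescoping.

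So the key steps, in order: (1) identify the nested chain of subtrees $T' = T^{(0)} \supset$-path $\to T^{(1)} \to \cdots \to T^{(m)}$ along the downward path to $\mathrm{leaf}(x_i)$, with $v_i = v^{(0)}$ being the subtree leaf of $T'$ on this path, and $v^{(k)}$ the subtree leaf of $T^{(k)}$; (2) for $T' $ itself define $X_i^{(0)} = A_{v_i} + s(v_i)$ using Augmentation I and \Cref{clm:shifted_surrogate}, verifying $\E[X_i^{(0)}] = x_{c(v_i)} - x_{c(r)}$ and per-coordinate support length $\frac1{\sqrt d}2^{\ell(v_i)}$; (3) for each $k \geq 1$, define $X_i^{(k)} = B_{v^{(k)}} + s^{(k)}(v^{(k)})$ using Augmentation II and the analogue of \Cref{clm:shifted_surrogate} inside $T^{(k)}$, with $\E[X_i^{(k)}] = x_{c(v^{(k)})} - x_{c(v^{(k-1)})}$ (the top of the long edge into $T^{(k)}$ is exactly $v^{(k-1)}$, so these telescope), and support length $\frac1{\sqrt d}2^{\ell(v^{(k-1)})}$; (4) set $X_i = \sum_{k=0}^m X_i^{(k)}$; independence of coordinates across all summands follows since each augmentation's rounding depends only on the single shared shift $\sigma$ coordinate-wise — wait, here is the subtlety — all the $A$'s and $B$'s use the \emph{same} $\sigma$, so their roundings are \emph{not} independent across nodes, but within a fixed coordinate $t$ the rounding in each node depends only on $\sigma^t$; crucially we need the \emph{coordinates} of $X_i$ to be mutually independent, and since $\sigma^1, \ldots, \sigma^d$ are i.i.d., coordinate $X_i^t$ is a deterministic function of $\sigma^t$ alone (across all the summands), hence the coordinates of $X_i$ are independent — this is exactly the point; (5) bound the support: the levels $\ell(v^{(k)})$ along the path are strictly decreasing (each $v^{(k)}$ is strictly below $v^{(k-1)}$), so the support interval lengths $\frac1{\sqrt d}2^{\ell(v^{(k-1)})}$ for $k \geq 1$ together with $\frac1{\sqrt d}2^{\ell(v_i)}$ for $k=0$ form (at most) a geometric series with ratio $\leq 1/2$ dominated by $\frac1{\sqrt d} 2^{\ell(v_i) + 1}$... but the claim asks for $\frac1{\sqrt d} 3 \cdot 2^{\ell(v_i)}$, so I must be careful that $\ell(v^{(0)}) = \ell(v_i)$ is the top level and $2^{\ell(v_i)} + 2^{\ell(v^{(1)})} + \cdots \leq 2^{\ell(v_i)}(1 + 1/2 + 1/4 + \cdots) = 2^{\ell(v_i) + 1} \leq 3\cdot 2^{\ell(v_i)}$, noting the $k=0$ support uses $2^{\ell(v_i)}$ and each subsequent $B$-term uses $2^{\ell(v^{(k-1)})}$ with $v^{(0)} = v_i$, so the first $B$-term also contributes $2^{\ell(v_i)}$; that gives $2 \cdot 2^{\ell(v_i)} + 2^{\ell(v^{(1)})} + \cdots \leq 2^{\ell(v_i)+1} + 2^{\ell(v^{(1)})}(1 + 1/2 + \cdots)$; since $\ell(v^{(1)}) \leq \ell(v_i) - 1$ this is $\leq 2 \cdot 2^{\ell(v_i)} + 2^{\ell(v_i)} = 3 \cdot 2^{\ell(v_i)}$, matching.

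I expect the main obstacle to be step (4)–(3), namely correctly setting up the telescoping across the chain of nested subtrees and long edges and verifying that the expectations match up: one has to be precise that the top node of the long edge leading into $T^{(k)}$ is exactly the subtree leaf $v^{(k-1)}$ of $T^{(k-1)}$ (so that $x_{c(u)}$ in Augmentation II equals $x_{c(v^{(k-1)})}$), and that the "shifted surrogate" recovered inside $T^{(k)}$ correctly cancels the $s^*$ offset via the $T^{(k)}$-analogue of \Cref{clm:shifted_surrogate}. A secondary subtlety is articulating clearly \emph{why} the coordinates of the final sum are independent despite the rounding errors at different nodes being strongly correlated: the resolution is that the shared randomness $\sigma$ has i.i.d. coordinates and every quantization acts coordinate-wise, so $X_i^t$ is measurable with respect to $\sigma'^t, \sigma''^t$ only (and whichever of $\sigma', \sigma''$ is in use), making distinct coordinates of $X_i$ functions of disjoint blocks of the i.i.d. source. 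The randomized-rounding expectation identity itself ($\E[\text{bottom-left corner of cell containing } y + \frac1{\sqrt d}\rho\sigma] = y$ when $\sigma$ is uniform on $[0,1]^d$ and the grid has side $\frac1{\sqrt d}\rho$) is the standard one-dimensional fact applied per coordinate and I would invoke it directly.
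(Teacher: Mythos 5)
Your construction is essentially the paper's proof: the paper proves the statement by induction along the subtree leaves on the path from $v_i$ down to $\mathrm{leaf}(x_i)$ (\Cref{lmm:probsurrogates_pf}), with base case $A_{v_i}+s(v_i)$ via Augmentation I and \Cref{clm:shifted_surrogate}, and induction step adding $B_{v}$ from Augmentation II across each long edge; your explicit telescoping sum, your identification of the top of each long edge with $v^{(k-1)}$, your coordinate-wise independence argument (each coordinate of $X_i$ is a function of the corresponding coordinate of $\sigma$ alone), and your geometric-series support bound $2\cdot2^{\ell(v_i)}+2^{\ell(v_i)-1}+\cdots\leq 3\cdot2^{\ell(v_i)}$ all match the paper's argument.

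One slip to fix: in step (3) you write $X_i^{(k)}=B_{v^{(k)}}+s^{(k)}(v^{(k)})$, but Augmentation II already encodes the displacement to the top of the long edge, i.e.\ $\E[B_{v^{(k)}}]=x_{c(v^{(k)})}-x_{c(v^{(k-1)})}$, which is exactly the expectation you state; adding the shifted surrogate of $v^{(k)}$ inside $T^{(k)}$ would introduce a spurious deterministic offset and break the telescoping. The correct summand for $k\geq1$ is $B_{v^{(k)}}$ alone (equivalently, the paper's update $Y_v=Y_u+B_v$), after which your argument goes through verbatim.
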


We prove this by proving a somewhat more general claim by induction.
\begin{lemma}\label{lmm:probsurrogates_pf}
Let $i\in[n]$ be such that $x_i\in C(r)$. For every subtree leaf $v\in\mathcal{L}(T)$ which is a descendant of $v_i$ in $T$ (note that $v$ is not in $T'$ unless $v=v_i$), we can recover from the sketch a $d$-dimensional random variable $Y_v=(Y_v^1,\ldots,Y_v^d)\in\R^d$, such that:
\begin{itemize}
  \item Its coordinates are independent.
  \item Each coordinate is supported on an interval of length at most $\frac{1}{\sqrt d}(3\cdot2^{\ell(v_i)}-2\cdot2^{\ell(v)})$.
  \item $\E[Y_v]=x_{c(v)}-x_{c(r)}$, coordinate-wise.
\end{itemize}
\end{lemma}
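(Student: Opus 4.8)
My plan is to prove \Cref{lmm:probsurrogates_pf} by induction on the subtree leaves $v$ descending from $v_i$, taken in order of decreasing level $\ell(v)$ (equivalently, by the number of long edges on the downward path from $v_i$ to $v$). Throughout I would fix one of the two independent copies of the random shift, say $\sigma\in\{\sigma',\sigma''\}$; the argument is identical for the other copy. The only probabilistic input is the following one-dimensional unbiasedness fact: if $p\in\R$ is fixed, $\sigma_k$ is uniform on $[0,1]$, and $w>0$, then $p+w\sigma_k$ is uniform on an interval of length exactly $w$, so the left endpoint of the width-$w$ grid cell containing it takes only the two consecutive values $w\lfloor p/w\rfloor$ and $w\lfloor p/w\rfloor+w$, lies in an interval of length $w$, and a short computation shows its expectation equals $p$. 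In both augmentations the random shift applied before rounding has per-coordinate scale equal to the side length of the grid used for rounding, so this fact applies coordinate by coordinate: $A_v$ has independent coordinates (the $k$-th a function of $\sigma_k$ only), with $\E[A_v]=x_{c(v)}-s^*(v)$ and each coordinate lying in an interval of length $\tfrac1{\sqrt d}2^{\ell(v)}$; and, when $v$'s subtree root is the bottom of a long edge with top node $u$, the variable $B_v$ has independent coordinates, $\E[B_v]=x_{c(v)}-x_{c(u)}$, and each coordinate lies in an interval of length $\tfrac1{\sqrt d}2^{\ell(u)}$.

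For the base case $v=v_i$, I would set $Y_{v_i}:=A_{v_i}+s(v_i)$, where $s(v_i)$ is the coarse shifted surrogate of $v_i$, which is computable from the sketch. By \Cref{clm:shifted_surrogate}, $s(v_i)=s^*(v_i)-x_{c(r)}$, hence $\E[Y_{v_i}]=(x_{c(v_i)}-s^*(v_i))+(s^*(v_i)-x_{c(r)})=x_{c(v_i)}-x_{c(r)}$; adding the fixed vector $s(v_i)$ preserves both coordinate independence and the coordinate interval lengths, so each coordinate of $Y_{v_i}$ lies in an interval of length $\tfrac1{\sqrt d}2^{\ell(v_i)}=\tfrac1{\sqrt d}\bigl(3\cdot2^{\ell(v_i)}-2\cdot2^{\ell(v_i)}\bigr)$, matching the claim.

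For the inductive step, let $v\neq v_i$ be a subtree leaf descending from $v_i$. It lies in some $T''\in\mathcal F(T)$ with $T''\neq T'$, whose root $r''$ is the bottom node of a long edge; let $u$ be the top node of that long edge. Then $u\in\mathcal L(T)$ and $u$ lies on the downward path from $v_i$ to $v$, so $u$ equals $v_i$ or is a subtree leaf strictly between them, and since $\ell(u)>\ell(v)$ the variable $Y_u$ is already available (from the base case or the inductive hypothesis). I would define $Y_v:=Y_u+B_v$, which is recoverable from the sketch since $B_v$ is stored. Coordinate independence follows from the coordinate-wise structure, and by linearity $\E[Y_v]=(x_{c(u)}-x_{c(r)})+(x_{c(v)}-x_{c(u)})=x_{c(v)}-x_{c(r)}$. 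Since the support-interval length of a sum of real random variables is at most the sum of the support-interval lengths, each coordinate of $Y_v$ lies in an interval of length at most $\tfrac1{\sqrt d}\bigl(3\cdot2^{\ell(v_i)}-2\cdot2^{\ell(u)}\bigr)+\tfrac1{\sqrt d}2^{\ell(u)}=\tfrac1{\sqrt d}\bigl(3\cdot2^{\ell(v_i)}-2^{\ell(u)}\bigr)$. Finally, $u$ is a strict ancestor of $v$ in $T$, so $\ell(u)\ge\ell(v)+1$ and therefore $2^{\ell(u)}\ge2\cdot2^{\ell(v)}$; this bounds the interval length by $\tfrac1{\sqrt d}\bigl(3\cdot2^{\ell(v_i)}-2\cdot2^{\ell(v)}\bigr)$, closing the induction. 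Then \Cref{lmm:probsurrogates} follows by instantiating $v=\mathrm{leaf}(x_i)$, where $x_{c(v)}=x_i$ and $\ell(v)=0$.

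The part I expect to be most delicate is not any calculation but the structural bookkeeping in the inductive step: verifying that the top node $u$ of the long edge above $v$'s subtree is itself covered by the induction (a subtree leaf descending from $v_i$, or $v_i$ itself), and noticing that the ancestor relation $\ell(u)\ge\ell(v)+1$ is exactly what makes the telescoping of interval lengths land on the stated bound $3\cdot2^{\ell(v_i)}-2\cdot2^{\ell(v)}$ rather than something weaker. One should also take care that the randomized grid rounding is genuinely unbiased with two-point support, which relies on the random shift having exactly the width of the grid cell — which holds by construction of both augmentations.
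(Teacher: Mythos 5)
Your proposal is correct and follows essentially the same route as the paper: the same induction over the chain of subtree leaves below $v_i$, the base case $Y_{v_i}=A_{v_i}+s(v_i)$ using \Cref{clm:shifted_surrogate}, the inductive step $Y_v=Y_u+B_v$ with $u$ the top of the long edge above $v$'s subtree, and the same telescoping $\bigl(3\cdot2^{\ell(v_i)}-2\cdot2^{\ell(u)}\bigr)+2^{\ell(u)}\leq 3\cdot2^{\ell(v_i)}-2\cdot2^{\ell(v)}$ via $\ell(u)\geq\ell(v)+1$. Your only cosmetic deviation is packaging the unbiased two-point randomized-rounding computation as a one-dimensional fact applied coordinate-wise, which the paper instead carries out explicitly in the base case.
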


\Cref{lmm:probsurrogates_pf} clearly implies~\cref{lmm:probsurrogates} in the special case $v=\mathrm{leaf}(x_i)$.

\begin{proof}[Proof of~\Cref{lmm:probsurrogates_pf}]
The proof is by induction on the subtree leaves (nodes in $\mathcal{L}(T)$) that lie on the downward path from $v_i$ to $\mathrm{leaf}(x_i)$.

\emph{Induction base.}
In the base case, $v=v_i$.
We take $Y_v=A_v + s(v)$.
Note that $A_v$ is stored by Augmentation I, and $s(v)$ is a shifted surrogate, so both can be recovered from the sketch.
We show that $Y_v$ satisfies the required properties. 

Let us simplify some notation for convenience. 
Let $L=\frac{1}{\sqrt{d}}2^{\ell(v)}$. Let $\mathcal{G}[L]$ be origin-centered uniform grid with cell side length $L$. 
Let $y=x_{c(v)}- s^*(v)$, with coordinates $y=(y_1,\ldots,y_d)$. 
Let $H=[a_1,a_1+L]\times\ldots\times[a_d,a_d+L]\subset\R^d$ be the hypercube cell of $\mathcal{G}[L]$ that contains $y$. 
Fix $\sigma\in\{\sigma',\sigma''\}$, and let $(\sigma_1,\ldots,\sigma_d)$ denote its coordinates.

In Augmentation I, $A_v$ is the bottom-left corner of the cell of $\mathcal{G}[L]$ that contains $y + L\sigma$, where each coordinate of $\sigma$ is an i.i.d.~uniformly random shift in $[0,1]$.
This means that each coordinate $j\in[d]$ of $A_v$ is set to $A_v^j=a_j$ if $y_j+L\sigma_j<a_j+L$, and to $A_v^j=a_j+L$ otherwise. The latter condition rearranges to $\sigma_j<1-\tfrac1L(y_j-a_j)$ (note that this value is in $[0,1]$ since $a_j$ is defined such that $a_j\leq y_j<a_j+L$), which occurs with probability $1-\tfrac1L(y_j-a_j)$. Therefore,
\[  \E_{\sigma_j}[A_v^j]=a_j\cdot(1-\tfrac1L(y_j-a_j)) + (a_j+L)\cdot\tfrac1L(y_j-a_j) = y_j . \]
Furthermore, $A_v$ is supported on the corners of the grid cell $H$, and hence each coordinate is supported on an interval of length $L=\frac{1}{\sqrt{d}}2^{\ell(v)}$. Finally, since the coordinates of $\sigma$ are independent, then so are the coordinates of $A_v$. (This is the same randomized rounding scheme from~\cite{AlonK16}; see~\Cref{fig:augproof} for illustration.) 
By taking $Y_v=A_v + s(v)$, the support length of each coordinate and the independence between the coordinates are preserved, while the expectation changes to
\[ \E_\sigma[Y_v] = \E_\sigma[A_v] + s(v) = y + s(v) = x_{c(v)}- s^*(v) + s(v) = x_{c(v)} - x_{c(r)} , \]
coordinate-wise, where we have used~\Cref{clm:shifted_surrogate} for the rightmost equality. This proves the base case.

\begin{figure}
    \centering
    \includegraphics[scale=0.5]{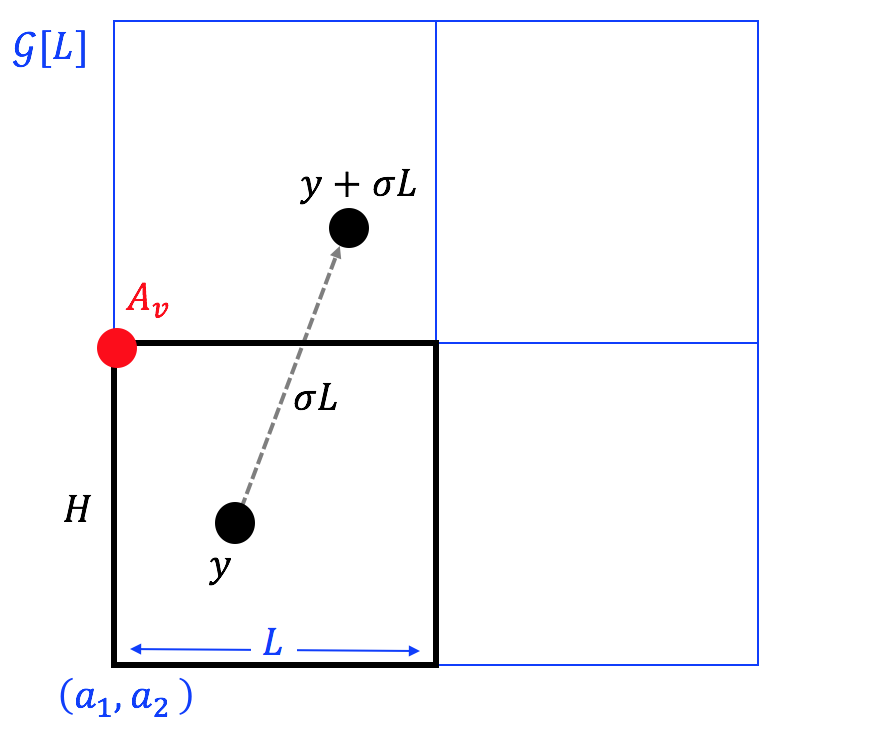}
    \caption[Randomized rounding on grid]{Base case of~\Cref{lmm:probsurrogates_pf} (in two dimensions). $H$ is the hypercube cell of $\mathcal{G}[L]$ that contains $y=x_{c(v)}- s^*(v)$. Note that $H$ is not random. $A_v$ from Augmentation I is the bottom-left corner of the grid cell that contains $y+\sigma L$, where $\sigma$ is a shift with uniform i.i.d.~coordinates in $[0,1]$. Thus, $A_v$ is supported on the corners of $H$, and $\E_\sigma[A_v]=y$.}
    \label{fig:augproof}
\end{figure}

\emph{Induction step.}
Let $v$ be a descendant of $v_i$, which is different than $v_i$.
Let $u$ be the next node in $\mathcal{L}(T)$ on the upward path from $v$ to $v_i$.
By induction, the statement of~\Cref{lmm:probsurrogates_pf} holds for $u$.
Therefore we have a random variable $Y_u$ with independent coordinates, each supported on an interval of length $\frac{1}{\sqrt d}(3\cdot2^{\ell(v_i)}-2\cdot2^{\ell(u)})$, such that $\E[Y_u]=x_{c(u)}-x_{c(r)}$ coordinate-wise.

Augmentation II stores $B_v$, defined as the bottom-left corner of the cell of the origin-centered grid $\mathcal{G}[\frac{1}{\sqrt{d}}2^{\ell(u)}]$ that contains $x_{c(v)} + \frac{1}{\sqrt d}2^{\ell(u)}\sigma - x_{c(u)}$. 
Similarly to what was shown in the base step for $A_v$, this implies that $\E_\sigma[B_v]=x_{c(v)}-x_{c(u)}$, that $B_v$ has independent coordinates, and that it is supported on the corners of the grid cell that contains $x_{c(v)}-x_{c(u)}$, which means that each coordinate is supported on an interval of length $\frac{1}{\sqrt{d}}2^{\ell(u)}$.

We let $Y_v=Y_u+B_v$. It is easily seen that $Y_v$ has the correct expectation ($\E[Y_v]=x_{c(v)}-x_{c(r)}$ coordinate-wise), that its coordinates are independent, and that each is supported on an interval of length $\frac{1}{\sqrt d}(3\cdot2^{\ell(v_x)}-2^{\ell(u)})$. The proof is complete by noticing that $\ell(v)<\ell(u)$, hence $\ell(v)\leq\ell(u)-1$, 
hence the length is at most $\frac{1}{\sqrt d}(3\cdot2^{\ell(v_x)}-2\cdot2^{\ell(v)})$.
\end{proof}

\subsection{Distance estimation}
Let $i,j\in[n]$. We show how to estimate $\norm{x_i-x_j}$ from the sketch.
Let $r$ be the lowest node in $T$ which is the root of a subtree in $T'\in\mathrm{F}(T)$ and such that $x_i,x_j\in C(r)$ (i.e., $r$ is a common ancestor of $\mathrm{leaf}(x_i)$ and $\mathrm{leaf}(x_j)$).
Let $v_i$ be the leaf of $T'$ whose cluster contains $x_i$, and similarly define $v_j$ for $x_j$. 
Let $\ell_{ij}:=\max\{\ell(v_i),\ell(v_j)\}$.
Note that by~\Cref{obs:separation} we have $\norm{x_i-x_j}\geq2^{\ell_{ij}}$.

Using~\Cref{lmm:probsurrogates}, we can read off the sketch random variables $X_i',X_j',X_i'',X_j''\in\R^d$, such that each has independent coordinates supported on an interval of length $\frac{3}{\sqrt d}2^{\ell_{ij}}$, such that $\E[X_i']=\E[X_i'']=x_i-x_{c(r)}$ and $\E[X_j']=\E[X_j'']=x_j-x_{c(r)}$ coordinate-wise, and such that $(X_i',X_j')$ are independent of $(X_i'',X_j'')$.
The latter property is achieved by using the random shift $\sigma'$ for $(X_i',X_j')$ and the random shift $\sigma''$ for $(X_i'',X_j'')$.
The estimate we return is $\sqrt{(X_i'-X_j')^T(X_i''-X_j'')}$. (Note that if we had $X_i'=X_i''$ and $X_j'=X_j''$ then the estimate would just be $\norm{X_i'-X_j'}$; however, we will make use of the independence between $(X_i',X_j')$ and $(X_i'',X_j'')$.) We now show it is a sufficiently accurate estimate.

To this end, let $Z_1=X_i'-X_j'$ and $Z_2=X_i''-X_j''$. 
The returned estimate is $\sqrt{Z_1^TZ_2}$. Note that $Z_1,Z_2$ are independent, each has independent coordinates supported on an interval of length $\frac{6}{\sqrt d}2^{\ell_{ij}}$, and $\E[Z_1]=\E[Z_2]=x-y$ coordinate-wise.

The following lemma is adapted from Alon and Klartag~\cite{AlonK16}.

\begin{lemma}\label{lmm:ak}
Suppose $d\geq3\epsilon^{-2}\log n$. Let $S>0$.
Let $z_1,z_2\in\R^d$.
Let $Z_1,Z_2$ be $d$-dimensional independent random variables with independent coordinates, with each coordinate supported on an interval of length $\frac{1}{\sqrt d}S$, and such that $\E[Z_1]=z_1$ and $\E[Z_2]=z_2$ coordinate-wise.
Then,
\[ \Pr[|Z_1^TZ_2 - z_1^Tz_2| \leq \epsilon \cdot S(\norm{z_1}+\norm{z_2}+S)] \geq 1-\frac{4}{n^3} . \]
\end{lemma}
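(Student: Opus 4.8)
The plan is to prove the concentration bound by expanding $Z_1^T Z_2 - z_1^T z_2$ into a sum of products of independent centered random variables and applying a Bernstein-type inequality. First I would write $Z_1^k = z_1^k + P_k$ and $Z_2^k = z_2^k + Q_k$ for each coordinate $k\in[d]$, where $P_k, Q_k$ are centered random variables, each supported on an interval of length $\frac{1}{\sqrt d}S$ (so $|P_k|, |Q_k| \leq \frac{1}{\sqrt d}S$), and the families $\{P_k\}$, $\{Q_k\}$ are each independent, and moreover $\{P_k\}$ is independent of $\{Q_k\}$ by the hypothesis that $Z_1, Z_2$ are independent. Then
\[ Z_1^T Z_2 - z_1^T z_2 = \sum_{k=1}^d \left( z_1^k Q_k + z_2^k P_k + P_k Q_k \right), \]
which is a sum of $3d$ terms, each with mean zero (using $\E[P_k Q_k] = \E[P_k]\E[Q_k] = 0$ by independence of the two families). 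The goal is to show this sum is at most $\epsilon\cdot S(\norm{z_1}+\norm{z_2}+S)$ in absolute value with the stated probability.

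Next I would bound the variance and the almost-sure magnitude of this sum in order to invoke Bernstein's inequality. For the linear terms: each $z_1^k Q_k$ has variance at most $(z_1^k)^2 \cdot \frac{S^2}{d}$ (since $|Q_k|\le \frac{S}{\sqrt d}$), so $\sum_k z_1^k Q_k$ has variance at most $\frac{S^2}{d}\norm{z_1}^2$, and similarly $\sum_k z_2^k P_k$ has variance at most $\frac{S^2}{d}\norm{z_2}^2$. For the quadratic terms $P_k Q_k$, each has $|P_k Q_k| \le \frac{S^2}{d}$, so variance at most $\frac{S^4}{d^2}$, and the sum over $k$ has variance at most $\frac{S^4}{d}$; also each individual term is bounded by $\frac{S^2}{d}$ almost surely, and the linear terms $z_1^k Q_k$ are bounded by $|z_1^k|\frac{S}{\sqrt d}$. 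Collecting, the total variance $V$ is at most $\frac{S^2}{d}(\norm{z_1}^2 + \norm{z_2}^2) + \frac{S^4}{d} \le \frac{1}{d}\cdot S^2(\norm{z_1}+\norm{z_2}+S)^2$, and the almost-sure bound $M$ on any single summand is at most $\frac{S}{\sqrt d}(\norm{z_1}_\infty + \norm{z_2}_\infty) + \frac{S^2}{d}$; being somewhat crude one can bound $M \le \frac{1}{\sqrt d}\cdot S(\norm{z_1}+\norm{z_2}+S)$ as well. Writing $R := S(\norm{z_1}+\norm{z_2}+S)$, we have $V \le R^2/d$ and $M \le R/\sqrt d$.

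Then Bernstein's inequality gives, for $t = \epsilon R$,
\[ \Pr\!\left[\,\bigl|Z_1^T Z_2 - z_1^T z_2\bigr| > \epsilon R\,\right] \le 2\exp\!\left(-\frac{t^2/2}{V + Mt/3}\right) \le 2\exp\!\left(-\frac{\epsilon^2 R^2/2}{R^2/d + \epsilon R^2/(3\sqrt d)}\right) = 2\exp\!\left(-\frac{\epsilon^2 d/2}{1 + \epsilon\sqrt d/3}\right). \]
Since $\epsilon < 1$ and, under the hypothesis $d \ge 3\epsilon^{-2}\log n$, we have $\epsilon\sqrt d \ge \sqrt{3\log n}$, the denominator is dominated by the second term for $n$ large, giving an exponent of order $-\epsilon\sqrt d/6 \ge -\frac{\sqrt{3\log n}}{6}$, which is not quite $n^{-3}$. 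This is the main obstacle: a naive Bernstein application on the quadratic part is too weak. The fix — which is exactly why the lemma is "adapted from Alon and Klartag" — is to handle the quadratic term $\sum_k P_k Q_k$ more carefully, conditioning on $\{Q_k\}$ and using that, given those values, $\sum_k P_k Q_k$ is a sum of independent bounded centered variables with conditional variance $\sum_k \Var(P_k) Q_k^2 \le \frac{S^2}{d}\sum_k Q_k^2 \le \frac{S^4}{d}$ but with each term bounded by $\frac{S}{\sqrt d}|Q_k|$, and $\sum_k |Q_k|^2 \le S^2$ deterministically; a sub-Gaussian (Hoeffding-type) bound conditionally on $\{Q_k\}$ then yields a clean $\exp(-\Omega(\epsilon^2 d))$ tail for the quadratic part, and the same for the two linear parts. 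With $d \ge 3\epsilon^{-2}\log n$ this is $\le \exp(-\Omega(\log n))$; tracking constants carefully (and splitting the probability budget across the three sums, plus a conditioning event of the form $\sum_k Q_k^2 \le 2S^2$, which itself holds with high probability) yields the claimed bound $1 - 4/n^3$. I would present the conditioning argument for the quadratic term as the crux and treat the two linear sums by a direct Hoeffding bound.
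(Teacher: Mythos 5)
Your final argument is correct and is essentially the paper's proof: the paper splits $Z_1^TZ_2-z_1^Tz_2$ into $(Z_1-z_1)^Tz_2$ and $Z_1^T(Z_2-z_2)$ and applies Hoeffding to each (the latter using the deterministic bound $\norm{Z_1}\leq\norm{z_1}+S$), while you expand one step further into $z_1^TQ+z_2^TP+P^TQ$ and apply Hoeffding to three sums, which is the same mechanism and gives $6e^{-2\epsilon^2 d}\leq 4/n^3$ with room to spare. Two small cleanups: the Bernstein detour is unnecessary, and the auxiliary high-probability event $\sum_k Q_k^2\leq 2S^2$ is not needed since $\sum_k Q_k^2\leq S^2$ holds deterministically, exactly as you note earlier.
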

\begin{proof}
We will denote the coordinates of $Z_1$ by $(Z_1^1,\ldots,Z_1^d)$, and similarly for $z_1$, $Z_2$, and $z_2$.
By the triangle inequality,
\begin{align*}
 |Z_1^TZ_2 - z_1^Tz_2| &= |Z_1^TZ_2 - Z_1^Tz_2 + Z_1^Tz_2 + z_1^Tz_2| \\
 &\leq |(Z_1-z_1)^Tz_2| + |Z_1^T(Z_2-z_2)| .
\end{align*} 
We start with the term $(Z_1-z_1)^Tz_2 = \sum_{i=1}^dz_2^i(Z_1^i-z_1^i)$.
By hypothesis, for every coordinate $i\in[d]$ we have $\E[Z_1^i-z_1^i]=0$ and $|Z_1^i-z_1^i|\leq\frac{1}{\sqrt d}S$.
Therefore the sum of squares of the summands is upper-bounded by $\frac{1}{d}S^2\norm{z_2}^2$.
Now by Hoeffding's inequality, 
\[ \Pr\left[\left|(Z_1-z_1)^Tz_2\right| > \epsilon S\norm{z_2}\right] \leq 2e^{-2\epsilon^2d} \leq \frac{2}{n^3}, \]
where we have used $d\geq3\epsilon^{-2}\log n$.

We proceed to the term $Z_1^T(Z_2-z_2)  = \sum_{i=1}^dZ_1^i(Z_2^i-z_2^i)$.
Again by hypothesis $\E[Z_2^i-z_2^i]=0$, and since $Z_1,Z_2$ are independent, $\E[Z_1^i(Z_2^i-z_2^i)]=0$.
The sum of squares is upper-bounded by $\frac{1}{d}S^2\norm{Z_1}^2$ as above, and by the triangle inequality, $\norm{Z_1}\leq\norm{z_1}+\norm{Z_1-z_1}\leq\norm{z_1}+S$.
Altogether, $\sum_{i=1}^d(Z_1^i(Z_2^i-z_2^i))^2 \leq \frac{1}{d}S^2(\norm{z_1}+S)^2$, and by Hoeffding's inequality,
\[ \Pr\left[\left|Z_1^T(Z_2-z_2)\right| > \epsilon S(\norm{z_1}+S)\right] \leq 2e^{-2\epsilon^2d} \leq \frac{2}{n^3}. \]
The lemma follows by a union bound over the two terms.
\end{proof}

Applying the lemma to $Z_1,Z_2$ defined above (with $z_1=z_2=x_i-x_j$ and $S=6\cdot2^{\ell_{ij}}$),
\[ \Pr[|Z_1^TZ_2 - \norm{x_i-x_j}^2| \leq \epsilon\cdot 6\cdot2^{\ell_{ij}}(2\norm{x_i-x_j}+6\cdot2^{\ell_{ij}})] \geq 1-\frac{4}{n^3} . \]
Since $\norm{x_i-x_j}\geq2^{\ell_{ij}}$, this implies
\[ \Pr\left[\left|Z_1^TZ_2 - \norm{x_i-x_j}^2\right| \leq \epsilon \cdot 48\norm{x_i-x_j}^2\right] \geq 1-\frac{4}{n^3} , \]
and thus $Z_1^TZ_2=(1\pm O(\epsilon))\cdot\norm{x_i-x_j}^2$, which renders our estimate $\sqrt{Z_1^TZ_2}$ correct (up to scaling $\epsilon$ by a constant) with that probability. The total success probability is $1-O(1/n)$, by a union bound over all pairs $i,j\in[n]$, and over the application of the Johnson-Lindenstrauss theorem that was used as a preprocessing step.

\subsection{Running times}
The estimation time is as in~\Cref{thm:lpmain}, with dimension $\Theta(\epsilon^{-2}\log n)$. 
We now focus on the sketching time. 
The Johnson-Lindenstrauss theorem can be performed either na\"ively in time $O(\epsilon^{-2}nd\log n)$, or in time $O(nd\log d + \epsilon^{-2}n\cdot\min\{d\log n,\log^3n\})$ by the Fast Johnson-Lindenstrauss Transform of Ailon and Chazelle~\cite{AilonC09}. Note that here, $d$ is the ambient dimension of the input metric (before dimension reduction).

Next we compute the sketch from~\Cref{sec:disptree}. To avoid confusion in notation, let us denote its error and dimension parameters by $\epsilon'$ and $d'$ respectively. We construct that sketch with $\epsilon'=\Omega(1)$ and $d'=\Theta(\epsilon^{-2}\log n)$, which as per~\Cref{sec:runningtimes} takes time $O(n^2\log\Phi+nd')$. 
The $n^2\log\Phi$ term can be reduced to $O(n^{1+\alpha}\log\Phi)$ for any constant $0<\alpha<1$, at the cost of increasing the sketch size by an additive factor of $O(nd'\cdot\alpha^{-1}\log(1/\alpha))$. This does not asymptotically increase its size $O(nd'+n\log\log\Phi)$ as long as $\alpha=\Omega(1)$. 

To this end, let $c=\alpha^{-1/2}$. In constructing the relative location tree, we use the algorithm of~\cite{HarpeledIM12} to compute $c$-approximate connected components in each level. Their algorithm is based on Locality-Sensitive Hashing (LSH), which in Euclidean spaces can be implemented in time $O(n^{1+1/c^2})$~\cite{AndoniI06}. Using $c$-approximate connected components means that clusters in level $\ell$ of the relative location tree may be merged if the distance between them is up to $c\cdot2^\ell$ (rather than just $2^\ell$), and to account for this constant loss, we need to scale $\epsilon'$ down to $\epsilon'/c$. Since the dependence of the sketch size on $\epsilon'$ is $O(nd'\log(1/\epsilon'))$ where in our case $d'=\Theta(\epsilon^{-2}\log n)$, it increases by an additive factor of $O(nd'\cdot\alpha^{-1}\log(1/\alpha))$.

Finally, the sketch augmentations in~\Cref{sec:sketchaug} take time $d$ per node in $\mathcal{L}(T)$ to compute, so in total, $O(nd)$ time. The overall sketching time is as stated in~\Cref{thm:eucmain}.

\section{$\ell_p$-Metrics with $1\leq p<2$}\label{sec:l1p2}
We point out that by known embedding results, both the upper and lower bounds for Euclidean metric compression in~\Cref{thm:eucmain} apply more generally to $\ell_p$-metrics for every $1\leq p\leq2$. 

\begin{theorem}\label{thm:l1p2}
Let $1\leq p\leq 2$ and $\epsilon>0$. 
For $\ell_p$-metric sketching with $n$ points and diameter $\Phi$ (of arbitrary dimension), $\Theta(\epsilon^{-2}n\log n+ n\log\log \Phi)$ bits are both sufficient and necessary.
\end{theorem}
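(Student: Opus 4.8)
The plan is to reduce the $\ell_p$ case to the Euclidean case by appealing to standard metric embedding facts, separately for the upper and lower bounds. For the upper bound, I would use the classical fact that any $n$-point subset of $\ell_p$ with $1 \le p \le 2$ embeds into $\ell_2$ with distortion $1+\epsilon$ (indeed, $\ell_p$ embeds isometrically into $\ell_2$ after a suitable nonlinear transformation when $p \le 2$, and any finite subset embeds with $1+\epsilon$ distortion into Euclidean space of dimension $O(\epsilon^{-2}\log n)$, e.g.\ via the combination of the isometric embedding of $\ell_p$ into $L_2$ through $p$-stable variables and Johnson--Lindenstrauss, or directly via known results on embedding finite $\ell_p$-metrics into $\ell_2$). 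Concretely: given the input points $x_1,\dots,x_n \in \ell_p$ with pairwise distances in $[1,\Phi]$, first map them into $\ell_2$ with distortion $1+\epsilon'$ for a sufficiently small constant multiple $\epsilon'$ of $\epsilon$; this only changes the aspect ratio by a constant factor, so the new distances lie in $[\Omega(1), O(\Phi)]$. Then apply \Cref{thm:eucmain} to the resulting Euclidean instance with error parameter $\Theta(\epsilon)$. Composing the two $(1\pm\Theta(\epsilon))$ guarantees and rescaling $\epsilon$ by a constant yields a $(1\pm\epsilon)$ sketch of size $O(\epsilon^{-2}n\log n + n\log\log\Phi)$, as claimed. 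One has to be slightly careful that the embedding is computable (it is, via random $p$-stable projections) and that the minimum distance does not collapse, but the $\epsilon/2$-net trick or simple rescaling handles the lower endpoint.

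For the lower bound, I would invoke the reverse direction: the hard instances witnessing the $\Omega(\epsilon^{-2}n\log n + n\log\log\Phi)$ bound for $\ell_2$ (constructed in \Cref{sec:lowerbounds}, and also obtainable from \cite{AlonK16,LarsenN16} as noted in the excerpt) can be transplanted into $\ell_p$ for any $1 \le p \le 2$. The cleanest route is to note that these hard point sets live in dimension $\Theta(\epsilon^{-2}\log n)$ and can be taken to be, e.g., scaled $\pm 1$ vectors or net points for which the $\ell_p$ and $\ell_2$ distances agree up to a constant factor after normalization; alternatively, one uses that $\ell_2$ embeds isometrically into $\ell_p$ for all $p$ (via Gaussian/$p$-stable coordinates, at the cost of increasing dimension, which is harmless since \Cref{thm:eucmain}'s lower bound only needs the dimension to be $\Omega(\epsilon^{-2}\log n)$), so any sketch for $\ell_p$ metrics yields a sketch for the $\ell_2$ hard instance of the same size, hence the $\ell_2$ lower bound carries over verbatim. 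The $n\log\log\Phi$ part of the lower bound is essentially metric-agnostic (it comes from needing to encode the approximate scale of each point relative to its neighbor) and transfers directly.

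The main obstacle I anticipate is bookkeeping around the embedding rather than anything conceptually deep: one must ensure the $\ell_p \to \ell_2$ embedding preserves distances with multiplicative $1\pm\epsilon$ error (not just constant distortion), preserves the $[1,\Phi]$ normalization up to constants, and is realizable in the model (the sketching algorithm may apply it as randomized preprocessing, paying only an extra $1/\mathrm{poly}(n)$ failure probability absorbed into the union bound). For $p=1$ in particular there is a classical subtlety that $\ell_1$ does not embed into $\ell_2$ with low distortion in general, but finite $\ell_1$-metrics \emph{do} embed into $\ell_2$ with distortion $1+\epsilon$ in dimension $O(\epsilon^{-2}\log n)$ — this follows from the Johnson--Lindenstrauss-type results for $\ell_1$ via embedding $\ell_1^d$ into $\ell_2$ of the squared metric, or more directly from the fact that $\ell_1$ is of negative type so $\sqrt{\|\cdot\|_1}$ embeds isometrically into $\ell_2$; the $(1+\epsilon)$-distortion Euclidean embedding of the resulting finite metric then follows from dimension reduction. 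I would cite the relevant embedding theorem, state the parameter dependence carefully, and then the reduction to \Cref{thm:eucmain} is immediate.
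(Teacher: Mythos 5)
Your lower bound argument matches the paper's: since $\ell_2$ embeds isometrically into $\ell_p$ for every $1\leq p<2$, any $\ell_p$ sketch handles the Euclidean hard instances of \Cref{thm:euclidean_lowerbound}, and the $n\log\log\Phi$ construction is one-dimensional and transfers directly. The problem is your upper bound. Its central claim --- that every $n$-point $\ell_p$ metric with $1\leq p<2$ (in particular $\ell_1$) embeds into $\ell_2$ with distortion $1+\epsilon$, even in dimension $O(\epsilon^{-2}\log n)$ --- is false. By Enflo's classical theorem, embedding the Hamming cube $\{0,1\}^{\log n}$ with its $\ell_1$ metric into Euclidean space requires distortion $\Omega(\sqrt{\log n})$; no amount of Johnson--Lindenstrauss can help, since dimension reduction only applies to point sets that are already in $\ell_2$. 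Your parenthetical fallback is also misused: the negative-type fact says that the \emph{snowflake} $\sqrt{\norm{\cdot}_1}$ embeds isometrically into $\ell_2$, i.e.\ squared Euclidean distances can represent $\ell_1$ distances; it does not yield a $(1+\epsilon)$-distortion embedding of the $\ell_1$ metric itself, and applying dimension reduction to the snowflaked image preserves the snowflaked metric, not the original one. So the reduction as written --- embed with $1\pm\epsilon$ distortion into $\ell_2$, then invoke \Cref{thm:eucmain} --- does not get off the ground for $p<2$.

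The repair is exactly the route the paper takes, and it is close to your fallback remark if you push it one step further. Use a constructive negative-type embedding (\Cref{thm:nguyenl22}, from \cite{le2014algorithms,mendel2004euclidean}): a map $f$ with $(1-\epsilon)\norm{x-y}_p^p\leq\norm{f(x)-f(y)}_2^2\leq(1+\epsilon)\norm{x-y}_p^p$. Sketch the Euclidean metric on $f(x_1),\ldots,f(x_n)$ with \Cref{thm:eucmain} and report $\norm{f(x_i)-f(x_j)}_2^{2/p}$; a $(1\pm\epsilon)$ relative error on the Euclidean distance remains a $(1\pm O(\epsilon))$ relative error after raising to the fixed power $2/p\leq 2$, which is the point your version misses --- relative error survives powering, so you never need the (nonexistent) low-distortion embedding of the metric itself. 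The bookkeeping also works out: the image has aspect ratio $O(\Phi^{p/2})=O(\Phi)$, so the $n\log\log\Phi$ term is unchanged, and the blow-up in ambient dimension from $f$ is harmless because \Cref{thm:eucmain} is dimension-free after its internal Johnson--Lindenstrauss step.
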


The upper bound relies on the well-known fact that every such metric embeds isometrically into a negative-type metric, i.e., into a squared Euclidean metric. We use the following constructive version of this fact, from~\cite[Theorem 116]{le2014algorithms}, based on~\cite{mendel2004euclidean}.

\begin{theorem}[\cite{le2014algorithms}\footnote{The statement in~\cite{le2014algorithms} is for $\Phi=d^{O(1)}$, but applies to any $\Phi>0$. The statement given here is by setting $R=d^{-1/q}$ in~\cite[Theorem 116]{le2014algorithms} and scaling the minimal distance in the given metric to $1$.}]\label{thm:nguyenl22}
Let $1\leq p<2$. Let $X\subset\R^d$ be a point set with $\ell_p$-aspect ratio $\Phi$. There is a mapping $f:X\rightarrow\R^{d\cdot\mathrm{poly}(\log\Phi, \log d, 1/\epsilon)}$ such that for every $x,y\in X$,
\[ (1-\epsilon)\norm{x-y}_p^p \leq \norm{f(x)-f(y)}_2^2 \leq (1+\epsilon)\norm{x-y}_p^p . \]
\end{theorem}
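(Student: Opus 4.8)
The plan is to give an explicit, deterministic map, built coordinate-by-coordinate from the classical L\'evy--Schoenberg integral representation of the $p$-th power of the real line, truncated and discretized so that only $\mathrm{poly}(\log\Phi,\log d,1/\epsilon)$ frequencies per coordinate are needed. Recall that for every $1\le p<2$ there is a finite constant $c_p>0$ with
\[ |\theta|^p \;=\; c_p\int_0^\infty \frac{1-\cos(u\theta)}{u^{1+p}}\,\mathrm{d}u \qquad(\theta\in\R), \]
and that, writing $1-\cos(u(s-t))=\tfrac12\big[(\cos us-\cos ut)^2+(\sin us-\sin ut)^2\big]$, each frequency $u$ contributes a squared Euclidean distance between the feature vectors $a\mapsto(\cos ua,\sin ua)$. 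Thus $(\R,|s-t|^p)$ embeds isometrically into (infinite-dimensional) squared Euclidean space, and $(\R^d,\norm{x-y}_p^p)$ does too by summing over coordinates. The entire task is to finitize this with only a polylogarithmic dependence on $\Phi$ and $d$, which is possible because we need only a $(1\pm\epsilon)$-\emph{relative} guarantee and the relevant distances lie in a bounded-aspect-ratio window.

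First I would reduce to a single real coordinate over a bounded range of differences. Scale $X$ so that $\norm{x-y}_p\in[1,\Phi]$. If two points differ in coordinate $k$ by $|x_k-y_k|<\delta:=(\epsilon^{3}/d)^{1/p}$, that coordinate contributes at most $\epsilon^{3}/d$ to $\norm{x-y}_p^p\ge1$, so over the (at most $d$) such coordinates the total is $\le\epsilon^{3}\le\epsilon\norm{x-y}_p^p$. Hence it suffices to build, for each coordinate, a map $f_k:\R\to\R^{m_0}$ satisfying $\norm{f_k(a)-f_k(b)}_2^2\le |a-b|^p\cdot\epsilon^{-O(1)}$ for all $a,b$, and $\norm{f_k(a)-f_k(b)}_2^2=(1\pm\epsilon)|a-b|^p$ whenever $|a-b|\in[\delta,\Phi]$. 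Concatenating, $f(x)=(f_1(x_1),\dots,f_d(x_d))$ has the claimed property with $m=d\cdot m_0$: the tiny coordinates are harmless on the upper side by the safety bound (they are polynomially small in $\epsilon$ even after embedding) and on the lower side because a squared norm is nonnegative. Note $\log(\Phi/\delta)=O(\log\Phi+\log d+\log(1/\epsilon))=:L$.

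Next, the single-coordinate map. Truncate the integral to $u\in[u_{\min},u_{\max}]$ with $u_{\min}\asymp((2-p)\epsilon)^{1/(2-p)}/\Phi$ and $u_{\max}\asymp\epsilon^{-1/p}/\delta$: using $1-\cos(u\theta)\le\min\{2,\tfrac12u^2\theta^2\}$ one checks that the low-$u$ tail is $\le\epsilon|\theta|^p$ (here $p<2$ is what makes $\int_0 u^{1-p}\,\mathrm{d}u$ converge) and the high-$u$ tail is $\le\epsilon|\theta|^p$, uniformly over $|\theta|\in[\delta,\Phi]$, and that $\log(u_{\max}/u_{\min})=O(L)$. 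Now place frequencies on a geometric grid $u_\ell=u_{\min}2^{\ell/M}$, $\ell=0,1,\dots,N$, with $M=\mathrm{poly}(1/\epsilon)$ points per octave and $N=M\cdot O(L)$, give $u_\ell$ the Riemann weight $w_\ell\asymp u_\ell^{-p}$, and set
\[ f_k(a)\;:=\;\Big(\sqrt{c_p w_\ell/2}\,\cos(u_\ell a),\;\sqrt{c_p w_\ell/2}\,\sin(u_\ell a)\Big)_{\ell=0}^{N}\in\R^{2(N+1)}, \]
so that $\norm{f_k(a)-f_k(b)}_2^2=c_p\sum_{\ell}w_\ell\big(1-\cos(u_\ell(a-b))\big)$. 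The key point is that for a fixed difference $\theta$ the integrand $\tfrac{1-\cos(u\theta)}{u^{1+p}}$ carries all but an $\epsilon$-fraction of its mass in a band $u\in[\epsilon^{O(1)}/|\theta|,\,\epsilon^{-O(1)}/|\theta|]$ of only $O(\log(1/\epsilon))$ octaves around $1/|\theta|$ (again by the $\min$ bound): the grid points inside this band form a fine enough geometric mesh — since $u\theta\le\epsilon^{-O(1)}$ there, $\cos(u\theta)$ performs only $\mathrm{poly}(1/\epsilon)$ oscillations across the band, so $M=\mathrm{poly}(1/\epsilon)$ points per octave make the Riemann sum $(1\pm\epsilon)$-accurate — while the grid points outside the band contribute, to the discrete sum and not merely to the integral, at most $\epsilon|\theta|^p$ by the same tail estimates. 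Adding back the truncation error yields $\norm{f_k(a)-f_k(b)}_2^2=(1\pm O(\epsilon))|a-b|^p$ for $|a-b|\in[\delta,\Phi]$; for arbitrary $a,b$, the bound $1-\cos(u_\ell(a-b))\le\tfrac12u_\ell^2(a-b)^2$ gives $\norm{f_k(a)-f_k(b)}_2^2\lesssim (a-b)^2u_{\max}^{2-p}\lesssim|a-b|^p\epsilon^{-O(1)}$, the safety bound used above. Rescaling $\epsilon$ by a constant finishes, with $m_0=2(N+1)=\mathrm{poly}(\log\Phi,\log d,1/\epsilon)$.

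\textbf{Main obstacle.} The delicate step is the simultaneous discretization: a single, $\theta$-oblivious geometric frequency grid must Riemann-approximate $\int\tfrac{1-\cos(u\theta)}{u^{1+p}}\,\mathrm{d}u$ to relative error $\epsilon$ for \emph{every} coordinate difference $\theta\in[\delta,\Phi]$ using only $\mathrm{poly}(\log\Phi,\log d,1/\epsilon)$ frequencies. This rests entirely on localizing both the integral and the discrete sum to an $O(\log(1/\epsilon))$-octave band around $1/|\theta|$ via the two-sided estimate $1-\cos(x)\le\min\{2,\tfrac12x^2\}$, together with the fact that the oscillation count inside that band is only $\mathrm{poly}(1/\epsilon)$; the thresholds $u_{\min},u_{\max},\delta$ and the attendant constants must be tuned with exponents depending on $p$, and the bound degrades as $p\to2$ (which is acceptable, the case $p=2$ being trivial). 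The rest — the coordinatewise concatenation, the treatment of tiny coordinate differences, and the verification of the safety bound — is routine bookkeeping.
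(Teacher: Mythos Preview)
The paper does not prove this statement: it is quoted verbatim from \cite{le2014algorithms} (attributed there to ideas in \cite{mendel2004euclidean}), with the footnote only explaining how parameters are instantiated. So there is no in-paper argument to compare against; I can only assess your proposal on its own terms and against the approach underlying the cited sources.

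Your plan is correct. The L\'evy--Schoenberg identity $|\theta|^p=c_p\int_0^\infty(1-\cos u\theta)\,u^{-1-p}\,\mathrm{d}u$ together with $1-\cos(u(a-b))=\tfrac12\big((\cos ua-\cos ub)^2+(\sin ua-\sin ub)^2\big)$ gives an exact (infinite-dimensional) squared-$\ell_2$ embedding of $(\R,|\cdot|^p)$, hence of $(\R^d,\norm{\cdot}_p^p)$ by direct sum, and the whole task is the finitization you describe. Your truncation thresholds are right, and so is the safety bound for sub-$\delta$ coordinate differences: with $p\ge1$ the exponent $(2-p)/p\le1$, so the $\epsilon^{-O(1)}$ blowup is at most $\epsilon^{-1}$ and the $\epsilon^3$ slack survives. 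The one place a grader would push back is the uniform-in-$\theta$ Riemann-sum step: you should make explicit (i) that on the geometric grid the spacing at frequency $u$ is $\asymp u/M$, so at the top of the active band $u\theta\asymp\epsilon^{-c}$ the spacing in the argument is $\asymp\epsilon^{-c}/M$, whence $M=\epsilon^{-c'}$ genuinely resolves the $\mathrm{poly}(1/\epsilon)$ oscillations of $\cos(u\theta)$; and (ii) that the \emph{discrete} tails outside the band --- not just the integral tails --- obey the same $\epsilon|\theta|^p$ bound, which follows because the grid sum with weights $w_\ell\asymp u_\ell^{-p}\cdot(\log 2)/M$ is itself a Riemann sum for $\int u^{-1-p}\min\{2,\tfrac12(u\theta)^2\}\,\mathrm{d}u$, a monotone integrand on each tail. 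With those two sentences added the argument is complete; the constants blow up like $\max\{1/p,1/(2-p)\}$, which is fine for fixed $p<2$.

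For context, the route in the cited sources is different in flavor: it proceeds through symmetric $p$-stable random projections. One forms $G(x-y)$ with i.i.d.\ $p$-stable entries so that each output coordinate is $\norm{x-y}_p\cdot Z$ in law, truncates $|Z|$ to a window of size $\mathrm{poly}(\Phi,d,1/\epsilon)$ so that second moments exist, and then averages enough independent rows to concentrate the empirical second moment around a fixed multiple of $\norm{x-y}_p^p$. Your construction is the deterministic, Fourier-side counterpart: the frequencies $u_\ell$ with weights $w_\ell\asymp u_\ell^{-p}$ are exactly a discretization of the L\'evy measure of the $p$-stable law. The randomized route buys a cleaner error analysis via independence; yours buys an explicit, oblivious map with no failure probability.
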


\begin{proof}[Proof of~\Cref{thm:l1p2}]
Both the upper and lower bound follow from~\Cref{thm:eucmain}. 
For the upper bound, by~\Cref{thm:nguyenl22} we have a map $f$ such that it suffices to report $\norm{f(x_i)-f(x_j)}_2^{2/p}$ for every $i,j$. Then it suffices to sketch the Euclidean metric on $f(x_1),\ldots,f(x_n)$. The lower bound follows from the standard fact that Euclidean metrics embed isometrically into $\ell_p$-metrics for every $1\leq p<2$ (see, e.g., \cite{matouvsek2013lecture}).
\end{proof}

\section{Lower bounds}\label{sec:lowerbounds}
In this section we prove tight compression lower bounds for Euclidean metric spaces and for general metric spaces, matching the upper bounds in~\Cref{thm:eucmain,thm:generalmain} respectively. This finishes the proofs of those two theorems.

We start with the lower bound for Euclidean metric sketching. 
We note that an $\Omega(\epsilon^{-2}n\log n)$ lower bound is also given in \cite{LarsenN16} and \cite{AlonK16}, which appeared concurrently to the original publication of our work \cite{indyk2017near}. 
The lower bound construction is also similar in all those works. 
However, since their lower bounds are proven for a less restrictive sketching problem (essentially, an additive approximation of the inner products, rather than a relative approximation as in our case; see Section~\ref{sec:compressionrelatedwork}), their proofs are considerably more involved than the argument we give below.

\begin{theorem}[Euclidean metrics]\label{thm:euclidean_lowerbound}
The $\ell_2$-metric sketching problem with $n$ points, distances in $[1,\Phi]$ and dimension $d=\Omega(\epsilon^{-2}\log n)$ requires $\Omega(\epsilon^{-2}n\log n+ n\log\log \Phi)$ bits. 
\end{theorem}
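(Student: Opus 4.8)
The plan is to prove the two terms in the lower bound separately, since they come from entirely different phenomena. The term $\Omega(n\log\log\Phi)$ is the ``scale'' term: it should follow from a simple counting argument over point sets realizing different minimal distances. Concretely, for each choice of an integer $k\in\{0,1,\ldots,\lceil\log\log\Phi\rceil\}$, build a point set consisting of two points at distance $2^{2^k}$ (padded with a third point, or with coordinates, so that the construction lies in $[1,\Phi]$ and in dimension $d$). A $(1\pm\epsilon)$-approximate sketch of such an instance, with $\epsilon<1/3$ say, determines $2^k$ up to an additive constant, hence determines $k$ exactly once $\Phi$ is large enough; so the sketches of these $\Theta(\log\log\Phi)$ instances must be pairwise distinct, forcing sketch length $\Omega(\log\log\Phi)$ for a single ``scale gadget.'' Taking $n/3$ disjoint copies of this gadget (placing them far apart so cross-distances are all $\Phi$, which is consistent) and using that the sketch must simultaneously encode the scale of each copy, we get $\Omega(n\log\log\Phi)$ total. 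The only mild subtlety is making the copies genuinely independent — one should argue that the restriction of a valid sketch to the coordinates touching one gadget still must distinguish that gadget's scale, which follows because the estimation algorithm must answer the within-gadget queries correctly regardless of the other gadgets.

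For the main term $\Omega(\epsilon^{-2}n\log n)$, the plan is the standard volumetric / packing argument in dimension $d=\Theta(\epsilon^{-2}\log n)$. First I would exhibit a large family $\mathcal C$ of point sets $X\subset\R^d$, each of $n$ points, all with distances in $[1,\Phi]$, such that any two distinct members $X,X'\in\mathcal C$ have some pair whose distances differ by more than a $(1+3\epsilon)$ factor; then a $(1\pm\epsilon)$-sketch must assign distinct bitstrings to distinct members of $\mathcal C$, so the sketch length is at least $\log_2|\mathcal C|$, and it remains to show $\log_2|\mathcal C|=\Omega(\epsilon^{-2}n\log n)$. To build $\mathcal C$: fix a common ``anchor'' point $x_0$ (say the origin, rescaled into range later), and for each of the remaining $n-1$ points choose its location from a $\Theta(\epsilon)$-separated packing of a fixed spherical shell in $\R^d$. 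Such a packing has size $2^{\Theta(d)}=2^{\Theta(\epsilon^{-2}\log n)}=n^{\Theta(1/\epsilon^{2})}$ — wait, that already overshoots; the correct accounting is that a $\Theta(\epsilon)$-net of the unit sphere in $\R^d$ has size $(\,C/\epsilon\,)^{\Theta(d)}$, but for the lower bound we only need a packing of size $\geq n^{c}$ for a constant $c$, which exists already when $d=\Omega(\log n)$, i.e. we use $d=\Theta(\epsilon^{-2}\log n)$ to get packing size $\exp(\Omega(\epsilon^{-2}\log n))$ and hence $\log(\text{packing})=\Omega(\epsilon^{-2}\log n)$ bits per point, times $n$ points, giving $\Omega(\epsilon^{-2}n\log n)$. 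The separation of the packing guarantees that moving any single point to a different packing point changes its distance to $x_0$ (or to a companion reference point) by $\Omega(\epsilon)$ relative to a distance that is $\Theta(1)$, which is enough to be detected by a $(1\pm\epsilon)$ estimate after a constant rescaling of $\epsilon$. I would also double-check that all pairwise distances among chosen points stay in $[1,\Phi]$: placing everything on a shell of radius $\Theta(1)$ around $x_0$ achieves this with $\Phi=O(1)$, and one inflates to general $\Phi$ by adding one faraway point.

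The main obstacle I expect is the tension in the packing argument between (a) wanting the packing large enough to yield $\Omega(\epsilon^{-2}\log n)$ bits per point and (b) wanting the perturbations to be ``locally detectable,'' i.e. ensuring that two point sets differing in the location of even one point are actually forced apart by more than the $(1\pm\epsilon)$ slack on \emph{some} reported distance, not merely that they are different as sets. The clean way to handle this is a reference-point trick: designate $O(1)$ fixed reference points (in ``general position'' on the shell) and argue that two distinct packing points must differ by $\Omega(\epsilon)$ in distance to at least one reference point — a statement about $\epsilon$-separated sets on a sphere being ``distinguished'' by $O(1)$ generic linear functionals, which holds with room to spare in dimension $d=\Omega(\log n)$ by a union bound over pairs. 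Once that is in place, the rest is bookkeeping, and combining the two terms (on disjoint blocks of the $n$ points, or by noting a single construction can carry both) yields the claimed $\Omega(\epsilon^{-2}n\log n+n\log\log\Phi)$ bound.
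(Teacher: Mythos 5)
Your proposal for the $\Omega(\epsilon^{-2}n\log n)$ term has a genuine gap: the ``reference-point'' lemma you rely on is false. If $m$ fixed reference points are supposed to distinguish every pair of packing points by an $\Omega(\epsilon)$ difference in distance to at least one of them, then the map sending a packing point $p$ to the vector $(\|p-q_1\|,\ldots,\|p-q_m\|)$ has image in a box of side $O(1)$ and must be pairwise $\Omega(\epsilon)$-separated in $\ell_\infty$; hence the packing can contain at most $O(1/\epsilon)^m$ points. With $m=O(1)$ this is $\mathrm{poly}(1/\epsilon)$, whereas you need $\exp(\Omega(\epsilon^{-2}\log n))=n^{\Omega(\epsilon^{-2})}$ points. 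Geometrically the failure is even starker: for two points at distance $\Theta(\epsilon)$ on the sphere and a generic probe at distance $\Theta(1)$, the difference of the two probe distances is typically $\Theta(\epsilon/\sqrt d)$, not $\Theta(\epsilon)$, so the union bound you invoke fails in the wrong direction. Any fix must use on the order of $n$ probes, each carrying an $\Omega(\epsilon)$-scale signal about the point being encoded. This is exactly what the paper's construction engineers combinatorially: the hidden points are $\tfrac{1}{\sqrt k}a_i$ with $a_i\in\{0,1\}^n$ having $k=\epsilon^{-2}$ ones, the probes are the $n$ standard basis vectors, and $\|\tfrac{1}{\sqrt k}a_i-e_j\|_2^2=2-2\epsilon a_i(j)$, so a $(1\pm\epsilon/8)$ sketch decodes every bit $a_i(j)$; counting the $\binom{\binom{n}{k}}{n}$ choices gives $\Omega(\epsilon^{-2}n\log n)$ bits (in the regime $\epsilon\geq n^{-(1/2-\gamma)}$), and a final Johnson--Lindenstrauss step brings the dimension down to $O(\epsilon^{-2}\log n)$. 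Your volumetric framing cannot be patched merely by choosing the $O(1)$ references ``generically.''

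The $\Omega(n\log\log\Phi)$ term also has a quantitative slip. You take per-gadget scales $2^{2^k}$ with $k\in\{0,\ldots,\lceil\log\log\Phi\rceil\}$, i.e.\ only $\Theta(\log\log\Phi)$ distinguishable instances per gadget; pairwise-distinct sketches among $M$ instances force length $\Omega(\log M)$, so this yields $\Omega(\log\log\log\Phi)$ bits per gadget and $\Omega(n\log\log\log\Phi)$ overall, short of the target (your step ``$\Theta(\log\log\Phi)$ instances, hence length $\Omega(\log\log\Phi)$'' conflates the number of instances with its logarithm). The fix is to let each point carry a scale $2^{\phi(x)}$ with $\phi(x)$ ranging over all of $\{1,\ldots,\log\Phi\}$: any two such scales differ by a factor of at least $2$, so even a constant-distortion sketch recovers $\phi(x)$, and the family has size $(\log\Phi)^{n-1}$, giving $\Omega(n\log\log\Phi)$ --- which is precisely the paper's (one-dimensional) construction; with that correction your ``independent gadgets'' packaging is unnecessary.
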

\begin{proof}
We start by proving the first term of the lower bound, $\Omega(\epsilon^{-2}n\log n)$. 
Let $0<\gamma<0.5$ be a constant and let $\epsilon\geq\Omega(1/n^{0.5-\gamma})$ be smaller than a sufficiently small constant. 
Let $k=1/\epsilon^2$, and suppose w.l.o.g.~$k$ is an integer by scaling $\epsilon$ down by an appropriate constant. 
Note that $k=O(n^{1-2\gamma})\ll n$ since $\epsilon\geq\Omega(1/n^{0.5-\gamma})$. 

Let $B$ be the set of standard basis vectors in $\R^n$. 
Let $a_1,\ldots,a_n$ be an arbitrary distinct vectors in $\B^n$, each having exactly $k$ coordinates set to $1$ (and the rest to $0$). 
Let $A=\{\tfrac{1}{\sqrt{k}}a_i:i\in[n]\}$. Note that $A\cup B$ is a set of $2n$ points in $\R^n$, each with unit norm.

Suppose we have a sketch for the Euclidean distances in $A\cup B$ up to distortion $1\pm\frac{1}{8}\epsilon$. 
This means it can report the squared Euclidean distances up to distortion $1\pm\frac{1}{2}\epsilon$ (by simply squaring its output). 
For every $i,j\in[n]$, denote by $a_i(j)$ the $j$-th coordinate of $a_i$, or equivalently, $a_i(j)=a_i^Te_j$. Then,
\[ \norm{\tfrac{1}{\sqrt k}a_i-e_j}_2^2 = \norm{\tfrac{1}{\sqrt k}a_i}_2^2 - \tfrac{2}{\sqrt k}a_i^Te_j + \norm{e_j}_2^2 = 2-2\epsilon a_i(j) . \]
Thus, if $a_i(j)=0$ then $\norm{\tfrac{1}{\sqrt k}a_i-e_j}_2^2=2$, and the sketch is guaranteed to return at least $2-\epsilon$. Conversely, if $a_i(j)=1$ then $\norm{\tfrac{1}{\sqrt k}a_i-e_j}_2^2=2-2\epsilon$, and the sketch is guaranteed to return at most $(1+\tfrac12\epsilon)(2-2\epsilon)=2-\epsilon-\epsilon^2$. 
Consequently, we can recover every $a_i(j)$ from the sketch, and thus recover $A$. 
The number of possible choices for $A$ is ${{n\choose k}\choose n}$, which by a known estimate (${m\choose \ell}\geq(\tfrac{m}\ell)^\ell$ for all integers $m,\ell$) is at least $((\tfrac{n}{k})^k/n)^n$. Therefore, the resulting bit lower bound on the sketch size is
\[
  \log\left(\left(\frac{(\frac{n}{k})^k}{n}\right)^n\right) =
  nk\log\left(\frac{n}{k}\right) - n\log n= 
  \frac{n}{\epsilon^2}\cdot\log(n\epsilon^2) \geq
  \Omega(\gamma\cdot\epsilon^{-2}n\log n),
\]
where the final bound is since $\log(n\epsilon^2)\geq\Omega(\log(n^{2\gamma}))=\Omega(\gamma\log n)$, and since we can make $\epsilon$ small enough such that $\epsilon^2<\gamma$. 
Note that the dimension of the point sets constructed above can be reduced to $O(\epsilon^{-2}\log n)$ by the Johnson-Lindenstrauss theorem~\cite{johnson1984extensions}. 
This proves the first term of the lower bound in the theorem statement.

Next we prove the second term of the lower bound, $\Omega(n\log\log \Phi)$. Suppose w.l.o.g.~that $\log \Phi$ is an integer. Consider the point set $X=\{1,\ldots,n\}$. Define a map $g:X\rightarrow\R$ by setting $g(1)=0$, and for every $x\in X\setminus\{1\}$ setting $g(x)=2^{\phi(x)}$ with an arbitrary $\phi(x)\in\{1,\ldots,\log \Phi\}$. 
The number of choices for $g$ is $(\log \Phi)^{n-1}$, and every choice of $g$ is a Euclidean embedding of $X$ with one-dimension and aspect ratio at most $\Phi$. 
We can fully recover $g$ given a Euclidean distance sketch for $X$ with distortion better than $2$, since $2^{\phi(x)}=|g(x)-g(1)|=\norm{g(x)-g(1)}_2$ for every $x\in X$, and every two possible values of $\phi(x)$ are separated by at least a factor of $2$. This yields a sketching bound of $\log\left((\log \Phi)^{n-1}\right)=\Omega(n\log\log \Phi)$ bits.
\end{proof}


Next is our lower bound for general metric sketching.

\begin{theorem}[general metrics]\label{thm:general_lowerbound}
The general metric sketching problem with $n$ points and distances in $[1,\Phi]$ requires $\Omega(n^2\log(1/\epsilon) + n\log\log \Phi)$ bits. 
\end{theorem}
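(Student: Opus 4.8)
The plan is to prove the two terms of the bound separately --- $\Omega(n^2\log(1/\epsilon))$ and $\Omega(n\log\log\Phi)$ --- and to combine them using the elementary fact that $\max(a,b)=\Omega(a+b)$. Both terms are obtained by exhibiting a large family of general metrics on $[n]$ with distances in $[1,\Phi]$ that a $(1\pm\epsilon)$-sketch must be able to tell apart, followed by a counting argument: a sketch of $B$ bits takes at most $2^B$ values, so $B$ is at least the logarithm of the size of the family. (For a randomized sketching algorithm one first conditions on a value of its randomness that is correct on at least half the family, losing only an additive constant, and then runs the same count.)

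For the term $\Omega(n\log\log\Phi)$ I would reuse, verbatim, the construction from the proof of \Cref{thm:euclidean_lowerbound}: since a one-dimensional Euclidean metric is in particular a general metric, the $(\log\Phi)^{n-1}$ maps $g$ with $g(1)=0$ and $g(x)=2^{\phi(x)}$ for $\phi(x)\in\{1,\dots,\log\Phi\}$ are general metrics on $[n]$ with distances in $[1,\Phi]$ that any sketch of distortion better than $2$ --- hence any $(1\pm\epsilon)$-sketch with $\epsilon<1/2$ --- can distinguish, yielding $\log\!\big((\log\Phi)^{n-1}\big)=\Omega(n\log\log\Phi)$ bits.

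For the term $\Omega(n^2\log(1/\epsilon))$ I would assume $\epsilon$ is below a sufficiently small absolute constant and $\Phi\ge 2$ (if $\epsilon=\Omega(1)$ then $\log(1/\epsilon)=O(1)$ and the weaker bound $\Omega(n^2)$ follows the same way from a two-value ``star plus arbitrary graph'' construction). Fix a geometric progression $V=\{v_1<\dots<v_m\}\subseteq[1,2]$ with successive ratio $1+3\epsilon$, so that $m=\Theta(1/\epsilon)$, and consider the family of all metrics on $[n]$ in which each of the $\binom{n}{2}$ pairwise distances is assigned an arbitrary element of $V$. The three things to verify are: (i) every such assignment really is a metric, because all distances lie in $[1,2]$ and hence $d(x,z)\le 2\le d(x,y)+d(y,z)$ holds automatically, so the triangle inequality comes for free; (ii) all distances lie in $[1,\Phi]$; and (iii) a $(1\pm\epsilon)$-sketch recovers the entire metric exactly, since for small $\epsilon$ the intervals $[(1-\epsilon)v_t,(1+\epsilon)v_t]$ are pairwise disjoint (as $(1+\epsilon)/(1-\epsilon)<1+3\epsilon$), so each $d(i,j)$ is the unique element of $V$ compatible with the reported estimate $\widetilde E_{ij}$. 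Consequently the sketch distinguishes all $m^{\binom{n}{2}}$ metrics in the family, which forces $B\ge\binom{n}{2}\log m=\Omega(n^2\log(1/\epsilon))$.

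The one step that needs care is (i): the family has to be simultaneously enormous (to force many bits) and to consist only of legitimate metrics, and the device of packing all permitted distance values into a window of multiplicative width at most $2$ is exactly what reconciles those requirements without any further case analysis on triples. Once that is in place, the remainder is the routine counting argument above and the elementary interval arithmetic behind the exact recoverability in (iii).
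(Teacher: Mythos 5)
Your proposal is correct and takes essentially the same approach as the paper: reuse the one-dimensional powers-of-two construction from \Cref{thm:euclidean_lowerbound} for the $\Omega(n\log\log\Phi)$ term, and for the $\Omega(n^2\log(1/\epsilon))$ term take all metrics whose $\binom{n}{2}$ distances are chosen freely from $\Theta(1/\epsilon)$ values packed into $[1,2]$ (so the triangle inequality is automatic), then count. The only cosmetic difference is that you use a geometric progression with ratio $1+3\epsilon$ so that a $(1\pm\epsilon)$-sketch separates the values directly, while the paper uses an arithmetic progression with spacing $\epsilon$ and rescales $\epsilon$ by a constant at the end.
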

\begin{proof}
Let $\epsilon>0$ be smaller than a sufficiently small constant. We suppose w.l.o.g.~that $\epsilon^{-1}$ is an integer. 
We construct a metric space $(X,\dist)$ with $X=\{1,\ldots,n\}$. 
For every $x,y\in X$ such that $x<y$, set $\dist(x,y)=1+k(x,y)\cdot\epsilon$, with an arbitrary integer $k(x,y)\in\{0,1,\ldots,\epsilon^{-1}-1\}$. 
Note that $1\leq\dist(x,y)<2$ for all $x,y$. 
This defines a metric space regardless of the choice of the $k(x,y)$'s. 
Indeed, we only need to verify the triangle inequality, and it holds trivially since all pairwise distances are lower-bounded by $1$ and upper-bounded by $2$. Hence we have defined a family of $(1/\epsilon)^{{n\choose2}}$ metrics. 

Next, observe that a sketch with distortion $(1\pm\frac{1}{4}\epsilon)$ is sufficient to fully recover a metric from this family. 
Indeed, for every $x,y\in X$, the sketch is guaranteed to report $\dist(x,y)$ up to an additive error of $\frac{1}{4}\epsilon\cdot\dist(x,y)$, which is less than $\frac12\epsilon$, while the minimum difference between every pair of possible distances is $\epsilon$ by construction. By scaling $\epsilon$ by a constant, this proves a lower bound of $\log\left((1/\epsilon)^{{n\choose2}}\right)=\Omega(n^2\log(1/\epsilon))$ on the sketch size in bits. 
The second lower bound term $\Omega(n\log\log \Phi)$ is by the same proof as~\Cref{thm:euclidean_lowerbound}.
\end{proof}

\paragraph{Acknowledgements.}
This research was supported in part by NSF awards IIS-144747 and DMS-2022448, MADALGO and Simons Foundation.

\bibliographystyle{amsalpha}
{\small\bibliography{references}}
\end{document}